\documentclass[11pt]{article}
\usepackage{graphicx}
\usepackage{amsfonts}
\usepackage{amsmath}
\usepackage{amssymb}
\usepackage{url}
\usepackage{fancyhdr}
\usepackage{indentfirst}
\usepackage{enumerate}
\usepackage{dsfont,footmisc}
\usepackage{comment}

\usepackage{setspace}
\usepackage[misc]{ifsym}

\usepackage{amsthm}
\usepackage{color}
\usepackage{natbib}
\usepackage[british]{babel}
\usepackage[colorlinks=true,citecolor=blue]{hyperref}
\usepackage[right,pagewise,mathlines]{lineno}

\usepackage{subcaption}
\usepackage{threeparttable}
\usepackage{multirow}
\usepackage{adjustbox}
\usepackage{booktabs}
\usepackage{tikz}

\usetikzlibrary{arrows.meta, positioning, calc} 
\usetikzlibrary{decorations.pathreplacing}
\usepackage{tikz-qtree}

\usetikzlibrary{arrows.meta,positioning}

\def\d{\mathrm{d}}

\newcommand{\calP}{\mathcal{P}}
\newcommand{\calF}{\mathcal{F}}
\newcommand{\calQ}{\mathcal{Q}}
\newcommand{\calX}{\mathcal{X}}

\newcommand{\R}{\mathbb{R}}

\renewcommand{\geq}{\geqslant}
\renewcommand{\leq}{\leqslant}
\renewcommand{\epsilon}{\varepsilon}

\theoremstyle{plain}
\newtheorem{theorem}{Theorem}
\newtheorem{lemma}{Lemma}
\newtheorem{proposition}{Proposition}

\theoremstyle{definition}
\newtheorem{definition}{Definition}
\newtheorem{example}{Example}

\theoremstyle{remark}
\newtheorem{remark}{Remark}

\theoremstyle{definition}

\renewcommand{\cite}{\citet}
\renewcommand{\cdots}{\dots}

\usepackage{xcolor}

\usepackage{geometry}
\begin{document} 

\title{Weighted Generalized Risk Measure and  
Risk Quadrangle: Characterization, Optimization and Application}

\author{
Yang Liu\thanks{\scriptsize School of Science and Engineering, The Chinese University of Hong Kong (Shenzhen), China. Email: \texttt{yangliu16@cuhk.edu.cn}} 
\and
Yunran Wei\thanks{\scriptsize School of Mathematics and Statistics, Carleton University, Canada. Email: \texttt{yunran.wei@carleton.ca}} 
\and
Xintao Ye\thanks{\scriptsize Corresponding Author. School of Mathematics, Shandong University, China. Email: \texttt{xintaoye2@gmail.com}} 
}

\date{}

\maketitle

\begin{abstract}

Various financial market scenarios may cause heterogeneous risk assessments among analysts, which motivates the usage of the Generalized Risk Measure in \cite{Fadina2024}. Effectively synthesizing these diverse assessments avoids over-relying on a single, potentially flawed or conservative forecast and promotes more robust decision-making. Motivated by this, we establish analytical characterizations of the Weighted Generalized Risk Measure (WGRM) under both discrete and continuous settings. Building upon the WGRM, we incorporate the Fundamental Risk Quadrangle (FRQ) in \cite{Rockafellar2013} into the Weighted Risk Quadrangle (WRQ) and show that the intrinsic relationships among risk, deviation, regret, error, and statistics in FRQ are preserved under weighted aggregation across scenarios. Moreover, we demonstrate that certain complex risk optimization problems under the WGRM can be reformulated as tractable linear programs through the WRQ structure, thus ensuring computational feasibility. Finally, the WGRM and WRQ framework is applied to empirical analyses using constituents of the NASDAQ 100 and S\&P 500 indices across recession and expansion regimes, which validates that WGRM-based portfolios exhibit superior risk-adjusted performance and enhanced downside resilience and  effectively mitigate losses arising from erroneous single-scenario judgments.

\noindent \textbf{Keywords}: Financial Risk Management, Weighted Generalized Risk Measure, Weighted Risk Quadrangle, Portfolio Optimization, Empirical Validation. 

\noindent \textbf{Mathematics Subject Classification (2020)}: 91G70, 91G10

\noindent \textbf{JEL Classification}: D81, G32, G11
\end{abstract}

\section{Introduction}
\label{sec:intro}
\subsection{Background and Motivation}
The global financial crisis of 2008 revealed a fundamental weakness in modern risk management: risk assessments that appeared conservative in normal times proved fragile when market regimes shifted. Many highly rated financial institutions collapsed not solely due to excessive risk-taking, but because of an over-reliance on a single or a narrow set of scenarios and the unchecked confidence in singular probabilistic models. A key lesson from this episode is that financial risk is inherently scenario-dependent. In adverse markets, different analysts often arrive at different risk assessments—even when applying the same formal risk measure—since they rely on distinct data sources, data process techniques and stress scenarios. In such settings, the central question for a department manager is no longer whether a risk measure is conservative enough, but how to avoid the catastrophic losses that can result from trusting any single, potentially flawed analyst's forecast in isolation \citep{Cont2010, Embrechts2015}.

As a response, practitioners increasingly rely on consulting and combining heterogeneous scenario analyses. Consequently, the traditional risk measures—which are typically characterized as functionals operating either on random variable spaces or, given law-invariance, on the spaces comprising their distribution functions (see \citealt{Artzner1999})—are no longer adequate. It is of significance and interest to switch from single-scenario traditional risk measures to multi-scenario settings; see \cite{Kou&Peng2016,Wang&Ziegel2021}. A recent framework in \cite{Fadina2024}, the generalized risk measure (GRM), accepts two inputs: the loss variable $X$ along with a collection $\calQ$ of admissible probability measures, which together provide a richer characterization of the underlying stochastic environment. 

Under that framework, for an arbitrary measurable space $(\Omega,\calF)$, we denote by $\calP$ the collection of all atomless probability measures on $\calF$ and, for simplicity, we assume it is a compact Polish space. Correspondingly, let $\calX$ denote the set of all random variables defined on this space. The power set of $\calP$ is denoted $2^{\calP}$. The formal definition of GRM in  \cite{Fadina2024} is as follows.

\begin{definition}
\label{def:GRM}
    A generalized risk measure is a mapping $\Psi$: $\calX \times 2^{\calP} \to (-\infty, \infty)$.
\end{definition}

The worst-case, coherent, and robust GRMs are characterized via different sets of axioms in \cite{Fadina2024}, where the worst-case one might be too conservative. Actually, the weighted aggregation form is worthy of exploration as it fully incorporates different analysts' expertise.
Motivated by this, we introduce the Weighted Generalized Risk Measure (WGRM), a principled framework designed to aggregate heterogeneous risk perspectives into a single, analytically tractable functional under both discrete and continuous settings to align with real-world risk management needs. Specifically, For any $\mathcal{Q} \subseteq \mathcal{P}$, we aim to find a probability measure $\mu_{\mathcal{Q}}$ that assigns weights to the elements of $\mathcal{Q}$, satisfying $\mu_{\mathcal{Q}}(\emptyset)=0$ and $\mu_{\mathcal{Q}}(\mathcal{Q})=1$, along with $0\leq \mu_{\mathcal{Q}}(P) \leq 1$ for all $P\in \mathcal{Q}$. We formally define WGRM as follows.

\begin{definition}
\label{def: WeightedGRM}
    The mapping $\Psi$ admits a weighted generalized risk measure (WGRM) representation if for every $\calQ \subseteq \calP$,$\,\exists\,\, \mu_{\calQ}$ such that
    \begin{equation}
    \text{(continuous)} \quad \quad    \Psi(X|\calQ)=\int_{\calQ} \Psi(X|P) \,\d\mu_{\calQ}(P),
        \quad
        X \in \calX,\ \calQ \in 2^{\calP}.
        \label{eq:weighted_grm}
    \end{equation}
    If $\calQ$ is a discrete set, then $\mu_{\calQ}$ reduces to discrete measure and $\Psi$ is equivalent as
    \begin{equation}
    \text{(discrete)} \quad \quad     \Psi(X|\calQ)=\sum_{P\in\calQ} \Psi(X|P) \cdot \mu_{\calQ}(P),
        \quad
        X \in \calX,\ \calQ \in 2^{\calP}.
    \end{equation}
\end{definition}

Subsequently, building on the WGRM model, we further extend the Fundamental Risk Quadrangle (FRQ), which is a framework combining optimization and estimation in risk management proposed by \cite{Rockafellar2013}. We use Figure \ref{fig:quadrangle} to briefly illustrate the FRQ with details postponed to Section \ref{sec:E_quadrangle}. By leveraging the weight structure derived from WGRM, we perform a weighted aggregation of the single-probability-measure risk quadrangle to develop the Weighted Risk Quadrangle (WRQ).

\begin{figure}[ht]
\centering
    \begin{tikzpicture}[
        node distance=1.2cm and 2cm,
        every node/.style={align=center},
        >=Stealth] 
        \node (risk) {Risk $\mathcal{R}$};
        \node (deviation) [right=of risk] {Deviation $\mathcal{D}$};
        \node (regret) [below=of risk] {Regret $\mathcal{V}$};
        \node (error) [below=of deviation] {Error $\mathcal{E}$};
        \node at ($(risk)!0.5!(error)$) {Statistic $\mathcal{S}$};    
        \draw[<->] (risk) -- (deviation);
        \draw[<->] (regret) -- (error);
        \draw[->] ($(risk.south) + (-0.1,0)$) -- ($(regret.north) + (-0.1,0)$);
        \draw[->] ($(regret.north) + (0.1,0)$) -- ($(risk.south) + (0.1,0)$);

        \draw[->] ($(deviation.south) + (-0.1,0)$) -- ($(error.north) + (-0.1,0)$);
        \draw[->] ($(error.north) + (0.1,0)$) -- ($(deviation.south) + (0.1,0)$);
        
        \node (opt) [left=0.6cm of risk, yshift=-0.75cm] {Optimization};
        
        \node (est) [right=0.6cm of deviation, yshift=-0.75cm] {Estimation};
    
    \end{tikzpicture}
    \caption{Fundamental Risk Quadrangle in \cite{Rockafellar2013}.} 
    \label{fig:quadrangle} 
\end{figure}

\begin{figure}[!ht]
    \centering
    \includegraphics[width=0.7\linewidth]{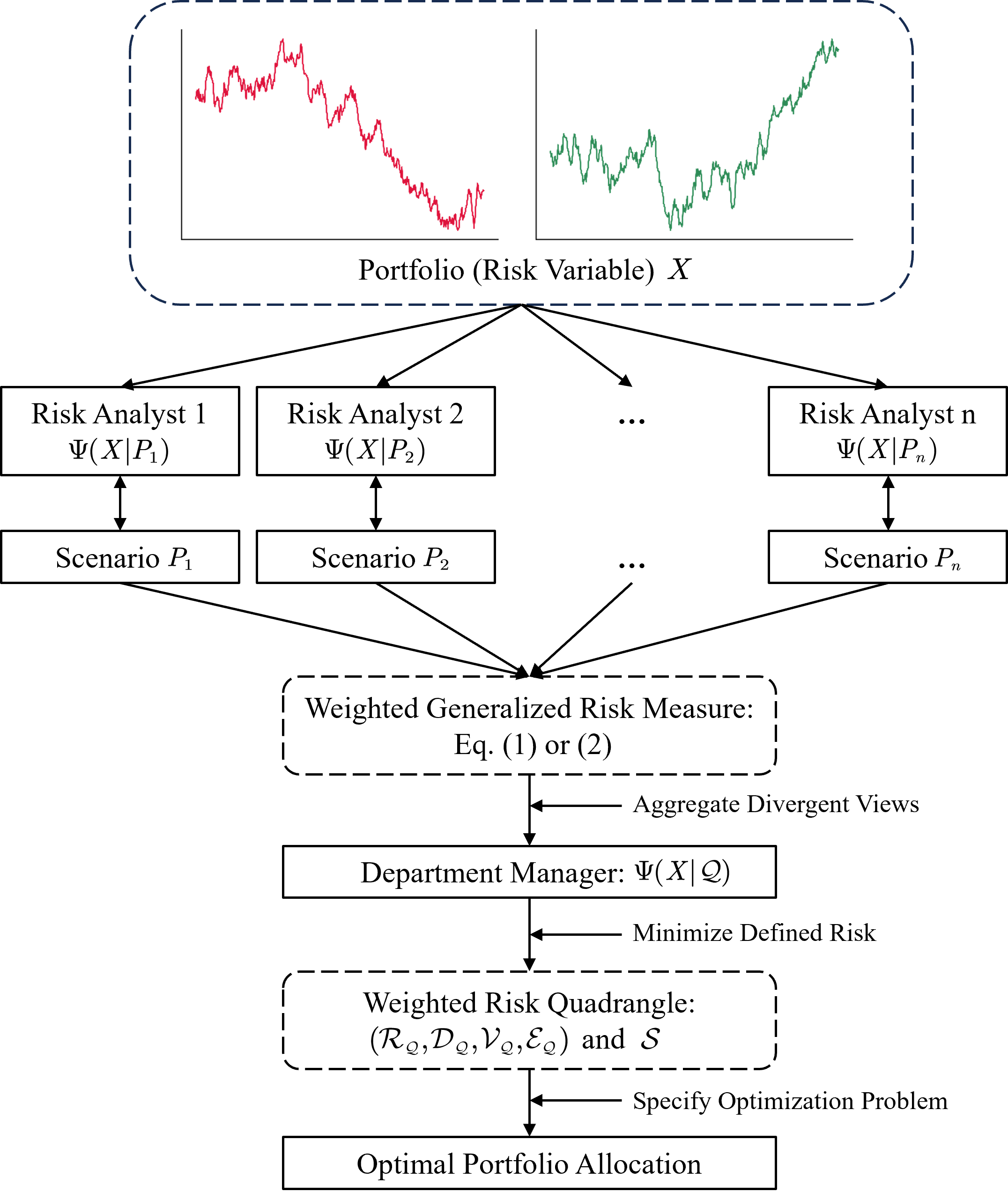}
    \caption{A Schematic Procedure for Department Managers to Aggregate Heterogeneous Risk Assessments.}
    \label{fig:flowchart}
\end{figure}

Before elaborating on the theoretical framework, Figure \ref{fig:flowchart} presents a specific application. Consider a department manager allocating capital across assets based on the assessments from multiple risk analysts. Even when applying a common metric like $\text{ES}_{0.95}$, analysts typically report divergent estimates due to distinct modeling choices and stress scenarios. These differences reflect distinct subjective probability measures $P$ each analyst adopts, collectively forming a family of probability measures $\mathcal{Q}$. Crucially, relying on a single analyst's estimate may expose the portfolio to idiosyncratic model bias, but also not fully make use of others' effort and expertise. 
Thus, the WGRM framework serves as a structural defense to perform a weighted aggregation of the analysts' heterogeneous assessments. The weights assigned to each analyst can be based on their historical prediction accuracy, professional competence, or other relevant criteria. Such process corresponds to assigning a weight $\mu_{\mathcal{Q}}(P)$ to each probability measure $P \in \mathcal{Q}$.
The manager then aims to minimize the aggregated risk while ensuring a predefined level of expected return. Thus, the obtained risk measure by the WGRM can be used as the objective function, which is then integrated into the WRQ. By defining relevant constraints (e.g., expected return targets, position limits) and solving the optimization problem, the manager ultimately obtains a robust asset allocation strategy. By design, this WRQ-based portfolio is inherently resilient, dampening the adverse financial consequences that would arise from relying on an isolated, incorrect market outlook.

\subsection{Connection to Other Frameworks and Contribution}

Our study is directly inspired by and develops the GRM proposed in \cite{Fadina2024}, thus exhibiting substantial differences in both mathematical structure and methodology from traditional risk measures (e.g., \citealt{Frittelli2002, Follmer2002, Föllmer2004, Wakker2010, Castagnoli2022}).
Nevertheless, \cite{Fadina2024} primarily discuss the worst-case GRM, which can be regarded as a special case of the WGRM in this paper under certain regularity conditions. This generality gives rise to disparities in mathematical treatments. Conceptually, our framework requires to accommodate information within each probability measure, whereas the worst-case one relies more on extreme scenarios; 
see also \cite{GS1989,Zhu2009, Zymler2013,Adrian2016,Liu&Wang2021,chen2022ordering,BLLW25}.

While another analogous framework also accounting for multiple scenarios is the scenario-based risk measure in \cite{Wang&Ziegel2021}, they are fundamentally distinct in nature. Specifically, their framework, under the assumption of mutually singular probability measures, emphasizes model uncertainty other than heterogeneous assessments, whereas our framework focuses on the weight structure, particularly on how convex weights are derived. This critical difference leads to distinct methodological treatments, resulting in no overlap in research scopes or analytical approaches.
Other similar weighted aggregation frameworks appear in \cite{KMM2005,Righi2018, Jokhadze2020};  however, these works lack a systematic theoretical characterization of the weight structure itself.

For recent advances in risk measure theory, we refer to \cite{Bellini2017, Wang&Zitikis2021, Fissler2023, Bernard2024, LaTorre2024, Gomez2024, Battiston2025, Cai2025}. Our goal is to provide a comprehensive discussion of the weight structure inherent to the GRM framework; as such, this work may enable further synergistic integration with the aforementioned strands of research, laying groundwork for more comprehensive and context-aware risk management paradigms.

The contributions of this paper include the following aspects. Our primary contribution is that, building upon the foundational work of \cite{Kouetal2013,Kou&Peng2016}, Section \ref{sec:WeightedGRM} establishes analytical characterization of the WGRM framework under both discrete and continuous scenario settings. This dual formulation accommodates the diverse analytical demands of real-world risk management applications. This work represents a crucial deepening of the GRM theory, complementing the worst-case structure that the GRM initially focuses on. Furthermore, we analyze the specific conditions under which the aggregation weights can be uniquely determined. By focusing on the weighting structure itself, the WGRM achieves broad generality, seamlessly adapting to the heterogeneous individual risk measures utilized in the aggregation process.

It must be emphasized that the weighted aggregation of separate risk assessments is distinct from evaluating a single risk measure under a pre-aggregated probability distribution. The latter excessively smooths inherent risk volatility, an information loss that adjusting quantile parameters or other ad hoc modifications cannot adequately offset.

Our second key contribution, detailed in Section \ref{sec:E_quadrangle}, is to introduce the WRQ by extending the FRQ via the proposed WGRM weighting structure. This extension provides an integrated framework that theoretically unifies risk management, optimization, and statistical estimation in multi-scenario settings. While existing literature typically enhances FRQ by aggregating single-measure components with varying parameters, as noted earlier, heterogeneity across distinct scenarios and probability measures remains a more critical dimension to address for practical risk management.
WRQ inherently integrates risk management and portfolio optimization in a cohesive, theoretically consistent manner. Accordingly, in Section \ref{sec:Optimization_WGRM}, we elaborate on the embedded optimization problems within WRQ, as well as how to transform complex risk-minimization problems using WRQ’s components. 

Finally, an empirical validation is provided in Section \ref{sec:empirical}, where we use NASDAQ 100 and S\&P 500 data to compare portfolio performance with and without WGRM (and WRQ) across expansion and recession regimes, and conduct sensitivity analyses. Our results validate that WGRM-based portfolios exhibit superior risk-adjusted performance and pronounced downside resilience, effectively mitigating adverse consequences caused by isolated, erroneous market judgments.
\section{Weighted Generalized Risk Measures}
\label{sec:WeightedGRM}

\subsection{Properties and Technical Discussions}
In this section, we examine the conditions under which the WGRM representation exists. To begin with, we recall several properties of a GRM $\Psi$: $\calX \times 2^{\calP} \to (-\infty, \infty)$ from \cite{Fadina2024} to facilitate a preliminary comparison between the worst-case GRM and WGRM.

The measure is called standard if $\Psi(s|\calQ)=s, \,\forall s \in \R \,\text{and}\, \calQ \subseteq \calP$. When it is a single scenario case, the measure takes the form of $\Psi$: $\calX \times \calP \to (-\infty, \infty)$. Clearly, the explicit inclusion of $\calQ$ as an input generalizes traditional risk measures. When $\calQ$ is restricted to a singleton set (i.e., $\calQ=\{P\}$ for some $P \in \mathcal{P}$), this generalized risk measure reduces to its traditional counterpart. The worst-case GRM satisfies the following properties.

(A1) Scenario monotonicity: if $\Psi(X|P) \leq \Psi(Y|P)$ holds for every $P\in\calQ$, then $\Psi(X|\calQ) \leq \Psi(Y|\calQ)$. 

(A2) Scenario upper bound: $\Psi(X|\calQ) \leq \sup_{P\in \calQ} \Psi(X|P)$ holds for every $X\in \calX$ and $\calQ \subseteq \calP$. 

(A3) Uncertainty aversion: $\Psi(X|\calQ) \leq \Psi(X|\mathcal{R})$ for every $X\in \calX$ and $\calQ \subseteq\mathcal{R} \subseteq \calP$. 

For a WGRM $\Psi$, properties (A1) and (A2) are intuitively justified. Property (A1) holds due to the monotonicity of the integral: $\Psi(X|\calQ)=\int_{\calQ} \Psi(X|P) \,\d\mu_{\calQ}(P) \leq \int_{\calQ} \Psi(Y|P) \,\d\mu_{\calQ}(P) = \Psi(Y|\calQ)$. 
And property (A2) is derived from $\Psi(X|\calQ) \leq \displaystyle\sup_{P^*\in \calQ}\Psi(X|P^*)\int_{\calQ}  \,\d\mu_{\calQ}(P)=\sup_{P^*\in \calQ}\Psi(X|P^*)\cdot \mu_{\calQ}(\calQ)=\sup_{P^*\in \calQ}\Psi(X|P^*)$.

However, property (A3) is rarely satisfied by most of WGRMs but aligns more closely with the worst-case one. The defining characteristic of (A3) is that risk evaluation is monotonically non-decreasing with respect to expanding the scenario set.  This monotonicity may induce excessive conservatism in risk assessment. Intuitively, if posterior scenarios incorporate more favorable states, the aggregate risk measure should exhibit a non-increasing trend rather than a non-decreasing one. This is why the characterization of the WGRM is worth being investigated.

\begin{remark}
	While the codomain of $\Psi$ is formally defined as $(-\infty,  \infty)$, alternative specifications such as $[-\infty, \infty]$ or $(-\infty, \infty]$ are both theoretically admissible. However, from a practical standpoint, risk measures with finite values are generally more interpretable and operationally meaningful in financial applications.
\end{remark}

\begin{remark}
\label{remark:Polish}
    We require that $\mathcal{P}$ is a compact Polish space, which is a simplified technical assumption made for clarity of exposition. The collection $\mathcal{P}$ of all atomless probability measures on $(\Omega,\mathcal{F})$ can be endowed with topologies (e.g., the topology of weak convergence, provided that $(\Omega,\mathcal{F})$ supports a metric making it a Polish space) under which it forms a Polish space. However, $\mathcal{P}$ itself is generally not compact in these natural topologies. For instance, consider $\Omega=\mathbb{R}$ with the Borel $\sigma$-algebra. The sequence of atomless measures $\{P_{n}\}$, where $P_{n}$ is the uniform distribution on the interval $[n,n+1]$, typically lacks a convergent subsequence within $\mathcal{P}$ under weak convergence; the probability mass escapes to infinity. Nevertheless, we can always restrict attention to a suitable compact subset $\widetilde{\mathcal{P}} \subseteq \mathcal{P}$ containing the relevant scenario sets $\mathcal{Q}$ under consideration. Examples include families of atomless measures defined by bounded likelihood ratios with respect to a reference atomless measure (e.g., $\{P\in \mathcal{P}:|\mathrm{d}P / \mathrm{d}P_{0}| \leq M\}$ for some $M < \infty$ and $P_{0}\in \mathcal{P}$) or parametric families of atomless measures (e.g., $\{P_{\theta} \in \Theta\}$ with compact parameter spaces $\Theta$).
\end{remark}

Critically, since $\calQ$ is an arbitrary subset of $\calP$ which may be either discrete or continuous, these two cases exhibit different mathematical structures and must be discussed separately. We first examine the case where $\calQ$ is discrete.

\subsection{WGRM under a Discrete Setting}

In the discrete case where $\mathcal{Q}$ is finite, e.g., of size $n$, our focus is on the Euclidean space $\mathbb{R}^{n}$.  For notational simplicity, we present $\mu_{\mathcal{Q}}$ as a vector, i.e., $\mu_{\mathcal{Q}}=(\mu_1,\dots,\mu_n)^T \in \mathbb{R}^n$. 
Let $\mathcal{C}:=\{\mathbf{x}=(x_1, x_2, \dots,x_n)\in \mathbb{R}^{n} \mid x_{1} \leq x_{2}\leq \dots \leq x_{n} \}$, which is a translation-invariant and closed convex cone. And denote by $\text{int}\mathcal{C}:=\{\mathbf{x}\in \mathbb{R}^{n} \mid x_{1} < x_{2} < \dots < x_{n} \}$ the interior of $\mathcal{C}$.
Let $\mathcal{D}:=\{ \mathbf{x} \in \mathbb{R}^{n} \mid \sum_{i=1}^{n} x_{i}=1, x_{i}\geq 0, i = 1, 2, \dots, n \}$ be the whole weight set in $\mathbb{R}^{n}$. Let $\left \langle \cdot,\cdot \right \rangle$ denote the inner product on $\mathbb{R}^{n}$. For a function $f:\mathbb{R}^{n}\to \mathbb{R}$, we define its domain as $\text{dom}\, f:=\{\mathbf{x}\in \mathbb{R}^{n} \mid f(\mathbf{x}) < \infty \}$ and $f$ is said to be proper if $\text{dom}f \neq \emptyset$. 

When $\calQ$ is finite, we can index its elements as $\calQ = \{P_{i}\}_{i=1}^{n}$.
Generally, the GRM $\Psi(X|\mathcal{Q})$ can be constructed through an aggregation function $f: \mathbb{R}^{n} \to \mathbb{R}$ that combines all individual risk measures under each $P_{i} \in \mathcal{Q}$, i.e.,
\begin{align*}
    \Psi(X|\mathcal{Q}) = f(\Psi(X|P_{1}), \Psi(X|P_{2}),\dots,\Psi(X|P_{n})).
\end{align*}

For simplicity, we define $\Phi_{\calQ, X} = (\Psi(X|P_{1}), \Psi(X|P_{2}),\dots,\Psi(X|P_{n}))^{T} \in \mathbb{R}^{n}$. Thus, the above can be expressed as $\Psi(X|\mathcal{Q}) = f(\Phi_{\calQ, X})$.
To be a satisfactory aggregation function, $f$ is expected to satisfy several properties \citep{Kouetal2013,Kou&Peng2016}:

(B1) Positive homogeneity and translation invariance: 
$f(a \Phi_{\calQ, X}+b \mathbf{1}) = a f(\Phi_{\calQ, X})+ b, \forall\, \Phi_{\calQ, X} \in \mathbb{R}^{n}, a \geq 0, b \in \mathbb{R}$. Here, $\mathbf{1}=(1,1,\dots,1)^{T} \in \mathbb{R}^{n}$. 

(B2) Monotonicity: 
$f(\Phi_{\calQ, X}) \leq f(\Phi_{\calQ, Y})$, if $\Phi_{\calQ, X} \leq \Phi_{\calQ, Y}$, where $\Phi_{\calQ, X} \leq \Phi_{\calQ, Y}$ is understood component-wise, i.e., $\Psi(X|P_{i}) \leq \Psi(Y|P_{i})$, for all $i=1,2,\dots,n$.

(B3) Comonotonic sub-additivity: 
$f(\Phi_{\calQ, X}+\Phi_{\calQ, Y}) \leq f(\Phi_{\calQ, X}) + f(\Phi_{\calQ, Y})$ whenever $\Phi_{\calQ, X}$ and $\Phi_{\calQ, Y}$ are comonotonic, i.e., $(\Psi(X|P_{i})-\Psi(X|P_{j}))(\Psi(Y|P_{i})-\Psi(Y|P_{j})) \geq 0$ for any $i,j \in \{1,2,\cdots,n\}$.

(B4) Permutation invariance: $f(\Phi_{\calQ, X})=f(\Phi_{\calQ, X}^{\pi})$ for every $\pi \in S_{n}$, where $S_{n}$ is the set of all permutations of $\{1, 2, \dots,n\}$ and $\Phi_{\calQ, X}^{\pi}=(\Psi(X|P_{\pi (1)}), \Psi(X|P_{\pi (2)}),\dots,\Psi(X|P_{\pi (n)}))^{T}$ is therefore the corresponding permuted vector.

(B1) indicates that $f$ is robust to affine transformation. 
(B2) is equivalent to (A1) scenario monotonicity. A direct observation is that, under (B1) and (B2), $f$ is Lipschitz continuous with respect to the maximum-norm $\|\cdot\|_{\infty}$, ensuring stability under small perturbations of risk inputs.
(B3) is also a common assumption which focuses on the co-movement consistency across two risk inputs.
(B4), though less frequently stated in literature, is intuitive for risk aggregation, as the overall risk measure should be invariant to the ordering of individual evaluations.

\begin{remark}
    One might argue that (B1) should be $f(a \Phi_{\calQ, X}+b \mathbf{1}) = a f(\Phi_{\calQ, X})+ s\cdot b$, where $s$ is a fixed constant related to each aggregation function $f$. However, since here $f$ is designed for weighted-averaging individual risk measures, it is reasonable to impose the invariance of aggregating constant values, i.e., $s=1$.
\end{remark}

The assumption of (B3) comonotonic sub-additivity holds practical relevance in real-world data contexts. For instance, if each component $\Psi(X|P_{i})$ of $\Phi_{\calQ, X}$ represents an individual risk measure under a specific market scenario, the risk measures of correlated assets across a family of coherent scenarios often exhibit comonotonicity. Nevertheless, the comonotonicity constraint is not sufficiently elegant for a general risk structure. If we instead require the aggregation function $f$ to satisfy full sub-additivity (unrestricted to comonotonic vectors), it will impose more stringent conditions on the resulting weights. Formally, we write

(B3') Full sub-additivity: 
$f(\Phi_{\calQ, X}+\Phi_{\calQ, Y}) \leq f(\Phi_{\calQ, X}) + f(\Phi_{\calQ, Y})$ for any $\Phi_{\calQ, X},\Phi_{\calQ, Y} \in \mathbb{R}^{n}$.


We first propose the following characterization result on worst-case WGRM, where we technically get inspired from \cite{Ahmed2008}.

\begin{theorem}
\label{theo:WeightedGRM_condition_sup}
    (1) The aggregation function  $f: \mathbb{R}^{n} \to \mathbb{R}$ satisfies (B1)-(B4) if and only if there exists a closed convex set of weights $\mathcal{W}_{1} \subseteq \mathcal{D} \subseteq \mathbb{R}^{n}$, such that
    \begin{align}
    \label{eq:weighted_grm_sup}
        f(\Phi_{\calQ, X})
        =
        \sup\limits_{\mu_{\mathcal{Q}} \in \mathcal{W}_{1}} \left\langle \mu_{\mathcal{Q}},\Phi_{\calQ, X}^{q}  \right\rangle,
        \quad
        \forall\, \Phi_{\calQ, X}\in \mathbb{R}^{n},
    \end{align}
    where $\Phi_{\calQ, X}^{q} := \big( \Psi(X|P_{(1)}), \Psi(X|P_{(2)}),\dots,\Psi(X|P_{(n)}) \big)^{T}$ is the vector of order statistics obtained by sorting the individual risk assessments $\{\Psi(X|P_i)\}_{i=1}^n$ into non-decreasing order.
    

    (2) The aggregation function $f: \mathbb{R}^{n} \to \mathbb{R}$ satisfies (B1), (B2), (B3') and (B4)  if and only if there exists a closed convex set of weights $\mathcal{W}_{2} \subseteq \mathcal{D} \cap\mathcal{C} \subseteq \mathbb{R}^{n}$, such that
    \begin{align}
    \label{eq:weighted_grm_sup_subadd}
        f(\Phi_{\calQ, X})
        =
        \sup\limits_{\mu_{\mathcal{Q}} \in \mathcal{W}_{2}} \left\langle \mu_{\mathcal{Q}},\Phi_{\calQ, X}^{q}  \right\rangle,
        \quad
        \forall\, \Phi_{\calQ, X}\in \mathbb{R}^{n}.
    \end{align}
    
\end{theorem}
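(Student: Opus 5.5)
Part (1) reduces to a statement on the cone $\mathcal{C}$. By (B4), $f(\Phi_{\calQ,X})=f(\Phi^q_{\calQ,X})$, and $\Phi^q_{\calQ,X}\in\mathcal{C}$. So it suffices to represent $g:=f|_{\mathcal{C}}$ as $g(\mathbf{x})=\sup_{\mu\in\mathcal{W}_1}\langle\mu,\mathbf{x}\rangle$ for $\mathbf{x}\in\mathcal{C}$.

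Any two vectors in $\mathcal{C}$ are comonotonic, since both are non-decreasing in the index. Hence (B3) makes $g$ subadditive on $\mathcal{C}$. Together with positive homogeneity from (B1), $g$ is sublinear on the convex cone $\mathcal{C}$. By the Lipschitz remark after (B1)–(B2), $g$ is also finite and continuous.

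\textbf{Sufficiency in (1).} If $f$ has the form \eqref{eq:weighted_grm_sup} with $\mathcal{W}_1\subseteq\mathcal{D}$, the properties are checked directly.
\begin{itemize}
\item (B4) is immediate, since only sorted values enter.
\item (B1) holds because sorting commutes with $a\ge0$ and with $+b\mathbf{1}$, and $\langle\mu,\mathbf{1}\rangle=1$.
\item (B2) holds because componentwise $\mathbf{x}\le\mathbf{y}$ implies $\mathbf{x}^q\le\mathbf{y}^q$, and $\mu\ge0$.
\item (B3) holds because comonotonic $\mathbf{x},\mathbf{y}$ are sorted by a common permutation, so $(\mathbf{x}+\mathbf{y})^q=\mathbf{x}^q+\mathbf{y}^q$, and a supremum of linear maps is subadditive.
\end{itemize}

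\textbf{Necessity in (1).} I would use a Hahn–Banach / sublinear-support argument on the cone, in the spirit of \cite{Ahmed2008}. Extend $g$ to a sublinear function on all of $\mathbb{R}^n$ via
\[
\tilde g(\mathbf{x})=\inf\{g(\mathbf{c}) : \mathbf{c}\in\mathcal{C},\ \mathbf{c}-\mathbf{x}\in\mathcal{C}^{*}\},
\]
or alternatively take the closed convex set
\[
\mathcal{W}_1=\{\mu\in\mathbb{R}^n : \langle\mu,\mathbf{c}\rangle\le g(\mathbf{c})\ \forall \mathbf{c}\in\mathcal{C}\}.
\]
The key claim is that for each $\mathbf{c}_0\in\mathcal{C}$ there is $\mu\in\mathcal{W}_1$ with $\langle\mu,\mathbf{c}_0\rangle=g(\mathbf{c}_0)$.

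\begin{itemize}
\item \emph{Interior points.} For $\mathbf{c}_0\in\text{int}\,\mathcal{C}$, the epigraph of $g$ over $\mathcal{C}$ is a convex cone, and $(\mathbf{c}_0,g(\mathbf{c}_0))$ lies on its boundary. Separation gives a supporting hyperplane; it is non-vertical because $\mathbf{c}_0$ is interior and $g$ is finite, so it yields such a $\mu$.
\item \emph{Boundary points.} Points of $\partial\mathcal{C}$, with ties among coordinates, are handled by approximating from $\text{int}\,\mathcal{C}$. Continuity of $g$ and compactness of the resulting supporting $\mu$'s do the work; compactness follows once they lie in $\mathcal{D}$.
\end{itemize}

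It remains to show $\mathcal{W}_1$ (or at least the supporting $\mu$'s) can be taken in $\mathcal{D}$.
\begin{itemize}
\item \emph{Sum to one.} Since $\pm\mathbf{1}\in\mathcal{C}$ and $g(t\mathbf{1})=t$ by (B1), we get $\langle\mu,\mathbf{1}\rangle\le1$ and $-\langle\mu,\mathbf{1}\rangle\le-1$, hence $\sum\mu_i=1$.
\item \emph{Nonnegativity.} This is the main obstacle, because $e_i\notin\mathcal{C}$. Test instead with the vectors $\mathbf{v}_k=(0,\dots,0,1,\dots,1)\in\mathcal{C}$, ones from position $k$ onward. Monotonicity (B2) applied inside $\mathcal{C}$ only controls tail sums $\sum_{i\ge k}\mu_i$, not the individual $\mu_i$.
\item \emph{Fix.} Replace $\mathcal{W}_1$ by its projection onto $\mathcal{D}$: intersect with $\mathcal{D}$, then verify that the supremum is still attained. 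To do so, use that a supporting $\mu$ at $\mathbf{c}_0$ can be chosen from the subdifferential of the globally defined, monotone, translation-invariant extension $\hat f(\mathbf{x}):=g(\mathbf{x}^q)=f(\mathbf{x})$. On $\text{int}\,\mathcal{C}$ the map $\mathbf{x}\mapsto\mathbf{x}^q$ is locally the identity, so $\hat f$ is locally convex near interior points. For a Lipschitz function that is convex near a point, monotone and translation-invariant, the local subgradients are nonnegative and sum to one. These local subgradients still minorize $g$ on all of $\mathcal{C}$ by global sublinearity of $g$ there, since convexity on the convex set $\mathcal{C}$ lets local supports extend.
\end{itemize}
Taking the closed convex hull of these supports gives $\mathcal{W}_1\subseteq\mathcal{D}$ and \eqref{eq:weighted_grm_sup}.

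Part (2) follows the same scheme, with two changes.
\begin{itemize}
\item \emph{Necessity.} Under (B3') the function $f$ is sublinear on all of $\mathbb{R}^n$, so the standard result gives $f(\mathbf{x})=\max_{\mu\in\partial f(0)}\langle\mu,\mathbf{x}\rangle$, with $\partial f(0)\subseteq\mathcal{D}$ by (B1)–(B2). By (B4), $\partial f(0)$ is permutation invariant, so $f(\mathbf{x})=\sup_{\mu\in\partial f(0)}\langle\mu^{q},\mathbf{x}^q\rangle$ by the rearrangement inequality. Set $\mathcal{W}_2$ to be the closed convex hull of the sorted vectors $\{\mu^q\}$, which lies in $\mathcal{D}\cap\mathcal{C}$.
\item \emph{Sufficiency.} For $\mu\in\mathcal{C}$, the rearrangement inequality gives $\langle\mu,\mathbf{x}^q\rangle=\max_\pi\langle\mu,\mathbf{x}^\pi\rangle$. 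Hence $f$ is a supremum of linear maps over all of $\mathbb{R}^n$, and full subadditivity follows.
\end{itemize}
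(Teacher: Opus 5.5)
Your proposal is correct and follows essentially the paper's route. You use sublinearity of $f$ on $\mathcal{C}$ from (B3), supporting functionals at points of $\text{int}\,\mathcal{C}$ that are nonnegative because $\mathbf{c}_0-\epsilon\mathbf{e}_i\in\mathcal{C}$ there (this is all your detour through $\hat f$ actually uses) and sum to one by (B1), extension to $\partial\mathcal{C}$ by continuity and compactness, then (B4), and in Part (2) permutation invariance of the dual set combined with the rearrangement inequality.

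The one real difference is that you keep only the supports lying in $\mathcal{D}$ (an intersection, not a projection), whereas the paper takes $\mathcal{W}_1=\mathrm{dom}\,g^{*}$ for $g=f+\delta(\cdot|\mathcal{C})$. Your choice is the correct one. The paper proves nonnegativity only for supports at interior points, and $\mathrm{dom}\,g^{*}\subseteq\mathcal{D}$ fails in general: for $n=2$ and $f$ the mean, $\mathrm{dom}\,g^{*}=\{\mu_1+\mu_2=1,\ \mu_2\le 1/2\}$, which contains vectors with negative entries.
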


\begin{remark}
    The key distinction between Parts (1) and (2) in Theorem \ref{theo:WeightedGRM_condition_sup} lies in the origin of convexity. Under full sub-additivity, the aggregation function $f$ inherently possesses convexity, whereas under comonotonic sub-additivity, convexity is induced by restricting the domain to $\mathcal{C}$ due to $\delta(\cdot|\mathcal{C})$. Specifically, considering the setting in Theorem \ref{theo:WeightedGRM_condition_sup}, for any $\Phi_{\calQ, X}^{q} \in \text{int}\mathcal{C}$ with components $\Psi(X|P_{1})<\Psi(X|P_{2})<\cdots <\Psi(X|P_{n})$, the closed and convex nature of $\mathcal{W}_{1}$ guarantees the existence of a weight vector $\mu_{1} \in \mathcal{W}_{1}$ that achieves the supremum of $f(\Phi_{\calQ, X}^{q})$. If we suppose $\mu_{1}$ is not non-decreasing, then there exist indices $1\leq j<k \leq n$ such that $\mu_{1j}>\mu_{1k}$. Define $\mu_{2}$ as the weight vector obtained by swapping the $j$-th and $k$-th components of $\mu_{1}$. Clearly, $\mu_{2}$ still belongs to $\mathcal{D}$. However, direct calculation yields
    $$
    \left\langle \mu_2,\Phi_{\calQ, X}^{q} \right\rangle - \left\langle \mu_1,\Phi_{\calQ, X}^{q} \right\rangle
    =
    (\mu_{1k}-\mu_{1j}) \Psi(X|P_{j}) + (\mu_{1j}-\mu_{1k}) \Psi(X|P_{k})
    =
    (\mu_{1k}-\mu_{1j})(\Psi(X|P_{j}) -\Psi(X|P_{k}))
    >0.
    $$
    This implies $\left\langle \mu_2,\Phi_{\calQ, X}^{q} \right\rangle > \left\langle \mu_1,\Phi_{\calQ, X}^{q} \right\rangle = f(\Phi_{\calQ, X}^{q})$, which contradicts the definition of $\mu_{1}$ as the supremum-achieving weight. Nevertheless, this contradiction does not directly indicate that $\mu_{1} \in \mathcal{C}$, since the swapped vector $\mu_{2}$ may not belong to $\mathcal{W}_{1}$ in the first place. 
\end{remark}

The above theorem provides a formal representation of $\Psi(X|\mathcal{Q})$ via a linear weighting of individual risk assessments $\Psi(X|P_{i})$. However, it should be noted that the weighting vectors $\mu_\mathcal{Q}$ in this representation are not uniquely determined. More precisely, the weights depend functionally on the input risk vector $\Phi_{\calQ, X}$, exhibiting variability as $\Phi_{\calQ, X}$ changes. While this theoretical formulation possesses mathematical elegance, the state-dependent nature of the weights may present practical implementation challenges. Consequently, we shall introduce additional structural constraints to enhance the stability and applicability of the weighting scheme.

(B6) Comonotonic additivity:
$f(\Phi_{\calQ, X}+\Phi_{\calQ, Y}) = f(\Phi_{\calQ, X}) + f(\Phi_{\calQ, Y})$ whenever $\Phi_{\calQ, X}$ and $\Phi_{\calQ, Y}$ are comonotonic, i.e., $(\Psi(X|P_{i})-\Psi(X|P_{j}))(\Psi(Y|P_{i})-\Psi(Y|P_{j})) \geq 0$ for any $i,j \in \{1,2,\cdots,n\}$.

(B6') Full additivity: 
$f(\Phi_{\calQ, X}+\Phi_{\calQ, Y}) = f(\Phi_{\calQ, X}) + f(\Phi_{\calQ, Y})$ for any $\Phi_{\calQ, X},\Phi_{\calQ, Y} \in \mathbb{R}^{n}$.

(B6) transforms the (B3) comonotonic sub-additivity into comonotonic additivity and (B6') transforms the (B3') full sub-additivity into full additivity at the expense of weakening the convexity of the function. The rationality behind this transformation is that the sub-additivity should be represented in $\Psi(X | \cdot)$ rather than in $f(\cdot)$ to avoid overemphasizing the diversification effects. In other words, this aligns with scenarios where hedging effects are primarily captured within the individual risk measures $\Psi(X | \cdot)$ rather than the aggregation process. (B6) and (B6') necessitate the linearity of $f(\Phi_{\calQ, X})$, which enables us to apply the Riesz Representation Theorem to attain unique weighting vectors $\mu_\mathcal{Q}^{*}$.

\begin{proposition}
\label{prop:WeightedGRM_condition_unique}
    (1) The aggregation function $f: \mathcal{C} \to \mathbb{R}$ satisfies (B1), (B2) and (B6) if and only if there exists a unique weighting vector $\mu_\mathcal{Q}^{*} \in \mathcal{D}$, such that
    \begin{align*}
        f(\Phi_{\calQ, X})
        =
        \left\langle \mu_{\mathcal{Q}}^{*},\Phi_{\calQ, X}  \right\rangle,
        \quad
        \forall\, \Phi_{\calQ, X}\in \mathcal{C}.
    \end{align*}

(2) The aggregation function $f: \mathbb{R}^{n} \to \mathbb{R}$ satisfies (B1), (B2), (B4) and (B6) if and only if there exists a unique weighting vector $\mu_\mathcal{Q}^{*} \in \mathcal{D}$, such that
    \begin{align*}
        f(\Phi_{\calQ, X})
        =
        \left\langle \mu_{\mathcal{Q}}^{*},\Phi_{\calQ, X}^{q}  \right\rangle,
        \quad
        \forall\, \Phi_{\calQ, X}\in \mathbb{R}^{n}.
    \end{align*}

(3) The aggregation function $f: \mathbb{R}^{n} \to \mathbb{R}$ satisfies (B1), (B2) and (B6') if and only if there exists a unique weighting vector $\mu_\mathcal{Q}^{*} \in \mathcal{D}$, such that
    \begin{align*}
        f(\Phi_{\calQ, X})
        =
        \left\langle \mu_{\mathcal{Q}}^{*},\Phi_{\calQ, X}  \right\rangle,
        \quad
        \forall\, \Phi_{\calQ, X}\in \mathbb{R}^{n}.
    \end{align*}
    
(4) The aggregation function $f: \mathbb{R}^{n} \to \mathbb{R}$ satisfies (B1), (B2), (B4) and (B6') if and only if
    \begin{align*}
        f(\Phi_{\calQ, X})
        =
        \left\langle \mu_{\mathcal{Q}}^{*},\Phi_{\calQ, X}  \right\rangle,
        \quad
        \forall\, \Phi_{\calQ, X}\in \mathbb{R}^{n},
    \end{align*}
where $\mu_{\mathcal{Q}}^{*} = (\frac{1}{n},\frac{1}{n},\cdots,\frac{1}{n})^{T} \in \mathbb{R}^{n}$.
\end{proposition}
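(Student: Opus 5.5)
The common strategy is the same for all four parts. First show that $f$ is additive on the relevant cone, then upgrade additivity to linearity using (B1) and monotonicity. The Riesz representation on $\mathbb{R}^n$ then gives a vector $\mu^*$, and (B1)–(B2) force $\mu^*\in\mathcal{D}$. The ``if'' directions are direct checks: a map $\Phi\mapsto\langle\mu^*,\Phi\rangle$ with $\mu^*\in\mathcal{D}$ is positively homogeneous, satisfies $\langle\mu^*,\mathbf{1}\rangle=1$, is monotone because $\mu^*\geq 0$, and is additive. For parts (2) and (4), note also that for comonotonic vectors the order statistics add, i.e. $(\Phi_X+\Phi_Y)^q=\Phi_X^q+\Phi_Y^q$, because a single permutation sorts both vectors simultaneously.

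\textbf{Part (3).} Full additivity (B6') gives $f(0)=0$ and $f(-x)=-f(x)$. Together with positive homogeneity from (B1), this yields $f(ax)=af(x)$ for all real $a$, so $f$ is linear on $\mathbb{R}^n$. Set $\mu^*_i=f(e_i)$. Then $f(\Phi)=\langle\mu^*,\Phi\rangle$, and this vector is unique because it is determined by the basis values. Monotonicity (B2) with $0\leq e_i$ gives $\mu^*_i\geq 0$, and (B1) with $a=0,b=1$ gives $f(\mathbf{1})=1$, so $\sum_i\mu^*_i=1$.

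\textbf{Part (1).} Any two vectors in $\mathcal{C}$ are comonotonic, since both are non-decreasing in the index. Hence (B6) makes $f$ additive on the cone $\mathcal{C}$, and with (B1) it is positively homogeneous there. The cone $\mathcal{C}$ is generated by $\pm\mathbf{1}$ together with the vectors $g_k=(0,\dots,0,1,\dots,1)$, which are $1$ from index $k$ onward, for $k=2,\dots,n$. Every $x\in\mathcal{C}$ can be written as
\[
x=x_1\mathbf{1}+\sum_{k\geq2}(x_k-x_{k-1})g_k ,
\]
with nonnegative coefficients on the $g_k$. Additivity and (B1) then give
\[
f(x)=x_1+\sum_{k\geq 2}(x_k-x_{k-1})f(g_k).
\]
This expression is linear in $x$, so it extends uniquely to a linear functional on $\mathbb{R}^n$, which is represented by a unique $\mu^*$. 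Uniqueness holds because $\mathcal{C}$ has nonempty interior and therefore spans $\mathbb{R}^n$. Explicitly, $\mu^*_k=f(g_k)-f(g_{k+1})$ for $k<n$, with $\mu^*_1=1-f(g_2)$ and $\mu^*_n=f(g_n)$. Since $g_{k+1}\leq g_k$ componentwise and all $g_k$ lie in $\mathcal{C}$, (B2) gives $f(g_{k+1})\leq f(g_k)$, so each $\mu^*_k\geq0$. The components sum to $1$ because $f(\mathbf{1})=1$.

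\textbf{Part (2).} By (B4), $f(\Phi)=f(\Phi^q)$ for every $\Phi$, and $\Phi^q\in\mathcal{C}$. Apply part (1) to the restriction of $f$ to $\mathcal{C}$, which still satisfies (B1), (B2) and (B6). This gives $f(\Phi)=\langle\mu^*,\Phi^q\rangle$. Uniqueness follows from part (1), since $\mathcal{C}$ spans $\mathbb{R}^n$.

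\textbf{Part (4).} Part (3) gives $f(\Phi)=\langle\mu^*,\Phi\rangle$. Apply (B4) to $e_i$ and to the transposition swapping $i$ and $j$: this gives $\mu^*_i=f(e_i)=f(e_j)=\mu^*_j$. Combined with $\sum_i\mu^*_i=1$, every component equals $1/n$. Conversely, the uniform average is permutation invariant and satisfies (B1), (B2) and (B6').

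The step I expect to be the main obstacle is part (1). Additivity holds only on the cone $\mathcal{C}$, not on all of $\mathbb{R}^n$, so one must check two things. First, the decomposition of $x\in\mathcal{C}$ must use only nonnegative multiples of the $g_k$, so that (B1) applies to each term. Second, the nonnegativity of $\mu^*$ must come from comparisons between generators that stay inside $\mathcal{C}$, which is why the argument uses $f(g_{k+1})\leq f(g_k)$ rather than testing on the standard basis vectors $e_i$.
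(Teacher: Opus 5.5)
Your proof is correct. Parts (2)--(4) follow the paper's route. You reduce to $\Phi^{q}_{\mathcal{Q},X}$ via (B4) and invoke (1); you get linearity from (B6') and (B1) plus Riesz; and permutation invariance forces equal weights, where your transposition acting on $\mathbf{e}_i$ is a concrete instance of the paper's $\mu^{*}_{\mathcal{Q}}=\mu^{*\pi^{-1}}_{\mathcal{Q}}$.

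The genuine difference is part (1). The paper argues that $f$ is ``fully homogeneous on $\mathcal{C}$'' from $f(c\mathbf{1})+f(-c\mathbf{1})=0$, declares $f$ linear on $\mathcal{C}$, extends it to $\mathbb{R}^n$ by Hahn--Banach, applies Riesz, and then simply asserts that (B2) gives $\mu^{*}_i\geq 0$. You instead decompose $x=x_1\mathbf{1}+\sum_{k\geq 2}(x_k-x_{k-1})g_k$ with nonnegative coefficients on the $g_k$. This gives an explicit linear formula and explicit weights $\mu^{*}_k=f(g_k)-f(g_{k+1})$, and it is more rigorous at the points the paper glosses over. First, full homogeneity only makes sense on the constants, since $-x\notin\mathcal{C}$ for non-constant $x$; you sidestep this by using translation invariance for the $x_1\mathbf{1}$ component. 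Second, the unique extension off the cone rests on $\mathcal{C}$ spanning $\mathbb{R}^n$, not on Hahn--Banach, which extends from subspaces and gives no uniqueness. Most importantly, your nonnegativity step is an actual proof. Testing (B2) on $\mathbf{e}_i$ is unavailable because $\mathbf{e}_i\notin\mathcal{C}$ for $i<n$, whereas your comparisons $g_{k+1}\leq g_k$ stay inside the domain. This is exactly the discrete counterpart of the $\mathbf{1}_{(c,1]}\geq\mathbf{1}_{(d,1]}$ argument the paper uses in its continuous analogue. The paper's abstract ``linear on the cone, extend, represent'' pattern is shorter and carries over to $L^{1}([0,1])$, where no finite generating family exists.

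One small addition for the converse in (2): monotonicity of $\Phi\mapsto\langle\mu^{*},\Phi^{q}\rangle$ also needs that $\Phi\leq\Psi$ componentwise implies $\Phi^{q}\leq\Psi^{q}$. State it alongside your comonotonic additivity of order statistics.
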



\subsection{WGRM under a Continuous Setting}

The results for the discrete case cover a vast majority of practical applications. However, in situations where the number of distinct individual risk measures requiring aggregation is substantial, or the scenarios under consideration become exceedingly numerous, the elements in $\calQ$ may proliferate toward infinity. In such asymptotic regimes, the existing framework is no longer adequate. Therefore, we are motivated to discuss the WGRM structure in the continuous case. Nevertheless, due to the significant differences between two settings, we begin by establishing the basic framework for the continuous version of WGRM.

In such setting, we assume $\calQ$ to be a closed subset of $\calP$. Let $\mu_{0}$ be a fixed atomless Borel probability reference measure on $\mathcal{Q}$. By the isomorphism theorem for standard probability spaces, there exists a measure-preserving bijection $\tau: (\mathcal{Q},\mu_0) \to ([0,1], \lambda)$, where $\lambda$ denotes the Lebesgue measure on $[0, 1]$. This isomorphism allows us to transfer the analysis from the abstract space $\mathcal{Q}$ to a concrete interval $[0,1]$. Define the \textit{scenario risk functional} $\varphi_X: [0,1] \to \mathbb{R}$ as 
\begin{align}
    \varphi_X(t) = \Psi(X \mid \tau^{-1}(t)), \quad t \in [0,1],
\end{align}
which directly maps points in $[0,1]$ to the corresponding single-scenario risk evaluations. The quantile function (or non-decreasing rearrangement) of $\varphi_X$ is defined as 
\begin{align}
\label{cont:quantile}
    \varphi_X^q(t) = \inf \{s \in \mathbb{R} \mid \lambda({u \in [0,1]: \varphi_X(u) \leq s}) \geq t\}, \quad t \in [0,1].
\end{align}

Previously, in the discrete case, the finite number of scenarios ensures bounded risk vectors. However, in the continuous setting, to ensure the mathematical robustness of the aggregation function, we assume that the scenario risk functional remains bounded.
As a common practice, we consider the space $L^{\infty}([0,1])$.
Consequently, the aggregation function $f$ is defined as a mapping $f: L^{\infty}([0,1]) \to \mathbb{R}$, satisfying $\Psi(X|\mathcal{Q})=f(\varphi_X)$. By restricting the domain to $L^{\infty}([0,1])$, we avoid a pathological behavior such as non-integrability of the scenario risk functional $\varphi_{X}$, thereby guaranteeing the existence and integrability of the quantile function. Note that if $\varphi_X \in L^{\infty}([0,1])$, then $\varphi_X^q \in L^{\infty}([0,1])$ as well, with $\|\varphi_X^q\|_\infty = \|\varphi_X\|_\infty$.

Also, the aggregation function $f$ is expected to satisfy the following properties under the continuous case. These properties are not totally different, but are the counterparts in a continuous context of (B1) to (B4).

(C1) Affine invariance: $f(a \varphi_{X} + b\mathbf{1}_{[0,1]}) = a f(\varphi_{X}) + b$, $\forall\, \varphi_{X} \in L^{\infty}([0,1])$, $a \geq 0$, $b \in \mathbb{R}$, where $\mathbf{1}_{[0,1]}$ is an indicator function over interval $[0,1]$.

(C2) Pointwise monotonicity: $f(\varphi_{X})\leq f(\varphi_{Y})$ if $\varphi_{X} \leq \varphi_{Y}$. Here, $\varphi_{X} \leq \varphi_{Y}$ means $\varphi_X(t) \leq \varphi_Y(t)$ for almost every $t \in [0,1]$.

(C3) Comonotonic sub-additivity: $f(\varphi_{X}+\varphi_{Y}) \leq f(\varphi_{X}) +f(\varphi_{Y})$ whenever $\varphi_{X}, \varphi_{Y}$ are comonotonic, i.e., $(\varphi_X(s) - \varphi_X(t))(\varphi_Y(s) - \varphi_Y(t)) \geq 0$ for almost every $s, t \in [0,1]$.

(C4) Strong permutation invariance: 
$f(\varphi_X) = f(\varphi_X \circ T)$ holds for any measure-preserving transformation $T: [0,1] \to [0,1]$ with respect to the Lebesgue measure $\lambda$, i.e., $\lambda(T^{-1}(A)) = \lambda(A)$ for any Lebesgue measurable set $A \subseteq [0,1]$.

Similar to the discrete case, we can also transform (C3) comonotonic sub-additivity into (C3') full sub-additivity, which yields non-decreasing weighting measures as shown in the following result.

(C3') Full sub-additivity: $f(\varphi_{X}+\varphi_{Y}) \leq f(\varphi_{X}) +f(\varphi_{Y})$, for any $\varphi_{X},\varphi_{Y} \in L^{\infty}([0,1])$.

But unfortunately, relying solely on conditions (C1)-(C4) does not directly yield the infinite-dimensional extension of Theorem \ref{theo:WeightedGRM_condition_sup}. While these parallel properties ensure a reasonable behavior for standard risks, they fail to exclude pathological functionals known as Banach limits (purely finitely additive measures). Consequently, we propose introducing a non-trivial condition to filter out these pathological instances.

(C5) Fatou property: For any sequence $\{\varphi_{X_{n}}\} \subset L^{\infty}([0,1])$ such that $\varphi_{X_{n}} \xrightarrow{a.e.} \varphi_{X} \in L^{\infty}([0,1])$, we have $f(\varphi_{X}) \leq \liminf_{n \to \infty} f(\varphi_{X_{n}})$.

In Theorem \ref{theo:WeightedGRM_condition_sup_continuous} below, condition (C5) is explicitly invoked in Part (1) but omitted in Part (2), since the Fatou property is automatically satisfied under the assumptions of Part (2) (see Theorem 2.2, \cite{Jouini2006}). Based on these considerations, we formally state the following theorem.

\begin{theorem}
\label{theo:WeightedGRM_condition_sup_continuous}
    (1) The aggregation function $f$ on $L^{\infty}([0,1])$ satisfies (C1)-(C5) if and only if there exists a closed convex set of density $\mathcal{W}_{3} \subseteq \{\nu \in L^{1}([0,1]) \mid \int^{1}_{0}\nu(t)\mathrm{d}t=1, \nu(t)\geq 0\ \text{a.e.}\}$ such that
    \begin{align}
        \label{eq:weighted_grm_sup_continuous}
        \Psi(X|\mathcal{Q})=f(\varphi_{X})
        =
        \sup\limits_{\nu \in \mathcal{W}_{3}} \int_{0}^{1} \varphi_{X}^{q}(t) \nu(t)\mathrm{d} t,
        \quad
        \forall\, \varphi_{X} \in L^{\infty}([0,1]).
    \end{align}

    (2) The aggregation function $f$ on $L^{\infty}([0,1])$ satisfies (C1), (C2), (C3') and (C4) if and only if there exists a closed convex set of density $\mathcal{W}_{4} \subseteq \{\nu \in L^{\infty}([0,1]) \mid \int^{1}_{0}\nu(t)\mathrm{d}t=1, \nu(t)\geq 0\ \text{a.e.}, \nu \text{ is non-decreasing}\}$ such that
    \begin{align}
        \label{eq:weighted_grm_sup_continuous_subbadd}
        \Psi(X|\mathcal{Q})=f(\varphi_{X})
        =
        \sup\limits_{\nu \in \mathcal{W}_{4}} \int_{0}^{1} \varphi_{X}^{q}(t) \nu(t)\mathrm{d} t,
        \quad
        \forall\, \varphi_{X} \in L^{\infty}([0,1]).
    \end{align}
\end{theorem}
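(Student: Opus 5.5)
The plan is to reduce $f$ to a functional of the quantile function alone, apply a dual representation on comonotone (non-decreasing) functions, and then read off the structure of the weight set. The continuous analogue of the set $\mathcal{C}$ is the cone $\mathcal{C}_\infty:=\{g\in L^\infty([0,1]) \mid g \text{ non-decreasing}\}$.

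\textbf{Step 1 (reduction to quantiles).} First show that (C4) implies $f(\varphi_X)=f(\varphi_X^q)$. On the standard space $([0,1],\lambda)$ there is a measure-preserving $T$ with $\varphi_X=\varphi_X^q\circ T$ a.e. (Ryff's theorem). This immediately gives the identity if $\varphi_X$ is already of the form $g\circ T$ with $g=\varphi_X^q$. For a general $\varphi_X$, approximate it by simple functions: for a simple function a genuine measure-preserving bijection (mod null sets) rearranging it into its quantile exists. Then pass to the limit using (C1)–(C2), which give $|f(\varphi)-f(\psi)|\le\|\varphi-\psi\|_\infty$. Hence $f=\tilde f\circ(\cdot)^q$, where $\tilde f:=f|_{\mathcal{C}_\infty}$.

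\textbf{Step 2 (dual representation on the cone).} Any two elements of $\mathcal{C}_\infty$ are comonotonic, so $\tilde f$ is monotone, translation invariant, positively homogeneous and subadditive on the convex cone $\mathcal{C}_\infty$; in particular it is sublinear there. Extend it to all of $L^\infty$ by $F(g):=f(g^q)$, which agrees with $f$ by Step 1. To obtain convexity of $F$ on all of $L^\infty$, use the subadditivity of the rearrangement functional along comonotone directions. Concretely, $(g+h)^q$ is majorized by $g^q+h^q$ in the concave (Hardy–Littlewood–Pólya) order, and a law-invariant, comonotone-subadditive, monotone functional respects this order; this is the step I would borrow from Kusuoka-type arguments. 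The alternative is to work on $\mathcal{C}_\infty$ directly: apply Hahn–Banach to the sublinear $\tilde f$ restricted to the convex cone $\mathcal{C}_\infty$, noting the cone contains the constants. In either case Hahn–Banach yields
\[
\tilde f(g)=\max_{m\in\mathcal{M}} \int g\,\mathrm{d}m
\]
over finitely additive probabilities $m$ in $ba([0,1])$, with positivity and mass one coming from (C2) and (C1). Now invoke (C5): by the Delbaen/Jouini–Schachermayer–Touzi argument, the Fatou property is equivalent to weak$^*$ lower semicontinuity under $\sigma(L^\infty,L^1)$. Hence the supremum can be taken over countably additive measures absolutely continuous with respect to $\lambda$, i.e.\ densities $\nu\in L^1$ with $\nu\ge0$ and $\int\nu=1$. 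Define $\mathcal{W}_3$ as the $L^1$-closed convex hull of these densities. Closing does not change the supremum, since $\nu\mapsto\int g\nu$ is $L^1$-continuous for $g\in L^\infty$. This gives part (1). The converse is routine: each $g\mapsto\int g^q\nu$ is affine-invariant and monotone, and comonotone additive because $(g+h)^q=g^q+h^q$ for comonotone $g,h$; a supremum preserves sub-additivity; and Fatou follows from $g_n^q\to g^q$ a.e.\ together with Fatou's lemma after shifting by $\inf_n\|g_n\|$. One subtlety here: a.e.\ convergence with uniform bounds is needed, so I would first show that (C1)–(C2) let us reduce to bounded sequences.

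\textbf{Step 3 (part (2)).} Under (C3') together with (C1)–(C2), $f$ is a law-invariant coherent risk measure on $L^\infty$ over an atomless space. By Theorem 2.2 of \cite{Jouini2006}, as the paper notes, it has the Fatou property, so Step 2 gives a representation over $\mathcal{W}_3$. It then remains to show the densities can be taken non-decreasing, which is Kusuoka's representation. For any $\nu$ and any non-decreasing $g$, the Hardy–Littlewood inequality gives $\int g\nu\le\int g\,\nu^q$. Law invariance of the full dual set (it is $\lambda$-rearrangement invariant because $f$ is) implies $\nu^q$ also belongs to the dual set. So we may replace $\mathcal{W}$ by $\{\nu^q\}$, which consists of non-decreasing densities. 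Boundedness of the $\nu$ (the $L^\infty$ requirement in $\mathcal{W}_4$) is not automatic; I would obtain it by truncation, approximating $\sup$ over $L^1$ non-decreasing densities by those capped at level $k$ and renormalized. This converges for fixed $g^q\in L^\infty$ by monotone convergence, and we take the closed convex hull. The converse follows because each $\int g^q\nu$ with $\nu$ non-decreasing is subadditive on all of $L^\infty$, by the Hardy–Littlewood bound $\int g^q\nu=\sup_{h\sim g}\int h\,\nu$.

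\textbf{Main obstacle.} The delicate step is Step 2 in part (1): getting convexity or a dual representation when sub-additivity is only comonotonic. I would first try working on the cone $\mathcal{C}_\infty$, where everything is comonotone, and use a Hahn–Banach sandwich for sublinear functionals on a cone containing constants, with weak$^*$ closure supplied by (C5). The statement of part (1) appears to be designed to follow exactly this route.
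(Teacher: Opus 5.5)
Your skeleton for part (1) matches the paper's: reduce to quantiles via (C4), represent $f$ on the cone $\mathcal{C}_\infty$ of non-decreasing functions, and use (C5) to land in $L^1$. Your part (2) is sound, and more explicit than the paper's, since there Delbaen's theorem applies directly to the coherent $f$. But there is a genuine gap: the decisive step of (1) is not established.

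\textbf{The two routes in Step 2.}
\begin{itemize}
\item Your first option is false. The left quantile $f(\varphi)=\varphi^q(\alpha)$ satisfies (C1)--(C5) but is neither convex nor consistent with the concave order. So $F=f$ cannot be convex on $L^\infty$ in general.
\item Your second option does not work as stated. Hahn--Banach needs a \emph{finite} sublinear majorant defined on the whole space, which $\tilde f$ on a cone is not. If you use $f+\delta(\cdot|\mathcal{C}_\infty)$ instead, the linear functionals it dominates need not be positive. For $f(\varphi)=\int_0^1\varphi(t)\,\mathrm{d}t$, the signed density $\nu=3\cdot\mathbf{1}_{[0,1/2]}-\mathbf{1}_{(1/2,1]}$ satisfies $\int_0^1\varphi\nu\,\mathrm{d}t\le f(\varphi)$ for every non-decreasing $\varphi$. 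So ``positivity from (C2)'' does not follow.
\item The Delbaen/Jouini--Schachermayer--Touzi argument that turns Fatou into a representation by \emph{positive} $L^1$ densities concerns a convex, monotone functional on all of $L^\infty$. You never produce one. As a result, positivity and $L^1$-ness are asserted for different objects and never combined.
\end{itemize}

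\textbf{The missing idea: the monotone hull.} Define $q(\varphi):=\inf\{f(\psi):\psi\in\mathcal{C}_\infty,\ \psi\ge\varphi\}$ on $L^\infty([0,1])$.
\begin{itemize}
\item It satisfies $-\|\varphi\|_\infty\le q(\varphi)\le\|\varphi\|_\infty$, and it is monotone, cash-additive and positively homogeneous.
\item It is subadditive, because two non-decreasing majorants are comonotonic, so (C3) applies to their sum.
\item It equals $f$ on $\mathcal{C}_\infty$ by (C2).
\item It inherits the Fatou property. Given a uniformly bounded $\varphi_n\to\varphi$ a.e., pass to a subsequence realizing $\liminf q(\varphi_n)$. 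Pick near-optimal majorants $\psi_n$ and cap them at the uniform bound. Extract a pointwise convergent subsequence by Helly's selection theorem, and apply (C5) to the limit.
\end{itemize}
Delbaen's theorem for $q$ then gives $q(\varphi)=\sup_{\nu\in\mathcal{W}_3}\int_0^1\varphi\nu\,\mathrm{d}t$ with $\nu\ge0$ and $\int_0^1\nu\,\mathrm{d}t=1$. Restricting to the cone and using your Step 1 gives (1).

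For comparison, the paper applies Fenchel--Moreau to $f+\delta(\cdot|\tilde{\mathcal{C}})$ in the $(L^\infty,L^1)$ pairing and extracts positivity from subgradients at points of $\tilde{\mathcal{C}}_0$. The example above shows that $\text{dom}\,g^{*}$ there also contains signed densities. So that route, too, must confine its positivity argument to suitably chosen densities.

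\textbf{Two smaller points.}
\begin{itemize}
\item (C5) must be read for uniformly bounded sequences, as the paper's proof does. You cannot reduce to that case via (C1)--(C2): $f(\varphi)=\int_0^1\varphi(t)\,\mathrm{d}t$ violates the literal (C5) along $\varphi_n=-n^2\mathbf{1}_{[0,1/n]}$.
\item In your truncation for part (2), you must check that the capped, renormalized densities $\nu_k$ stay dominated. This holds because for non-decreasing $\nu$ one has $\int_s^1\nu_k\le\int_s^1\nu$ for all $s$, hence $\int_0^1 g^q\nu_k\le\int_0^1 g^q\nu$.
\end{itemize}
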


One may question the expressions in Theorem \ref{theo:WeightedGRM_condition_sup_continuous} differ from our original definition in Eq.(\ref{eq:weighted_grm}). Despite formal differences, they are essentially equivalent. For any density function $\nu \in L^{1}([0,1])$ with $\nu \geq 0$ a.e. and $\int_{0}^{1} \nu(t) \,\mathrm{d}t = 1$, we can define an absolutely continuous probability measure $\mu_{\nu}$ on $[0,1]$ by
\begin{align*}
    \mu_{\nu}(A) = \int_{A} \nu(t) \,\mathrm{d}t,
\end{align*}
for any Lebesgue measurable set $A \subseteq [0,1].$
Conversely, by the Radon-Nikodym Theorem, for any probability measure $\mu_{\mathcal{Q}}$ on $[0,1]$ satisfying $\mu_{\mathcal{Q}} \ll \lambda$, where $\lambda$ denotes the Lebesgue measure, there exists a unique (up to a.e. equivalence) density function $\nu_{\mu} = \frac{\mathrm{d}\mu_{\mathcal{Q}}}{\mathrm{d}\lambda} \in L^{1}([0,1])$ with $\nu_{\mu} \geq 0$ a.e.\ such that
\begin{align*}
    \mu_{\mathcal{Q}}(A) = \int_{A} \nu_{\mu}(t) \,\mathrm{d}t,
\end{align*}
for any Lebesgue measurable set $A \subseteq [0,1].$
This establishes a bijective correspondence (up to a.e. equivalence) between non-negative density functions integrating to $1$ and absolutely continuous probability measures on $[0,1]$.
Now we define two sets of absolutely continuous probability measures:
\begin{align*}
    \widetilde{\mathcal{W}}_3
    &=
    \left\{\mu_{\mathcal{Q}}\in \mathcal{M}_{+}([0,1]) \,\mid\, 
    \mu_{\mathcal{Q}} \ll \lambda,\; 
    \mu_{\mathcal{Q}}([0,1])=1
    \right\},
    \\[6pt]
    \widetilde{\mathcal{W}}_4
    &=
    \left\{\mu_{\mathcal{Q}}\in \mathcal{M}_{+}([0,1]) \,\middle|\, 
    \mu_{\mathcal{Q}} \ll \lambda,\; 
    \mu_{\mathcal{Q}}([0,1])=1,\;
    \frac{\mathrm{d}\mu_{\mathcal{Q}}}{\mathrm{d}\lambda} \text{ is non-decreasing}
    \right\},
\end{align*}
where $\mathcal{M}_{+}([0,1])$ denotes the space of non-negative finite measures on $[0,1]$. By the bijective correspondence established above, we have
\begin{align*}
    \left\{\nu \in L^{1}([0,1]) \,\middle|\, 
    \int_{0}^{1}\nu(t)\,\mathrm{d}t=1,\; 
    \nu(t)\geq 0 \text{ a.e.}
    \right\}
    &=
    \left\{\frac{\mathrm{d}\mu_{\mathcal{Q}}}{\mathrm{d}\lambda} \,\middle|\, 
    \mu_{\mathcal{Q}} \in \widetilde{\mathcal{W}}_3
    \right\},
    \\[6pt]
    \left\{\nu \in L^{1}([0,1]) \,\middle|\, 
    \int_{0}^{1}\nu(t)\,\mathrm{d}t=1,\; 
    \nu(t)\geq 0 \text{ a.e.},\;
    \nu \text{ is non-decreasing}
    \right\}
    &=
    \left\{\frac{\mathrm{d}\mu_{\mathcal{Q}}}{\mathrm{d}\lambda} \,\middle|\, 
    \mu_{\mathcal{Q}} \in \widetilde{\mathcal{W}}_4
    \right\}.
\end{align*}

Then for any $\nu \in \mathcal{W}_3$ or $\mathcal{W}_4$, and for any $g \in L^{\infty}([0,1])$, we have
\begin{align*}
    \int_{0}^{1} g(t) \,\nu(t) \,\mathrm{d}t
    =
    \int_{0}^{1} g(t) \,\mathrm{d}\mu_{\nu}(t).
\end{align*}
This allows us to reformulate Theorem \ref{theo:WeightedGRM_condition_sup_continuous} within the framework of measure-theoretic integration.

\begin{proposition}
\label{prop:measure_representation}
    (1) The aggregation function $f$ on $L^{\infty}([0,1])$ satisfies (C1)-(C5) if and only if there exists a closed convex set of measures $\widetilde{\mathcal{W}}_{3}^{*} :=\{\mu_{\nu} \mid \nu \in \mathcal{W}_3\}\subset \widetilde{\mathcal{W}}_3$ such that 
    \begin{align*}
        \Psi(X|\mathcal{Q}) = f(\varphi_{X})
        =
        \sup_{\mu_{\mathcal{Q}} \in \widetilde{\mathcal{W}}_{3}^{*}} 
        \int_{0}^{1} \varphi_{X}^{q}(t) \,\mathrm{d} \mu_{\mathcal{Q}}(t),
        \quad
        \forall\, \varphi_{X} \in L^{\infty}([0,1]).
    \end{align*}
    
    (2) The aggregation function $f$ on $L^{\infty}([0,1])$ satisfies (C1), (C2), (C3') and (C4) if and only if there exists a closed convex set of measures $\widetilde{\mathcal{W}}_{4}^{*}:=\{\mu_{\nu} \mid \nu \in \mathcal{W}_4\} \subset \widetilde{\mathcal{W}}_4$ such that 
    \begin{align*}
        \Psi(X|\mathcal{Q}) = f(\varphi_{X})
        =
        \sup_{\mu_{\mathcal{Q}} \in \widetilde{\mathcal{W}}_{4}^{*}} 
        \int_{0}^{1} \varphi_{X}^{q}(t) \,\mathrm{d} \mu_{\mathcal{Q}}(t),
        \quad
        \forall\, \varphi_{X} \in L^{\infty}([0,1]).
    \end{align*}
\end{proposition}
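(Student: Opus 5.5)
Proposition \ref{prop:measure_representation} is a direct translation of Theorem \ref{theo:WeightedGRM_condition_sup_continuous} through the density--measure correspondence set up just above. I would not redo any convex analysis: both directions of each part will be obtained by composing the corresponding part of Theorem \ref{theo:WeightedGRM_condition_sup_continuous} with the map $\Theta:\nu\mapsto\mu_\nu$, $\mu_\nu(A)=\int_A\nu\,\mathrm{d}t$. That map is a bijection (modulo a.e.\ equality) from the set of nonnegative $L^1$ densities with unit mass onto $\widetilde{\mathcal{W}}_3$, and it sends non-decreasing densities onto $\widetilde{\mathcal{W}}_4$.

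\emph{Part (1), forward direction.} Assume $f$ satisfies (C1)--(C5). Part (1) of Theorem \ref{theo:WeightedGRM_condition_sup_continuous} then gives a closed convex $\mathcal{W}_3$ with the representation \eqref{eq:weighted_grm_sup_continuous}. Set $\widetilde{\mathcal{W}}_3^*:=\Theta(\mathcal{W}_3)\subset\widetilde{\mathcal{W}}_3$.

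\begin{itemize}
\item For each $\nu$, the identity $\int_0^1\varphi_X^q\nu\,\mathrm{d}t=\int_0^1\varphi_X^q\,\mathrm{d}\mu_\nu$ holds. Indeed, $\varphi_X^q\in L^\infty$, so it is $\mu_\nu$-integrable, and the identity follows by the standard simple-function approximation.
\item Taking suprema over matching index sets gives the claimed formula.
\end{itemize}

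\emph{Part (1), reverse direction.} Suppose $f$ has the measure representation over some closed convex $\widetilde{\mathcal{W}}_3^*=\{\mu_\nu:\nu\in\mathcal{W}_3\}$. Pull it back via $\mathcal{W}_3=\Theta^{-1}(\widetilde{\mathcal{W}}_3^*)$, rewrite the integrals against densities, and invoke the "if" half of Theorem \ref{theo:WeightedGRM_condition_sup_continuous}(1) to recover (C1)--(C5).

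\emph{Part (2).} The argument is identical, using Theorem \ref{theo:WeightedGRM_condition_sup_continuous}(2) and the observation that $\mathrm{d}\mu_\nu/\mathrm{d}\lambda=\nu$ a.e. Hence $\nu$ is non-decreasing exactly when $\mu_\nu\in\widetilde{\mathcal{W}}_4$. Since $\mathcal{W}_4\subset L^\infty\subset L^1$, these densities are legitimate Radon--Nikodym derivatives.

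\emph{Main obstacle: topology.} The only real point to check is that "closed convex" transfers across $\Theta$.

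\begin{itemize}
\item \emph{Convexity} is immediate because $\Theta$ is linear: $\mu_{\alpha\nu_1+(1-\alpha)\nu_2}=\alpha\mu_{\nu_1}+(1-\alpha)\mu_{\nu_2}$.
\item \emph{Closedness.} I would equip the measures with the total-variation norm. Then $\|\mu_{\nu_1}-\mu_{\nu_2}\|_{TV}=\|\nu_1-\nu_2\|_{L^1}$, so $\Theta$ is an isometry onto its image, and norm-closedness is preserved in both directions.
\item If the theorem's closedness is meant in the weak$^*$ sense $\sigma(L^1,L^\infty)$, I would match it with the topology on measures induced by integration against $L^\infty$ functions. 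This is again exactly transported by $\Theta$, by the integral identity above.
\end{itemize}

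I would state this choice of topology explicitly and note that closedness is in any case not needed for the supremum formula itself, only for the existence statement.
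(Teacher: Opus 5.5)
Your proposal is correct and is essentially the paper's argument. The paper gives no separate proof: it transports Theorem~\ref{theo:WeightedGRM_condition_sup_continuous} through the bijection $\nu\mapsto\mu_\nu$ and the identity $\int_0^1 g\,\nu\,\mathrm{d}t=\int_0^1 g\,\mathrm{d}\mu_\nu$, and your check that convexity and closedness survive this map (via the $L^1$--total-variation isometry) makes explicit a point the paper leaves tacit. One caveat comes from the paper rather than from you: the reverse direction of Part (1) relies on the ``if'' half of Theorem~\ref{theo:WeightedGRM_condition_sup_continuous}(1), which holds only when (C5) is read for uniformly bounded sequences, as in the paper's proof, since literally $f(\varphi)=\int_0^1\varphi\,\mathrm{d}t$ (i.e.\ $\mathcal{W}_3=\{\mathbf{1}_{[0,1]}\}$) violates (C5) along $\varphi_n=-n\mathbf{1}_{[0,1/n]}\to 0$ a.e.
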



Recall that in the discrete setting, replacing (comonotonic) sub-additivity with (comonotonic) additivity guarantees the uniqueness of the weights via the Riesz Representation Theorem. A natural question arises: Can the weighting density $\nu(t)$ in the continuous setting also be unique in this way? Unfortunately, the answer is not straightforward. Unlike the discrete case, working within $L^{\infty}([0,1])$ presents a topological hurdle. Its dual space is not $L^{1}([0,1])$, but rather $(L^{\infty})^{*}$, which includes pathological purely finitely additive measures. Consequently, a standard application of the Riesz Representation Theorem on $L^{\infty}$ does not yield a unique weighting density function.

To secure uniqueness within the $L^{\infty}$ framework, one would typically need to impose additional constraints—such as $L^1$-continuity (see property (D5) below)—to force the functional to behave like an $L^1$ functional, thereby allowing it to be extended to the $L^1$ space. However, rather than navigating this circuitous route of constraining an $L^{\infty}$ functional to mimic the $L^1$ behavior, we find it more mathematically natural to directly shift the underlying domain to $L^{1}([0,1])$. This approach may be of greater interest as $L^{1}([0,1])$ accommodates a broader class of risk profiles.

Define the set $\hat{\mathcal{C}} = \{g:[0,1]\to \mathbb{R}\mid g\ \text{is non-decreasing and left-continuous} \} \subseteq L^{1}([0,1])$. For any $\varphi_{X} \in L^{1}([0,1])$, its quantile function is defined as in Eq.(\ref{cont:quantile}). We examine the following properties under this new setup.

(D1) Affine invariance: $f(a \varphi_{X} + b\mathbf{1}_{[0,1]}) = a f(\varphi_{X}) + b$ for all $\varphi_{X} \in L^{1}([0,1])$, $a \geq 0$, $b \in \mathbb{R}$, where $\mathbf{1}_{[0,1]}$ is an indicator function over interval $[0,1]$.

(D2) Pointwise monotonicity: $f(\varphi_{X}) \leq f(\varphi_{Y})$ if $\varphi_{X} \leq \varphi_{Y}$ almost everywhere on $[0,1]$, i.e., $\varphi_X(t) \leq \varphi_Y(t)$ for almost every $t \in [0,1]$.

(D3) Comonotonic additivity: $f(\varphi_{X}+\varphi_{Y}) = f(\varphi_{X}) + f(\varphi_{Y})$ whenever $\varphi_{X}$ and $\varphi_{Y}$ are comonotonic, i.e., $(\varphi_X(s) - \varphi_X(t))(\varphi_Y(s) - \varphi_Y(t)) \geq 0$ for almost every $s, t \in [0,1]$.

(D3') Full additivity: $f(\varphi_{X}+\varphi_{Y}) = f(\varphi_{X}) + f(\varphi_{Y})$ for all $\varphi_{X}, \varphi_{Y} \in L^{1}([0,1])$.

(D4) Strong permutation invariance: $f(\varphi_X) = f(\varphi_X \circ T)$ for any Lebesgue measure-preserving transformations $T: [0,1] \to [0,1]$, i.e., $\lambda(T^{-1}(A)) = \lambda(A)$ for any Lebesgue measurable set $A \subseteq [0,1]$, where $\lambda$ denotes the Lebesgue measure.

(D5) $L^{1}-$continuity: There exists a constant $M > 0$ such that $|f(\varphi_{X}) - f(\varphi_{Y})| \leq M \|\varphi_{X} - \varphi_{Y}\|_{L^{1}}$ for all $\varphi_{X}, \varphi_{Y} \in L^{1}([0,1])$.

Based on these properties, we present the following results.

\begin{proposition}
\label{prop:WeightedGRM_condition_unique_continuous}
    (1) The aggregation function $f$ satisfies (D1), (D2), (D3), and (D5) if and only if there exists a unique density function $\nu^{*} \in \mathcal{W}_{5} \subseteq \{\nu \in L^{\infty}([0,1]) \mid \int_{0}^{1}\nu(t)\mathrm{d}t=1, \nu(t)\geq 0\ \text{a.e.}\}$ such that
    \begin{align*}
        \Psi(X|\mathcal{Q}) = f(\varphi_{X}) = \int_{0}^{1} \varphi_{X}(t) \nu^{*}(t)\mathrm{d} t, \quad \forall\, \varphi_{X} \in \hat{\mathcal{C}}.
    \end{align*}

    (2) The aggregation function $f$ satisfies (D1), (D2), (D3), (D4), and (D5) if and only if there exists a unique density function $\nu^{*} \in \mathcal{W}_{5} \subseteq \{\nu \in L^{\infty}([0,1])\mid \int_{0}^{1}\nu(t)\mathrm{d}t=1, \nu(t)\geq 0\ \text{a.e.}\}$ such that
    \begin{align*}
        \Psi(X|\mathcal{Q}) = f(\varphi_{X}) = \int_{0}^{1} \varphi_{X}^{q}(t) \nu^{*}(t)\mathrm{d} t, \quad \forall\, \varphi_{X} \in L^{1}([0,1]).
    \end{align*}
    
    (3) The aggregation function $f$ satisfies (D1), (D2), (D3'), and (D5) if and only if there exists a unique density function $\nu^{*} \in \mathcal{W}_{5} \subseteq \{\nu \in L^{\infty}([0,1])\mid \int_{0}^{1}\nu(t)\mathrm{d}t=1, \nu(t)\geq 0\ \text{a.e.}\}$ such that
    \begin{align*}
        \Psi(X|\mathcal{Q}) = f(\varphi_{X}) = \int_{0}^{1} \varphi_{X}(t) \nu^{*}(t)\mathrm{d} t, \quad \forall\, \varphi_{X} \in L^{1}([0,1]).
    \end{align*}

    (4) The aggregation function $f$ satisfies (D1), (D2), (D3'), (D4), and (D5) if and only if
    \begin{align*}
        \Psi(X|\mathcal{Q}) = f(\varphi_{X}) = \int_{0}^{1} \varphi_{X}(t)\mathrm{d} t, \quad \forall\, \varphi_{X} \in L^{1}([0,1]),
    \end{align*}
which implies $\nu^{*}(t) \equiv 1$ almost everywhere on $[0,1]$.
\end{proposition}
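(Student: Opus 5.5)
The backbone of all four parts is the same: additivity plus monotonicity make $f$ a positive linear functional, and (D5) makes it continuous on $L^{1}([0,1])$. Riesz duality $(L^{1})^{*}=L^{\infty}$ then gives a density $\nu^{*}$, and (D1) fixes its normalization. I would prove part (3) first, since it is the cleanest, and then adapt the argument to the comonotonic cone $\hat{\mathcal{C}}$ for parts (1) and (2). Part (4) then follows from (3) together with (D4). The "if" directions are routine verifications. Integrals against $\nu^{*}\in L^{\infty}$, $\nu^{*}\geq 0$, $\int\nu^{*}=1$, are affine-invariant, monotone and additive, and they are $L^{1}$-Lipschitz with $M=\|\nu^{*}\|_{\infty}$. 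In part (2) the map $\varphi\mapsto\varphi^{q}$ is law-invariant, is additive on comonotonic pairs (since $(\varphi+\psi)^{q}=\varphi^{q}+\psi^{q}$), and satisfies $\|\varphi^{q}-\psi^{q}\|_{L^1}\leq\|\varphi-\psi\|_{L^1}$.

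\textbf{Part (3).} (D3') gives $\mathbb{Q}$-linearity. Since (D5) gives continuity, $f$ is $\mathbb{R}$-linear; (D1) with $b=0$ also gives positive homogeneity directly. So $f$ is a bounded linear functional on $L^{1}$, and Riesz yields a unique $\nu^{*}\in L^{\infty}$ with $f(\varphi)=\int\varphi\nu^{*}$. Testing (D2) on indicators $\mathbf{1}_{A}$ against $0$ shows $\nu^{*}\geq 0$ a.e. Taking $a=0$, $b=1$ in (D1) gives $f(\mathbf{1}_{[0,1]})=1$, so $\int\nu^{*}=1$. Uniqueness of $\nu^{*}$ comes from the uniqueness in Riesz's theorem.

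\textbf{Part (4).} Apply (D4) to $\mathbf{1}_{A}$ with $T$ a measure-preserving map sending $A$ onto any $B$ of equal measure (for intervals, a piecewise translation works). This gives $\int_{A}\nu^{*}=\int_{B}\nu^{*}$ whenever $\lambda(A)=\lambda(B)$. Hence $A\mapsto\int_{A}\nu^{*}$ depends only on $\lambda(A)$, is additive, and equals $1$ at $\lambda(A)=1$, so $\int_{A}\nu^{*}=\lambda(A)$ and $\nu^{*}\equiv 1$.

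\textbf{Parts (1) and (2).} Here the main obstacle is extending a functional that is only comonotonically additive. Every element of $\hat{\mathcal{C}}$ is non-decreasing, so all pairs in $\hat{\mathcal{C}}$ are comonotonic, and $f$ is additive and positively homogeneous on the convex cone $\hat{\mathcal{C}}$. I would define $F(g-h):=f(g)-f(h)$ on the linear span $\hat{\mathcal{C}}-\hat{\mathcal{C}}$. This is well defined because $g-h=g'-h'$ implies $g+h'=g'+h$, and additivity on the cone then gives $f(g)+f(h')=f(g')+f(h)$.

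To bound $F$ on the span, I would not use a naive estimate for a difference $g-h$. Instead I would use (D5) pairwise: for $g,h\in\hat{\mathcal{C}}$, $|F(g-h)|=|f(g)-f(h)|\leq M\|g-h\|_{L^{1}}$. So $F$ is bounded on the span, which is dense in $L^{1}$ because it contains the step functions. Extending by continuity and applying Riesz gives $\nu^{*}\in L^{\infty}$ with $f(g)=\int g\,\nu^{*}$ for all $g\in\hat{\mathcal{C}}$. Normalization follows from (D1) exactly as in part (3).

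Nonnegativity is the delicate point, because (D2) only compares elements of $\hat{\mathcal{C}}$. I would test it with the non-decreasing indicators $\mathbf{1}_{(s,1]}\geq 0$, which give $\int_{s}^{1}\nu^{*}\geq 0$ for every $s$. If this yields only nonnegative tail integrals, I would fall back on the following. Density of the span forces $\nu^{*}$ to be unique, and the weight set $\mathcal{W}_{5}$ in the statement may need to be understood as allowing this weaker sign condition on the cone. Pinning this down is the step I expect to need the most care.

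For part (2), (D4) together with the fact that the quantile function is a measure-preserving rearrangement gives $f(\varphi)=f(\varphi^{q})$. I would justify this by approximating $\varphi$ in $L^{1}$ by simple functions, for which explicit measure-preserving $T$ with $\varphi\circ T=\varphi^{q}$ exist, and then passing to the limit using (D5). Since $\varphi^{q}\in\hat{\mathcal{C}}$, the representation from part (1) applies to it, and the formula of part (2) follows.
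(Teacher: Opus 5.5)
\textbf{Assessment.} Your overall architecture matches the paper's. You extend $f$ from the cone $\hat{\mathcal{C}}$ to its span via $F(g-h)=f(g)-f(h)$, bound $F$ by applying (D5) to the pair $(g,h)$, and use density of step functions together with $(L^{1})^{*}=L^{\infty}$. Part (3) then follows by direct Riesz, and parts (2) and (4) follow from (D4).

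Those steps are sound. Your simple-function approximation for $f(\varphi)=f(\varphi^{q})$ is more self-contained than the paper's unproved appeal to a measure-preserving $T$ with $\varphi=\varphi^{q}\circ T$. One adjustment: apply the part (1) representation to the step functions $\varphi_n^{q}$ before letting $n\to\infty$. For unbounded $\varphi$, the function $\varphi^{q}$ is itself unbounded and so does not lie in $\hat{\mathcal{C}}$.

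\textbf{The gap: the sign of $\nu^{*}$ in parts (1) and (2).} Comparing $\mathbf{1}_{(s,1]}$ with $0$ only gives $\int_{s}^{1}\nu^{*}\,\mathrm{d}t\geq 0$, and this is genuinely too weak. For example, $\nu=-\mathbf{1}_{[0,1/4]}+\tfrac{5}{3}\mathbf{1}_{(1/4,1]}$ integrates to $1$ and has all tail integrals nonnegative.

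Your fallback of weakening the constraint defining $\mathcal{W}_{5}$ is therefore not available. It would not prove the stated result, and it would make the ``if'' direction false. For this $\nu$, the functional $g\mapsto\int_{0}^{1}g\,\nu\,\mathrm{d}t$ gives $f(\mathbf{1}_{(0,1]})=1<\tfrac{5}{4}=f(\mathbf{1}_{(1/4,1]})$, although $\mathbf{1}_{(0,1]}\geq\mathbf{1}_{(1/4,1]}$, so (D2) fails.

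\textbf{The missing idea.} (D2) may be applied to \emph{any} two pointwise-ordered elements of $\hat{\mathcal{C}}$, not only against $0$, even though their difference leaves the cone. For $0\leq c<d\leq 1$, both $\mathbf{1}_{(c,1]}$ and $\mathbf{1}_{(d,1]}$ are non-decreasing and left-continuous, and $\mathbf{1}_{(c,1]}\geq\mathbf{1}_{(d,1]}$. So (D2) together with the representation on the cone gives $\int_{c}^{1}\nu^{*}\,\mathrm{d}t\geq\int_{d}^{1}\nu^{*}\,\mathrm{d}t$, that is, $\int_{c}^{d}\nu^{*}\,\mathrm{d}t\geq 0$ for every interval. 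The Lebesgue differentiation theorem then yields $\nu^{*}\geq 0$ a.e. This is exactly the paper's argument, and part (2) inherits the sign condition from part (1).
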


\section{Weighted Risk Quadrangle}
\label{sec:E_quadrangle}

WGRM offers a wide range of potential applications. In this section, however, we concentrate specifically on how it extends the Fundamental Risk Quadrangle framework introduced in \cite{Rockafellar2013}. We begin with a detailed explanation of this foundational quadrangle. Several examples are then provided to illustrate the form and implications of WGRM within the expanded quadrangle framework. Although slight notational or presentational variations may occur across different settings, for consistency we adopt the expression given in Eq.(\ref{eq:weighted_grm}) throughout this section, regardless of the discreteness or continuity of the scenario or the uniqueness of the weighting measure $\mu_{\mathcal{Q}}$.

\subsection{Fundamental Risk Quadrangle Overview}

Recall that an overview of the FRQ is illustrated in Figure \ref{fig:quadrangle}. Starting from the upper-left corner of the quadrangle, $\mathcal{R}$ denotes a measure of risk, which quantifies the overall risk inherent in a random variable $X$ (representing loss or cost) by assigning it a numerical value. Such measures are frequently employed in the constraints of optimization problems or used as the objective function. For a constant $c \in \mathbb{R}$, the implicit constraint implied by the uncertain condition $X \leq c$ can be replaced by the explicit inequality $\mathcal{R}(X) \leq c$. This substitution leverages a concrete numerical metric $\mathcal{R}(X)$ to offset the uncertain outcomes, thereby eliminating the uncertainty associated with $X$ and yielding a well-defined constraint.  

Formally, we define $\mathcal{R}$ as a regular measure of risk if $\mathcal{R}$ is a closed convex functional with values in $(-\infty, \infty]$; for any constant $c \in \mathbb{R}$, we have $\mathcal{R}(c) = c$, i.e., the risk of a deterministic cost $c$ equals $c$ itself; for any non-constant random variable $X$, we have $\mathcal{R}(X) > \mathbb{E}[X]$, i.e., the risk measure exhibits risk aversion by assigning a value greater than the expected loss of a non-deterministic $X$.  

In the upper-right corner of the quadrangle, $\mathcal{D}$ denotes a measure of deviation, which quantifies the nonconstancy, i.e., uncertainty inherent in the random variable $X$. Classic examples include the variance and standard deviation. Formally, a regular measure of deviation is defined as a closed convex functional taking values in $[0, \infty]$ such that for any constant $c \in \mathbb{R}$, we have $\mathcal{D}(c) = 0$, while for any non-constant random variable $X$, it holds that $\mathcal{D}(X) > 0$. Notably, symmetry is not imposed as a general requirement. In other words, deviation measures where $\mathcal{D}(X) \neq \mathcal{D}(-X)$ may be of greater interest in practical applications, since downside risk is somewhat always more annoying than upside risk.

In the lower-left corner of the quadrangle, $\mathcal{V}$ denotes a measure of regret, which quantifies the dissatisfaction associated with the potential outcomes—positive, zero, or negative.
Formally, a regular measure of regret is defined as a closed convex functional with values in $(-\infty, \infty]$ that satisfies $\mathcal{V}(0) = 0$; and for any non-zero random variable $X$, it holds that $\mathcal{V}(X) > \mathbb{E}[X]$.

Regret measures are naturally connected to utility measures $\mathcal{U}$, another central concept in decision-making under uncertainty. Specifically, if $X$ is viewed as a loss, then $-X$ corresponds to a gain, and the two concepts are interconvertible through the relation $\mathcal{V}(X) = -\mathcal{U}(-X)$. Equivalently, if $Y$ denotes a gain, the inverse conversion holds: $\mathcal{U}(Y) = -\mathcal{V}(-Y)$.  

In the lower-right corner of the quadrangle, $\mathcal{E}$ denotes a measure of error, which quantifies the nonzeroness of a random variable. Formally, a regular measure of error is defined as a closed convex functional with values in $[0, \infty]$ that satisfies $\mathcal{E}(0) = 0$, and $\mathcal{E}(X) > 0$ for any non-zero random variable $X$; moreover, for any sequence of random variables $\{X_k\}_{k=1}^{\infty}$, if $\lim\limits_{k \to \infty} \mathcal{E}(X_k) = 0$, then we have $\lim\limits_{k \to \infty} \mathbb{E}[X_k] = 0$.

In estimation problems, particularly in regression where a loss random variable $Y$ is approximated by a function $f(X_1, \dots, X_N)$ of other random variables, $\mathcal{E}$ serves as a metric to evaluate the magnitude of the prediction error $Z_f = Y - f(X_1, \dots, X_N)$, thereby assessing the quality of the estimation. As with $\mathcal{D}$, asymmetric measures of error, where $\mathcal{E}(X) \neq \mathcal{E}(-X)$, often warrant greater attention.
\begin{align}
    \label{relation1}
    \mathcal{R}(X) &= \mathbb{E}[X] + \mathcal{D}(X), \quad
    \mathcal{D}(X) = \mathcal{R}(X) - \mathbb{E}[X]; \\
    \label{relation2}
    \mathcal{V}(X) &= \mathbb{E}[X] + \mathcal{E}(X), \quad
    \mathcal{E}(X) = \mathcal{V}(X) - \mathbb{E}[X]; \\
    \label{relation3}
    \mathcal{R}(X) &= \min_{c} \{c + \mathcal{V}(X-c) \}, \quad
    \mathcal{D}(X) = \min_{c} \{ \mathcal{E}(X-c) \}; \\
    \label{relation4}
    \arg\min_{c}& \{ c + \mathcal{V}(X - c) \} = \mathcal{S}(X) = \arg\min_{c} \{ \mathcal{E}(X - c) \}.
\end{align}

The four corner elements of the quadrangle are not independent; rather, they interact and are interconnected through Eqs.(\ref{relation1})–(\ref{relation4}). Specifically, it is noted in Figure \ref{fig:quadrangle} that bidirectional arrows connect the risk measure $\mathcal{R}$ to the deviation measure $\mathcal{D}$, and the regret measure $\mathcal{V}$ to the error measure $\mathcal{E}$—their conversion relations are given by Eqs.(\ref{relation1}) and (\ref{relation2}), respectively. In contrast, there is no direct equivalent conversion relation between $\mathcal{R}$ and $\mathcal{V}$, nor between $\mathcal{D}$ and $\mathcal{E}$; instead, these pairs are linked through $\mathcal{S}$, the statistic located at the center of the quadrangle.  

Concretely, for a given error measure $\mathcal{E}$ and random variable $X$, we seek a constant $c \in \mathbb{R}$ that minimizes $\mathcal{E}(X - c)$. The minimum value attained by this error measure is exactly the deviation measure $\mathcal{D}(X)$ of $X$, and the constant $c$ that achieves this minimum is precisely the statistic $\mathcal{S}(X)$ of $X$, with $\mathcal{S}(X+d) = \mathcal{S}(X) + d$ for any $d \in \mathbb{R}$. This relationship is reflected in the right-hand segments of Eqs.(\ref{relation3}) and (\ref{relation4}). The relationship between $\mathcal{R}$ and $\mathcal{V}$ follows the same logic, as illustrated in the left-hand segments of Eqs.(\ref{relation3}) and (\ref{relation4}).  

For expositional brevity, we refer to the five quantities (i.e., $\mathcal{R}$, $\mathcal{D}$, $\mathcal{V}$, $\mathcal{E}$, and $\mathcal{S}$) in Figure \ref{fig:quadrangle} that satisfy Eqs.(\ref{relation1})–(\ref{relation4}) a quadrangle quartet $(\mathcal{R}, \mathcal{D}, \mathcal{V}, \mathcal{E})$ associated with the statistic $\mathcal{S}$. If the four measures (i.e., $\mathcal{R}$, $\mathcal{D}$, $\mathcal{V}$, and $\mathcal{E}$) within this quartet are all regular, we term this structure a regular quadrangle quartet. In what follows, our primary focus will be on regular quadrangle quartets.

\subsection{Weighted Risk Quadrangle}

While the quadrangle framework ingeniously integrates the five quantities into a unified structure, it suffers from a notable limitation that it is confined to various measures defined under a single probability measure. With the advancement of risk management technologies and the growing complexity of real-world stochastic environments, this single-measure setting has become  restrictive. Thus, in this subsection, we build on the WGRM proposed earlier to further expand this quadrangle framework, which is denoted as Weighted Risk Quadrangle (WRQ), enabling it to accommodate multi-measure and multi-scenario risk modeling.

To avoid notational confusion, we denote the measures of the original FRQ under a single probability measure as $\mathcal{R}_{P}, \mathcal{D}_{P}, \mathcal{V}_{P}, \mathcal{E}_{P}, \mathcal{S}_{P}$, while using $\mathcal{R}_{\mathcal{Q}}, \mathcal{D}_{\mathcal{Q}}, \mathcal{V}_{\mathcal{Q}}, \mathcal{E}_{\mathcal{Q}}, \mathcal{S}_{\mathcal{Q}}$ to represent the corresponding measures within WRQ under a family of probability measures. Leveraging the results from Section \ref{sec:WeightedGRM}, we naturally adopt the proposed WGRM as the risk measure in the WRQ. Specifically, $\mathcal{R}_{\mathcal{Q}}(X) = \Psi(X|\mathcal{Q}) = \int_{\mathcal{Q}} \Psi(X|P) \,\mathrm{d}\mu_{\mathcal{Q}}(P)$. Under the weight $\mu_{\mathcal{Q}}$ defined, the risk quadrangle can be extended to a multi-scenario setting.

Prior to this extension, we first observe that the relationships in Eqs.(\ref{relation1})–(\ref{relation4}) also need to be extended to a multi-scenario context. In the original framework, the expectation operator $\mathbb{E}[X]$ plays a crucial bridging role. However, in the multi-scenario setting, this operator should also be extended to a multi-scenario expectation, defined as $\mathbb{E}_\mathcal{Q}[X] = \int_\mathcal{Q} \mathbb{E}_{P}[X] \mathrm{d} \mu_\mathcal{Q}(P)$. Then we can carry out a parallel generalization of Eqs.(\ref{relation1})–(\ref{relation4}) in the multi-scenario context as follows.
\begin{align}
    \label{Erelation1}
    \mathcal{R}_{\mathcal{Q}} (X) &= \mathbb{E}_{\mathcal{Q}} [X] + \mathcal{D}_{\mathcal{Q}} (X), \quad
    \mathcal{D}_{\mathcal{Q}} (X) = \mathcal{R}_{\mathcal{Q}} (X) - \mathbb{E}_{\mathcal{Q}} [X]; \\
    \label{Erelation2}
    \mathcal{V}_{\mathcal{Q}} (X) &= \mathbb{E}_{\mathcal{Q}}[X] + \mathcal{E}_{\mathcal{Q}} (X), \quad
    \mathcal{E}_{\mathcal{Q}} (X) = \mathcal{V}_{\mathcal{Q}} (X) - \mathbb{E}_{\mathcal{Q}} [X]; \\
    \label{Erelation3}
    \mathcal{R}_{\mathcal{Q}} (X) &= \min_{c} \{c + \mathcal{V}_{\mathcal{Q}} (X-c) \}, \quad
    \mathcal{D}_{\mathcal{Q}} (X) = \min_{c} \{ \mathcal{E}_{\mathcal{Q}} (X-c) \}; \\
    \label{Erelation4}
    \arg\min_{c}& \{ c + \mathcal{V}_{\mathcal{Q}} (X - c) \} = \mathcal{S}_{\mathcal{Q}} (X) = \arg\min_{c} \{ \mathcal{E}_{\mathcal{Q}} (X - c) \}.
\end{align}

The expansion implied by Eqs.(\ref{Erelation1})–(\ref{Erelation4}) requires theoretical justifications. Inspired by the Mixing Theorem by \cite{Rockafellar2013}, we further propose the following theorem.

\begin{theorem}
\label{theo:multi_quadrangle}
    Under the weighting measure $\mu_{\mathcal{Q}}$ defined within $\mathcal{R}_{\mathcal{Q}}(X) = \Psi(X|\mathcal{Q}) = \int_{\mathcal{Q}} \Psi(X|P) \,\mathrm{d}\mu_{\mathcal{Q}}(P)$, let $(\mathcal{R}_{P} = \Psi(\cdot|P), \mathcal{D}_{P}, \mathcal{V}_{P}, \mathcal{E}_{P})$ denote a regular quadrangle quartet with statistic $\mathcal{S}_{P}$ for each $P \in \mathcal{Q}$. A muti-scenario quadrangle quartet  $(\mathcal{R}_{\mathcal{Q}}, \mathcal{D}_{\mathcal{Q}}, \mathcal{V}_{\mathcal{Q}}, \mathcal{E}_{\mathcal{Q}})$ with statistic $S_{\mathcal{Q}}$ is generated by  
    \begin{equation}
    \label{eq:multi_quadrangle}
        \begin{aligned}
            \mathcal{S}_{\mathcal{Q}} (X) &= \int_\mathcal{Q} \mathcal{S}_{P}(X) \mathrm{d} \mu_\mathcal{Q} (P), \\
            \mathcal{R}_{\mathcal{Q}} (X) &= \int_\mathcal{Q} \mathcal{R}_{P}(X) \mathrm{d} \mu_\mathcal{Q} (P), \\
            \mathcal{D}_{\mathcal{Q}} (X) &= \int_\mathcal{Q} \mathcal{D}_{P}(X) \mathrm{d} \mu_\mathcal{Q} (P), \\
            \mathcal{V}_{\mathcal{Q}} (X) &= \min_{b(P), P \in \mathcal{Q}} \left\{\int_\mathcal{Q} \mathcal{V}_{P}(X-b(P)) \mathrm{d} \mu_\mathcal{Q} (P) \ \middle| \ \int_\mathcal{Q} b(P) \mathrm{d} \mu_\mathcal{Q}(P) = 0 \right\}, \\
            \mathcal{E}_{\mathcal{Q}} (X) &= \min_{b(P), P \in \mathcal{Q}} \left\{\int_\mathcal{Q} \mathcal{E}_{P}(X-b(P)) \mathrm{d} \mu_\mathcal{Q} (P) \ \middle| \ \int_\mathcal{Q} b(P) \mathrm{d} \mu_\mathcal{Q}(P) = 0 \right\},
        \end{aligned}
    \end{equation}
    where $b(\cdot)$ is a functional on $\mathcal{Q}$.
\end{theorem}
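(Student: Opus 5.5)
The plan is to check directly that the five quantities in Eq.~(\ref{eq:multi_quadrangle}) satisfy the weighted relations (\ref{Erelation1})--(\ref{Erelation4}). The argument mirrors the Mixing Theorem of \cite{Rockafellar2013}, with the finite convex combination replaced by integration against $\mu_{\mathcal{Q}}$. Throughout, I will use the single-measure relations (\ref{relation1})--(\ref{relation4}) for each $P\in\mathcal{Q}$ and assume enough measurability and integrability of $P\mapsto \mathcal{R}_P(X),\mathcal{S}_P(X),\ldots$ for the integrals to make sense.

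The first step handles the two horizontal relations, which follow from linearity of the integral. Integrating $\mathcal{R}_P(X)=\mathbb{E}_P[X]+\mathcal{D}_P(X)$ against $\mu_{\mathcal{Q}}$ gives $\mathcal{R}_{\mathcal{Q}}=\mathbb{E}_{\mathcal{Q}}+\mathcal{D}_{\mathcal{Q}}$, which is (\ref{Erelation1}). For (\ref{Erelation2}), I use $\mathcal{V}_P(Y)=\mathbb{E}_P[Y]+\mathcal{E}_P(Y)$ with $Y=X-b(P)$ and the constraint $\int b\,\mathrm{d}\mu_{\mathcal{Q}}=0$. This gives
\[
\int_{\mathcal{Q}}\mathcal{V}_P(X-b(P))\,\mathrm{d}\mu_{\mathcal{Q}}=\mathbb{E}_{\mathcal{Q}}[X]+\int_{\mathcal{Q}}\mathcal{E}_P(X-b(P))\,\mathrm{d}\mu_{\mathcal{Q}}.
\]
The two minimization problems therefore differ by the constant $\mathbb{E}_{\mathcal{Q}}[X]$ and have the same minimizers $b$.

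The core step is (\ref{Erelation3})--(\ref{Erelation4}). Fix $c$ and note that $c+\mathcal{V}_{\mathcal{Q}}(X-c)$ is the infimum over $b$ with $\int b\,\mathrm{d}\mu_{\mathcal{Q}}=0$ of $\int (c+\mathcal{V}_P(X-c-b(P)))\,\mathrm{d}\mu_{\mathcal{Q}}$. Substitute $c(P)=c+b(P)$. As $c$ ranges over $\mathbb{R}$ and $b$ over the constrained functions, $c(P)$ ranges over all integrable functions on $\mathcal{Q}$, with $c=\int c(P)\,\mathrm{d}\mu_{\mathcal{Q}}$. Hence
\[
\inf_c\{c+\mathcal{V}_{\mathcal{Q}}(X-c)\}=\inf_{c(\cdot)}\int_{\mathcal{Q}}\{c(P)+\mathcal{V}_P(X-c(P))\}\,\mathrm{d}\mu_{\mathcal{Q}}(P).
\]
The joint infimum decouples pointwise. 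By (\ref{relation3})--(\ref{relation4}) the integrand is at least $\mathcal{R}_P(X)$, with equality at $c(P)=\mathcal{S}_P(X)$. So the value is $\int\mathcal{R}_P(X)\,\mathrm{d}\mu_{\mathcal{Q}}=\mathcal{R}_{\mathcal{Q}}(X)$. It is attained at $c=\int\mathcal{S}_P(X)\,\mathrm{d}\mu_{\mathcal{Q}}=\mathcal{S}_{\mathcal{Q}}(X)$ with $b(P)=\mathcal{S}_P(X)-\mathcal{S}_{\mathcal{Q}}(X)$, so the infima are in fact minima. The same computation on the error side gives $\mathcal{D}_{\mathcal{Q}}(X)=\min_c\mathcal{E}_{\mathcal{Q}}(X-c)$, attained at $\mathcal{S}_{\mathcal{Q}}(X)$. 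It can also be read off from (\ref{Erelation1})--(\ref{Erelation2}), since $c+\mathcal{V}_{\mathcal{Q}}(X-c)=\mathbb{E}_{\mathcal{Q}}[X]+\mathcal{E}_{\mathcal{Q}}(X-c)$.

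I expect the main obstacle to be the argmin identity (\ref{Erelation4}) when $\mathcal{S}_P(X)$ is an interval rather than a point. In that case the argmin of the aggregated problem is the set of $\int c(P)\,\mathrm{d}\mu_{\mathcal{Q}}$ over $\mu_{\mathcal{Q}}$-a.e.\ selections $c(P)\in\mathcal{S}_P(X)$. This is the Aumann integral of $P\mapsto\mathcal{S}_P(X)$, which is how I would interpret $\int\mathcal{S}_P\,\mathrm{d}\mu_{\mathcal{Q}}$. Showing that the two coincide requires a converse: any minimizer $c$ must have an optimal $b$ with $c+b(P)\in\mathcal{S}_P(X)$ a.e. I would prove this by contradiction, since a non-optimal set of positive $\mu_{\mathcal{Q}}$-measure strictly increases the integral. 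I would also invoke a measurable selection theorem, using that $\mathcal{Q}$ is Polish by Remark~\ref{remark:Polish}, to produce the measurable $b$.

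Finally, I will confirm regularity of the quartet. Closedness and convexity are preserved under integration and under the infimal-convolution-type minimization. The properties $\mathcal{R}_{\mathcal{Q}}(c)=c$ and $\mathcal{V}_{\mathcal{Q}}(0)=0$, the latter via $b\equiv0$ together with the lower bound $\mathcal{V}_P\ge\mathbb{E}_P$, are direct. The strict inequalities follow because each $P$ contributes strictly for nonconstant or nonzero $X$.
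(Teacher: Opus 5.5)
Your derivation of (\ref{Erelation1})--(\ref{Erelation4}) is essentially the paper's proof. The horizontal relations follow from linearity of the integral. For the vertical ones, your substitution $c(P)=c+b(P)$ is the paper's argument: fix $c$, bound $c+b(P)+\mathcal{V}_P(X-c-b(P))$ below by $\mathcal{R}_P(X)$, then integrate the constraint. Two things you add are modest refinements: the shortcut $c+\mathcal{V}_{\mathcal{Q}}(X-c)=\mathbb{E}_{\mathcal{Q}}[X]+\mathcal{E}_{\mathcal{Q}}(X-c)$ for the error side, and the Aumann-integral reading of interval-valued statistics. The paper tacitly treats $\mathcal{S}_P(X)$ as a single number.

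The step that would fail is your final paragraph, where you justify the strict inequalities of regularity by saying that each $P$ contributes strictly. For $\mathcal{V}_{\mathcal{Q}}$ and $\mathcal{E}_{\mathcal{Q}}$, the contribution of $P$ is $\mathcal{E}_P(X-b(P))$, not $\mathcal{E}_P(X)$. The scenario-dependent shift can therefore cancel $X$ one scenario at a time.

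Here is a counterexample. Take $\Omega=[0,1]$, let $P_1,P_2$ be uniform on $[0,\tfrac12]$ and $(\tfrac12,1]$, set $\mu_{\mathcal{Q}}(P_1)=\mu_{\mathcal{Q}}(P_2)=\tfrac12$, and let $X=\mathbf{1}_{[0,1/2]}-\mathbf{1}_{(1/2,1]}$. With $b(P_1)=1$ and $b(P_2)=-1$ you get $\int b\,\mathrm{d}\mu_{\mathcal{Q}}=0$ and $X-b(P_i)=0$ $P_i$-a.s. Hence $\mathcal{E}_{\mathcal{Q}}(X)=0$ and $\mathcal{V}_{\mathcal{Q}}(X)=\mathbb{E}_{\mathcal{Q}}[X]=0$, although $X\neq0$ under every scenario. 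Likewise $\mathcal{D}_{\mathcal{Q}}(X)=0$ for this non-constant $X$.

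In the single-measure Mixing Theorem this cannot happen, because $X=B_k$ a.s.\ for all $k$ forces all $B_k$ to be one constant. With measures that have different null sets, that step breaks.

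The theorem as stated only asserts a quadrangle quartet, i.e.\ (\ref{Erelation1})--(\ref{Erelation4}), so you should simply drop the regularity claim. The paper's proof ends with the same unsupported ``straightforward to check'' sentence. If you want regularity, add a hypothesis such as mutual equivalence of the measures in $\mathcal{Q}$. Under that hypothesis, $X=b(P)$ $P$-a.s.\ for $\mu_{\mathcal{Q}}$-a.e.\ $P$ forces $b$ to be a.e.\ constant, and the constraint then makes it $0$. Note also that closedness is not automatically inherited under the partial minimization in $b$; it would need a level-boundedness argument.
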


So far, we have presented the theorem formulation of the WRQ. Next, we provide several relevant examples. Notably, due to the equivalent conversion relationships in Eqs.(\ref{Erelation1}) and (\ref{Erelation2}), as long as we know one of $\mathcal{R}_{\mathcal{Q}}$ and $\mathcal{D}_{\mathcal{Q}}$, and  one of $\mathcal{V}_{\mathcal{Q}}$ and $\mathcal{E}_{\mathcal{Q}}$, we can derive the entire quartet $(\mathcal{R}_{\mathcal{Q}}, \mathcal{D}_{\mathcal{Q}}, \mathcal{V}_{\mathcal{Q}}, \mathcal{E}_{\mathcal{Q}})$. We can then further obtain $\mathcal{S}$ via Eq.(\ref{Erelation3}), thereby constructing the complete quadrangle.  

\begin{example}
\label{ex:quadragle}
    Value-at-Risk (VaR, or quantile) and Expected Shortfall (ES, or Conditional VaR/CVaR) are both common tail-based risk metrics that have gained substantial significance in modern risk management frameworks \citep{BCBS2016, McNeil2015}. However, the latter is often preferred over the former because it accounts for tail risk and satisfies coherency. Thus, in this example, for each $P \in \mathcal{Q}$, we set $\mathcal{R}_{P}(X) = \text{ES}_{\alpha}^{P}(X)$, where $\text{ES}_{\alpha}^{P}(X)$ denotes the $\alpha$-level Expected Shortfall of $X$ under probability measure $P$. For the error measure, we adopt the Koenker-Bassett error with appropriate adjustments to ensure that it projects to the desired $\mathcal{D}_{P}(X)$. Specifically, we define $\mathcal{E}_{P}(X) = \mathbb{E}_{P}\left[\frac{\alpha}{1-\alpha} X_{+} + X_{-}\right]$, where $X_{+} = \max \{0, X\}$ and $X_{-} = \max\{0, -X\}$. Using the conversion relations in Eqs.(\ref{Erelation1})–(\ref{Erelation4}), we can derive the remaining quantities of the single-measure quadrangle: $\mathcal{D}_{P}(X) = \text{ES}_{\alpha}^{P}(X - \mathbb{E}_{P}[X])$, $\mathcal{V}_{P}(X) = \frac{1}{1-\alpha} \mathbb{E}_{P}[X_{+}]$, and $\mathcal{S}_{P}(X) = \text{VaR}_{\alpha}^{P}(X)$. Leveraging Theorem \ref{theo:multi_quadrangle}, we immediately obtain the corresponding WRQ:
    \begin{equation*}
	\begin{aligned}
		\mathcal{S}_{\mathcal{Q}}(X) &= \int_\mathcal{Q} \text{VaR}_{\alpha}^{P}(X) \mathrm{d} \mu_\mathcal{Q} (P), \\
		\mathcal{R}_{\mathcal{Q}} (X) &= \int_\mathcal{Q} \text{ES}_{\alpha}^{P}(X) \mathrm{d} \mu_\mathcal{Q} (P), \\
        \mathcal{D}_{\mathcal{Q}} (X) &= \int_\mathcal{Q} \text{ES}_{\alpha}^{P}(X - \mathbb{E}_\mathcal{Q}[X]) \mathrm{d} \mu_\mathcal{Q} (P), \\
        \mathcal{V}_{\mathcal{Q}} (X) &= \min_{b(P), P \in \mathcal{Q}} \left\{\int_\mathcal{Q} \frac{1}{1-\alpha} \mathbb{E}_{P}[(X-b(P))_{+}] \mathrm{d} \mu_\mathcal{Q} (P) \ \middle|\ \int_\mathcal{Q} b(P) \mathrm{d} \mu_\mathcal{Q}(P) = 0 \right\}, \\
        \mathcal{E}_{\mathcal{Q}} (X) &= \min_{b(P), P \in \mathcal{Q}} \left\{\int_\mathcal{Q} \mathbb{E}_{P}\left[\frac{\alpha}{1-\alpha} (X-b(P))_{+} + (X - b(P))_{-}  \right] \mathrm{d} \mu_\mathcal{Q} (P) \ \middle| \ \int_\mathcal{Q} b(P) \mathrm{d} \mu_\mathcal{Q}(P) = 0 \right\}.
    \end{aligned}
    \end{equation*}
    
    For appropriately chosen values of the $\alpha$-level, each individual quadrangle quartet $(\mathcal{R}_{P}, \mathcal{D}_{P}, \mathcal{V}_{P}, \mathcal{E}_{P})$ corresponding to a single probability measure $P \in \mathcal{Q}$ is regular. Thus, the resulting multi-scenario quadrangle quartet $(\mathcal{R}_{\mathcal{Q}}, \mathcal{D}_{\mathcal{Q}}, \mathcal{V}_{\mathcal{Q}}, \mathcal{E}_{\mathcal{Q}})$ is also regular. 
\end{example}
\begin{remark}
    Notably, in Example \ref{ex:quadragle}, neither $\mathcal{D}_{\mathcal{Q}}(X)$ nor $\mathcal{E}_{\mathcal{Q}}(X)$ is necessarily symmetric. This asymmetry inherits from the single-measure components. The $\text{ES}_{\alpha}^{P}$ operator in $\mathcal{D}_{P}$ preserves the asymmetry of asymmetric distributions. For any random variable $X$ , it holds that $\text{ES}_{\alpha}^{P}(-X) = \mathbb{E}_{P}\left[-X \mid -X \geq \text{VaR}_{\alpha}^{P}(-X)\right] = -\mathbb{E}_{P}\left[X \mid X \leq \text{VaR}_{1-\alpha}^{P}(X)\right]$, which shows that $\text{ES}_{\alpha}^{P}(-X) \neq -\text{ES}_{\alpha}^{P}(X)$ in general, thus directly introducing asymmetry into $\mathcal{D}_{P}$. Additionally, the single-measure error measure $\mathcal{E}_{P}(X)$ assigns distinct weights to the positive and negative parts of $X$; specifically, a weight of $\frac{\alpha}{1-\alpha}$ to $X_{+}$ and a weight of $1$ to $X_{-}$. This weighted distinction between $X_{+}$ and $X_{-}$ inherently makes $\mathcal{E}_{P}(X)$ asymmetric. Crucially, this asymmetry carries over to the weighted aggregation.
\end{remark}
\section{Optimization under WGRM and WRQ}
\label{sec:Optimization_WGRM}
This section presents a discussion on the connection between the WGRM, the WRQ, and optimization problems. In a typical optimization setting, the random variable $X$, representing loss, generally depends on a decision vector $\mathbf{x} \in \mathcal{G} \subseteq \mathbb{R}^{m}$, where $\mathcal{G}$ denotes the feasible region. The objective is to determine an optimal decision vector $\mathbf{x} = (x_{1}, \dots,x_{m})^T$ within $\mathcal{G}$ that optimizes an objective function involving $X(\mathbf{x})$.
While the above theoretical framework accommodates both discrete and continuous scenarios, discrete scenarios are predominant in most practical optimization applications. Consequently, Eq.(\ref{eq:multi_quadrangle}) can be reformulated for a discrete set $\mathcal{Q} = \{P_{i} \in \mathcal{P} \mid i=1,\cdots, n \}$ as follows.
\begin{equation*}
\begin{aligned}
	\mathcal{S}_{\mathcal{Q}}(X(\mathbf{x})) &= \sum_{i = 1}^{n} \mathcal{S}_{P_{i}}(X(\mathbf{x})) \cdot \mu_{\mathcal{Q}}(P_{i}), \\
    \mathcal{R}_{\mathcal{Q}}(X(\mathbf{x})) &= \sum_{i = 1}^{n} \mathcal{R}_{P_{i}}(X(\mathbf{x})) \cdot \mu_{\mathcal{Q}}(P_{i}), \\
    \mathcal{D}_{\mathcal{Q}}(X(\mathbf{x})) &= \sum_{i = 1}^{n} \mathcal{D}_{P_{i}}(X(\mathbf{x})) \cdot \mu_{\mathcal{Q}}(P_{i}), \\
    \mathcal{V}_{\mathcal{Q}}(X(\mathbf{x})) &= \min_{\{b(P_{i})\}_{i=1}^{n}} \left\{\sum_{i = 1}^{n} \mathcal{V}_{P_{i}}(X(\mathbf{x})-b(P_{i})) \cdot \mu_{\mathcal{Q}}(P_{i}) \ \middle| \ \sum_{i = 1}^{n} b(P_{i}) \cdot \mu_{\mathcal{Q}}(P_{i}) =0  \right\}, \\
    \mathcal{E}_{\mathcal{Q}}(X(\mathbf{x})) &= \min_{\{b(P_{i})\}_{i=1}^{n}} \left\{\sum_{i = 1}^{n} \mathcal{E}_{P_{i}}(X(\mathbf{x})-b(P_{i})) \cdot \mu_{\mathcal{Q}}(P_{i}) \ \middle | \ \sum_{i = 1}^{n} b(P_{i}) \cdot \mu_{\mathcal{Q}}(P_{i}) =0  \right\}.
\end{aligned}
\end{equation*}

A straightforward observation is that the expressions for $\mathcal{V}_{\mathcal{Q}}$ and $\mathcal{E}_{\mathcal{Q}}$ are inherently minimization problems. 
While the proof of Theorem \ref{theo:multi_quadrangle} directly indicates that $b(P_{i}) = \mathcal{S}_{P_{i}}(X(\mathbf{x})) - \sum_{j = 1}^{n} \mathcal{S}_{P_{j}}(X(\mathbf{x})) \cdot \mu_{\mathcal{Q}}(P_{j})$, which in turn allows us to derive an explicit expression for $\mathcal{V}_{\mathcal{Q}}(X(\mathbf{x}))$, practical implementation may favor numerical solutions via linear programming. This is particularly advantageous when confronting issues such as the complexity and nonlinearity of the individual functional $\mathcal{V}_{P_{i}}$, which makes linear programming more computationally tractable. Specifically, under a regular WRQ, each $\mathcal{V}_{P_{i}}$ is a closed convex functional. Then there exists a family of linear functions $\phi_{ik}$, where $k \in K_{i}$ and $K_{i}$ denotes an index set, such that $\mathcal{V}_{P_{i}}(X(\mathbf{x})) = \sup\limits_{k \in K_{i}} \phi_{ik}(X(\mathbf{x}))$. Thus,  when $X(\mathbf{x})$ is linear in $\mathbf{x}$, by introducing auxiliary variables $t_{i}$, the optimization problem in $\mathcal{V}_{\mathcal{Q}}(X(\mathbf{x}))$ can be transformed into the following linear program:
\begin{equation*}
\begin{aligned}
	\min_{\{b(P_{i})\}_{i=1}^{n}}  \quad &\sum_{i=1}^{n}  \mu_{\mathcal{Q}}(P_{i}) \cdot t_{i} \\
    \text{s.t.} \quad &t_{i} \geq  \phi_{ik}(X(\mathbf{x}) - b(P_{i})), \quad \forall\, i=1,\dots,n, \, k \in K_{i}, \\
    & \sum_{i=1}^{n} b(P_{i}) \cdot \mu_{\mathcal{Q}}(P_{i}) = 0, \\
    & b(P_{i}) \in \mathbb{R}, \quad \forall\, i =1,\dots,n.
\end{aligned}
\end{equation*}

Another set of optimization problems inherent to the WRQ is encapsulated in Eq.(\ref{Erelation3}). Taking the relationship between $\mathcal{R}_{\mathcal{Q}}$ and $\mathcal{V}_{\mathcal{Q}}$ as an example, a typical risk management optimization problem might aim to minimize $\mathcal{R}_{\mathcal{Q}}(X(\mathbf{x}))$ with respect to the decision vector $\mathbf{x}$. Notably, leveraging the key relation $\mathcal{R}_{\mathcal{Q}} (X(\mathbf{x})) = \min\limits_{c \in \mathbb{R}} \left\{c + \mathcal{V}_{\mathcal{Q}} (X(\mathbf{x}) - c) \right\}$, the constraint $\mathcal{R}_{\mathcal{Q}}(X(\mathbf{x})) \leq d$ for any $d \in \mathbb{R}$ is equivalent to the existence of some $c_{0} \in \mathbb{R}$ such that $c_{0} + \mathcal{V}_{\mathcal{Q}} (X(\mathbf{x}) - c_{0}) \leq d$.
Consequently, the original objective function $\min\limits_{\mathbf{x}} \mathcal{R}_{\mathcal{Q}}(X(\mathbf{x}))$ is equivalent to $\min\limits_{\mathbf{x}, c} \left\{c + \mathcal{V}_{\mathcal{Q}} (X(\mathbf{x}) - c) \right\}$, which can be further transformed using the linear programming result above. Namely:
\begin{equation}
\label{eq:quadragle_optimization}
  \begin{minipage}{0.2\linewidth}
    $\begin{aligned}
      \min_{\mathbf{x}}\ &\mathcal{R}_{\mathcal{Q}}(X(\mathbf{x})) \\
      \text{s.t.} \, \ & \mathbf{x} \in \mathcal{G}, \\
      &\text{other constraints}.
    \end{aligned}$
  \end{minipage}
 \longleftrightarrow 
  \begin{minipage}{0.5\linewidth}
    $\begin{aligned}
      \min_{\{b(P_{i})\}_{i=1}^{n}, \mathbf{x}, c} & c + \sum_{i=1}^{n} \mu_{\mathcal{Q}}(P_{i}) \cdot t_{i} \\
      \text{s.t.} \quad & t_{i} \geq \phi_{ik}(X(\mathbf{x}) - c - b(P_{i})), \quad \forall\, i=1,\dots,n, \, k \in K_{i}, \\
      & \sum_{i=1}^{n} b(P_{i}) \cdot \mu_{\mathcal{Q}}(P_{i}) = 0, \\
      & b(P_{i}) \in \mathbb{R}, \quad \forall\, i =1,\dots,n, \\
      & \mathbf{x} \in \mathcal{G}, \\
      &\text{other constraints}.
    \end{aligned}$
  \end{minipage}
\end{equation}

\begin{example}
\label{ex:portfolio_optimization}
    We present here an application of Eq.(\ref{eq:quadragle_optimization}) in portfolio management \citep{Zhu2009, Behera2025}. It is assumed that there are $m$ financial assets in the market, with their returns denoted by random variables $R_{j}$ ($j=1,\dots,m$) and corresponding expected values $\theta_{j}$. We can thus use $-R_{j}$ to represent their potential losses. Our goal is to determine the optimal investment weights $\mathbf{x} = (x_{1},\dots,x_{m})^{T}$ for each assest by incorporating the heterogeneous assessments of $n$ analysts on $m$ financial assets. The total loss of the portfolio can be expressed as $X(\mathbf{x}) = \sum\limits_{j=1}^{m} -x_{j} R_{j}$.

    Following the approach in Example \ref{ex:quadragle}, we adopt $\text{ES}_{\alpha}$ as the portfolio risk metric (see \cite{BCBS2016}). Each analyst forms a view on the probability distribution of $X(\mathbf{x})$ based on their own insight, leading to the individual risk measure $\mathcal{R}_{P_{i}}(X(\mathbf{x})) = \text{ES}_{\alpha}^{P_{i}}(X(\mathbf{x}))= \text{ES}_{\alpha}^{P_{i}}\left(\sum\limits_{j=1}^{m} -x_{j} R_{j}\right)$. Due to differences in analysts’ experience, professional competence, historical performance, and other factors, their perspectives are assigned different weights $\mu_{\mathcal{Q}}(P_{i})$ by the department manager. The corresponding optimization problem is therefore to select the optimal asset allocation weights that minimize the portfolio’s ES, given a target expected portfolio return (denoted by $\theta_{0}$). The objective function of this problem can be written as:
    \begin{align}
    \label{objective}
    \min\limits_{\mathbf{x}} \mathcal{R}_{\mathcal{Q}}(X(\mathbf{x})) = \min\limits_{\{x_{j}\}_{j=1}^{m}} \sum\limits_{i=1}^{n} \text{ES}_{\alpha}^{P_{i}}\left(\sum\limits_{j=1}^{m}-x_{j} R_{j}\right) \cdot \mu_{\mathcal{Q}}(P_{i}).
    \end{align}
    
    However, directly solving (\ref{objective}) using the relation $\text{ES}_{\alpha}^{P_{i}}(X) = \frac{1}{1-\alpha} \int_{\alpha}^{1} \text{VaR}_{\beta}^{P_{i}}(X) \mathrm{d} \beta$ requires a complete estimation of the cumulative distribution function, which is generally unavailable. Notably, within the risk quadrangle, the regret measure corresponding to the risk measure $\mathcal{R}_{P_{i}}(X) = \text{ES}_{\alpha}^{P_{i}}(X)$ is $\mathcal{V}_{P_{i}}(X) = \frac{1}{1-\alpha}\mathbb{E}^{P_{i}}[X_{+}]$. This regret measure, for each Analyst $i$, can be estimated using a series of historical data $X_{ik}$ ($k=1,\dots,T_{i}$), specifically as $\frac{1}{1-\alpha} \cdot \frac{1}{T_{i}} \sum\limits_{k=1}^{T_{i}} (X_{ik})_{+}$. Thus, by introducing auxiliary variables $t_{ik}$, this optimization problem can be transformed into the following linear program:
    \begin{equation}
    \label{eq:portfolio_optimization}
    \begin{aligned}
    	\min_{\{x_{j}\},c,\{b(P_{i})\}, \{t_{ik}\}} \quad & c + \frac{1}{1-\alpha} \sum_{i=1}^{n} \mu_{\mathcal{Q}}(P_{i}) \left( \frac{1}{T_{i}} \sum\limits_{k=1}^{T_{i}} t_{ik} \right)
    	\\
    	\text{s.t.} \quad 
    	& t_{ik} \geq \sum_{j=1}^{m} -x_{j} r_{jk} - c - b(P_{i}),\quad \forall\, k=1,\dots,T_{i},\ \forall\, i=1,\dots,n, \\
    	& t_{ik} \geq 0, \quad \forall\, k=1,\dots,T_{i},\quad \forall\, i=1,\dots,n, \\
    	& \sum_{i=1}^{n} b(P_{i}) \cdot \mu_{\mathcal{Q}}(P_{i}) = 0, \\
        & \sum_{j=1}^{m} x_{j} \theta_{j} = \theta_{0}, \\
        & \sum_{j=1}^{m} x_{j} = 1,\ 0 \leq x_{j} < 1, \quad \forall\, j =1,\dots,m.
    \end{aligned}
    \end{equation}
\end{example}
\begin{remark}
    It is worth noting that in Example \ref{ex:portfolio_optimization}, the specific reflection of each analyst’s subjective probability measure $P_i$ lies in the selection of historical data when estimating $\mathbb{E}_{P_{i}}\left[ \left( -\sum_{j=1}^{m} r_{j}x_{j} - c - b(P_{i}) \right)_{+} \right]$. This estimation $\frac{1}{T_{i}} \sum_{k=1}^{T_{i}} \left( \sum_{j=1}^{m} -x_{j}r_{jk} - c - b(P_{i}) \right)_{+}$ relies on empirical average, where the choice of historical data encodes the analysts’ perspectives on future market dynamics.
    Particularly, the length of the time horizon $T_{i}$ reflects Analyst $i$’s view on the differences between historical and future market conditions. A longer $T_{i}$ indicates a belief that historical patterns are more representative of future trends without much structural change, while a shorter $T_{i}$ signals a perception of greater market divergence from the past. Meanwhile, the selection of specific historical return data $r_{jk}$ embodies an analyst’s judgment on potential future market cycles (e.g., expansion, recession, or stability). For instance, if an analyst believes the future will be characterized by high growth and low inflation without huge structural changes to fundamental market conditions, they will select return data from historical periods with similar macroeconomic contexts and adopt a longer time horizon.
\end{remark}


\section{Empirical Study}
\label{sec:empirical}

In this section, we illustrate the methodology introduced in Example \ref{ex:portfolio_optimization} and apply it to the stock market. Using real-world examples, we conduct a discussion on the advantages and limitations of portfolio optimization leveraging the WGRM and WRQ. Specifically, we select constituent stocks of the NASDAQ 100 Index as the underlying assets, compare portfolio returns under different optimization methods with the index return across both expansion and recession market conditions, and subsequently perform sensitivity analysis by varying model parameters. The data used in this study are publicly available and were retrieved from \url{https://finance.yahoo.com}.

\subsection{Empirical Setup}

We assume there are 4 analysts ($n=4$) with heterogeneous assessments in future macroeconomic trends. Based on their outlooks, each analyst estimates stock returns and informs stock selection in the portfolio, resulting in four distinct scenarios (Scenarios 1–4). The department manager integrates these four analysts' assessments to make the optimal investment decision.
Analyst 1 expects future interest rates to rise. Thus, when using historical data to estimate expected stock returns, they select periods where interest rates exceeded the \textit{median} within the observation window, yielding estimated expected returns $\{\theta^{1}_{j}\}_{j=1}^{m}$. Analyst 2 anticipates future interest rates to fall, so they use data from periods where interest rates were below the median to estimate expected returns, denoted by $\{\theta^{2}_{j}\}_{j=1}^{m}$.
Analyst 3 and Analyst 4 focus primarily on inflation, emphasizing the impact of real economic fluctuations on stock prices. Thus, Analyst 3 predicts rising inflation, with corresponding expected returns denoted by $\{\theta^{3}_{j}\}_{j=1}^{m}$, while Analyst 4 predicts falling inflation, with expected returns $\{\theta^{4}_{j}\}_{j=1}^{m}$. For this study, we use the U.S. 10-year Treasury bond yield as the interest rate proxy since its maturity is more aligned with that of stocks, and use the CPI for inflation measurements. For each analyst $i$, the corresponding portfolio optimization problem is formulated as:
\begin{equation}
\begin{aligned}
	\min_{\{x_{j}\}, c, \{t_{ik}\}}  \quad & c + \frac{1}{(1-\alpha)T_{i}} \sum_{k=1}^{T_{i}} t_{ik}\\
	\text{s.t.} \quad 
	& t_{ik} \geq \sum_{j=1}^{m}-x_{j}r_{jk}-c,\quad \forall\,k=1,\dots,T_{i}, \\
	& t_{ik} \geq 0, \quad \forall\,k=1,\dots,T_{i}, \\
    & \sum_{j=1}^{m} x_{j} \theta^{i}_{j} \geq \theta_{0}, \\
    & \sum_{j=1}^{m} x_{j} =1,\ 0 \leq x_{j} < 1, \forall\,\ j =1,\dots,m.
\end{aligned}
\end{equation}

The manager integrates the four analysts’ assessments, so their expected stock returns are a weighted aggregation of the analysts’ estimates, i.e., $\theta_{j} = \sum\limits_{i=1}^{4} \mu_{\mathcal{Q}}(P_{i}) \theta^{i}_{j}$. For simplicity, we assign equal weights to each analyst’s perspective, i.e., $\mu_{\mathcal{Q}}(P_{i}) = \frac{1}{4}$, adopting a purely synthetic lens to evaluate the framework. Thus, the manager’s optimization problem is exactly the one formulated in Eq.(\ref{eq:portfolio_optimization}).

To conduct a comprehensive assessment, we examine two distinct market regimes, recession and expansion respectively, to evaluate how the manager using the integrated multi-scenario approach performs compared to individual analysts relying on single-scenario frameworks. For the ES metric, we adopt the commonly used level $\alpha = 0.95$. During February and March 2025, the NASDAQ 100 Index experienced a sharp decline, with a cumulative two-month return of $-11.86\%$. Although this episode reflects market turbulence rather than a macroeconomic recession, we refer to it as a recession regime for terminological convenience. For this scenario, we use a total time window of $T = 150$ trading days, spanning from August 23, 2024, to March 31, 2025, and split the data into a training set and a backtesting set using February 1, 2025, as the cutoff date. The target return $\theta_0$ is set to the average daily return of January 2025, calculated as $\theta_0 = 1.64\% / 20$. We construct portfolios using weights estimated from the training set, evaluate their performance using backtesting set data, and compare the results to the actual NASDAQ 100 Index return.

In contrast, September and October 2025 saw robust growth in the index, with a cumulative two-month return of $10.58\%$, designated as the expansion regime. Following a consistent methodology, we again use a 150-trading-day time window (May 12, 2025, to October 30, 2025), splitting the data at September 1, 2025, into two sets. The target return $\theta_0$ is the average daily return of August 2025, computed as $\theta_0 = 3.90\% / 21$, while all other parameters remain identical to the recession regime.

We use these two regimes as benchmark cases, and then conduct sensitivity analysis by adjusting key model parameters to test the framework’s robustness. Specifically, within each regime, we modify the time window to $T = 120$ and $T = 180$ trading days, adjust the ES level to $\alpha = 0.90$ and $\alpha = 0.99$, scale the target return $\theta_0$ by a factor of 2 and 0.5, and finally replace the underlying assets from NASDAQ 100 constituents with S\&P 500 constituents.

Intuitively, since ES explicitly accounts for tail risks, ES-based portfolios exhibit inherent robustness. While this conservative focus on downside protection may moderate returns during market expansions, it should deliver superior resilience during recessions to the broader index. Furthermore, the manager's equal-weighted aggregation of conflicting scenarios (i.e., opposing views on interest rates and inflation) achieves a form of \textit{view diversification}. This prevents overexposure to any single bullish forecast, effectively mitigating the idiosyncratic risk of an isolated misjudgment and lowering overall portfolio volatility. Therefore, to comprehensively evaluate the portfolios beyond raw two-month returns, we incorporate annualized Sharpe and Sortino ratios. By explicitly penalizing total and downside volatility, these metrics properly align with our focus on tail-risk mitigation. The one-year U.S. Treasury yield (3.65\%) serves as the risk-free rate.

\subsection{Baseline Results}

Table \ref{table: baseline} presents the baseline results. During the recession regime, all ES-based portfolios demonstrate excellent downside resilience by delivering positive returns despite a substantial index decline. Although Analyst 1 achieves the highest raw return, the manager's portfolio yields the highest annualized Sharpe and Sortino ratios, generating the highest risk-adjusted return per unit of risk. This indicates the robustness of the WGRM-based portfolios during market distress.
In the expansion regime, the inherent conservatism of ES optimization causes all portfolios to underperform the strongly rallying index. Crucially, however, while Analyst 1's isolated forecast results in a negative return, the manager's portfolio maintains moderate profitability and mid-tier risk-adjusted performance.
This outcome directly underscores the method's inherent trade-off: while the WGRM-based approach may lag during aggressive market expansions, it excels in distressed markets by providing robust downside protection and structurally buffering against the idiosyncratic risk of a single erroneous forecast.

\begin{table}[htbp]
    \centering
    \small
    \caption{Baseline Results.}
    \label{table: baseline}
    \begin{threeparttable}
        \begin{adjustbox}{center}
        \begin{tabular}{lcccccc}
        \toprule
         & \multicolumn{3}{c}{Recession regime} & \multicolumn{3}{c}{Expansion regime} \\
        \cmidrule(lr){2-4} \cmidrule(lr){5-7}
        Portfolio & \begin{tabular}{@{}c@{}}Two-month\\return\end{tabular} & \begin{tabular}{@{}c@{}}Sharpe\\ratio\end{tabular} & \begin{tabular}{@{}c@{}}Sortino\\ratio\end{tabular} & \begin{tabular}{@{}c@{}}Two-month\\return\end{tabular} & \begin{tabular}{@{}c@{}}Sharpe\\ratio\end{tabular} & \begin{tabular}{@{}c@{}}Sortino\\ratio\end{tabular} \\
        \midrule
        Analyst 1 & 5.98\% & 2.574 & 4.387 & -2.81\% & -1.893 & -2.610 \\
        Analyst 2 & 1.85\% & 0.713 & 1.355 & 3.93\% & 1.533 & 3.514 \\
        Analyst 3 & 4.27\% & 1.895 & 3.890 & 6.93\% & 3.307 & 5.224 \\
        Analyst 4 & 4.56\% & 1.723 & 2.584 & 1.67\% & 0.432 & 0.671 \\
        Manager & 5.71\% & 2.809 & 5.336 & 3.28\% & 1.665 & 2.857 \\
        \midrule
        Index two-month return & -11.86\% & & & 10.58\% & & \\
        \bottomrule
        \end{tabular}
        \end{adjustbox}
    \end{threeparttable}
\end{table}

Figure \ref{fig:portfolio_daily_return} visualizes the daily portfolio returns across both market regimes. The manager's portfolio exhibits lower volatility than any single analyst's strategy. By avoiding extreme return fluctuations and structurally reducing both the frequency and magnitude of deep drawdowns, the WGRM-based method confirms its distinctly conservative and downside-protective risk profile.

\begin{figure}[!htbp]
    \centering
    \begin{subfigure}[b]{0.9\textwidth} 
        \centering
        \includegraphics[width=\textwidth]{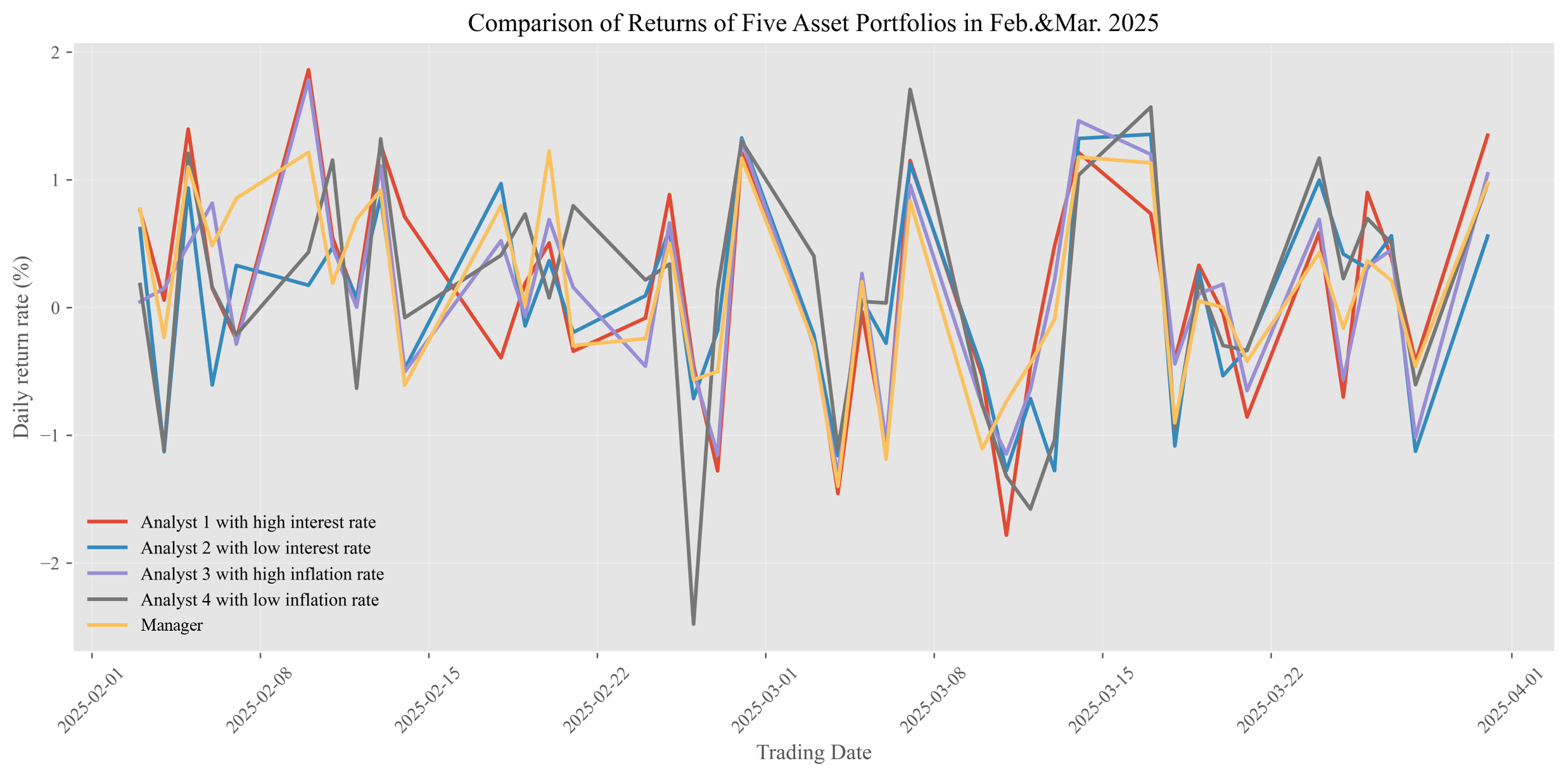}
        \caption{Recession regime}
        \label{fig:recession}
    \end{subfigure}
    \begin{subfigure}[b]{0.9\textwidth}
        \centering
        \includegraphics[width=\textwidth]{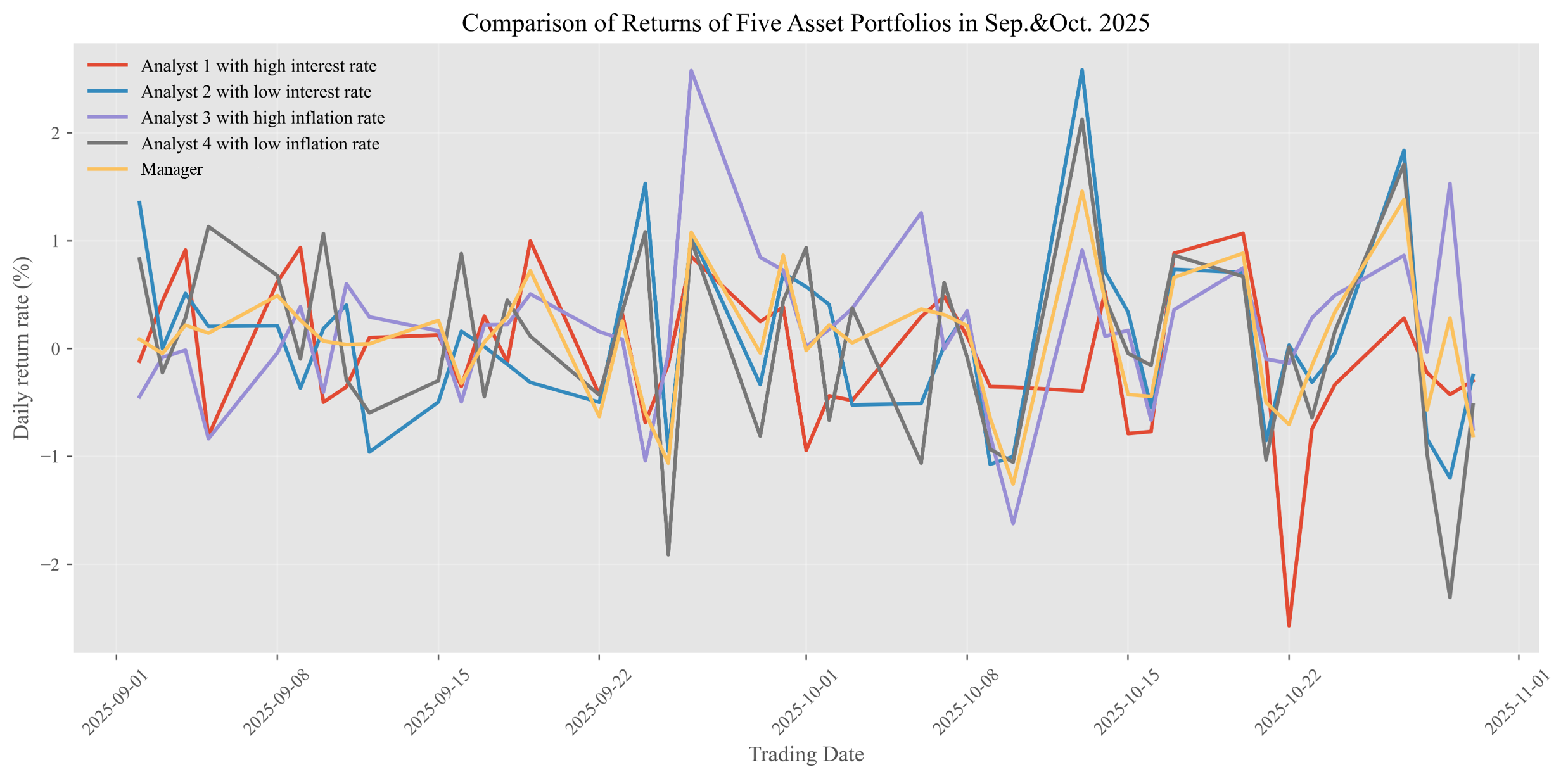}
        \caption{Expansion regime}
        \label{fig:expansion}
    \end{subfigure}
    \caption{Baseline Portfolio Daily Return.}
    \label{fig:portfolio_daily_return}
\end{figure}

\subsection{Sensitivity Analysis}

In this subsection, we conduct a sensitivity analysis to ensure our baseline findings are not artifacts of specific parameter choices. Table \ref{table: Change_Time_Window} reports portfolio performance under alternative time windows ($T=120$ and $T=180$). During the recession regime, the manager's portfolio is more noticeably dragged down by Analyst 2's misjudgment than in the baseline; crucially, however, it still significantly outperforms Analyst 2's isolated strategy. In the expansion regime, the integrated portfolio maintains mid-tier performance similar to the baseline case at $T=120$ but underperforms at $T=180$.

\begin{remark}
	Varying time windows induces performance fluctuations due to the bias-variance tradeoff inherent in historical data estimation: shorter windows increase estimation variance, whereas longer windows may incorporate outdated market fundamentals. 
    However, our primary goal here is not to evaluate the absolute robustness of ES itself, but rather to verify whether the manager's WGRM-based approach consistently maintains its structural advantage across diverse parameter specifications.
\end{remark}

\begin{table}[htbp]
    \centering
    \small
    \caption{Change Time Window.}
    \label{table: Change_Time_Window}
    \begin{threeparttable}
        \begin{adjustbox}{center}
        \begin{tabular}{lcccccc}
        \toprule
         & \multicolumn{3}{c}{Recession regime} & \multicolumn{3}{c}{Expansion regime} \\
        \cmidrule(lr){2-4} \cmidrule(lr){5-7}
        Portfolio & \begin{tabular}{@{}c@{}}Two-month\\return\end{tabular} & \begin{tabular}{@{}c@{}}Sharpe\\ratio\end{tabular} & \begin{tabular}{@{}c@{}}Sortino\\ratio\end{tabular} & \begin{tabular}{@{}c@{}}Two-month\\return\end{tabular} & \begin{tabular}{@{}c@{}}Sharpe\\ratio\end{tabular} & \begin{tabular}{@{}c@{}}Sortino\\ratio\end{tabular} \\
        \midrule
        \textbf{Panel A: Change time window to $T = 120$} \\
        \midrule
        Analyst 1 & 8.38\% & 2.910 & 5.480 & 9.85\% & 3.396 & 5.061 \\
        Analyst 2 & 0.71\% & 0.059 & 0.094 & 10.08\% & 4.321 & 9.064 \\
        Analyst 3 & 3.91\% & 1.675 & 3.529 & 6.93\% & 3.307 & 5.224 \\
        Analyst 4 & 5.64\% & 1.515 & 2.680 & 19.26\% & 5.974 & 16.135 \\
        Manager & 3.00\% & 1.266 & 2.695 & 7.42\% & 3.960 & 7.586 \\
        \midrule
        \textbf{Panel B: Change time window to $T = 180$} \\
        \midrule
        Analyst 1 & 5.20\% & 2.388 & 4.290 & -2.02\% & -1.518 & -2.208 \\
        Analyst 2 & -1.29\% & -0.748 & -1.079 & 2.27\% & 0.841 & 1.420 \\
        Analyst 3 & 4.68\% & 2.154 & 3.653 & 3.92\% & 2.067 & 3.239 \\
        Analyst 4 & 3.56\% & 1.172 & 1.937 & 1.24\% & 0.288 & 0.503 \\
        Manager & 0.26\% & -0.161 & -0.285 & -1.16\% & -0.913 & -1.292 \\
        \bottomrule
        \end{tabular}
        \end{adjustbox}
    \end{threeparttable}
\end{table}

Table \ref{table:Change_Quantile} reports the results when the ES level is adjusted to $\alpha=0.90$ and $\alpha=0.99$. The manager’s portfolio performs consistently with the baseline, except for notable underperformance during the expansion regime at $\alpha=0.99$. The observed underperformance might be attributed to excessive conservatism. This heightened risk aversion likely leads the optimization process to over-allocate to low-volatility, defensive assets. Such assets tend to underperform in bull markets, as they fail to capture the strong upward momentum typically associated with growth-oriented stocks during economic expansions.

\begin{table}[ht]
    \centering
    \small
    \caption{Change Level.}
    \label{table:Change_Quantile}
    \begin{threeparttable}
        \begin{adjustbox}{center}
        \begin{tabular}{lcccccc}
        \toprule
         & \multicolumn{3}{c}{Recession regime} & \multicolumn{3}{c}{Expansion regime} \\
        \cmidrule(lr){2-4} \cmidrule(lr){5-7}
        Portfolio & \begin{tabular}{@{}c@{}}Two-month\\return\end{tabular} & \begin{tabular}{@{}c@{}}Sharpe\\ratio\end{tabular} & \begin{tabular}{@{}c@{}}Sortino\\ratio\end{tabular} & \begin{tabular}{@{}c@{}}Two-month\\return\end{tabular} & \begin{tabular}{@{}c@{}}Sharpe\\ratio\end{tabular} & \begin{tabular}{@{}c@{}}Sortino\\ratio\end{tabular} \\
        \midrule
        \textbf{Panel A: Change level to $\alpha=0.90$} \\
        \midrule
        Analyst 1 & 1.30\% & 0.363 & 0.686 & 1.01\% & 0.223 & 0.324 \\
        Analyst 2 & 1.42\% & 0.435 & 0.884 & 11.43\% & 5.960 & 12.281 \\
        Analyst 3 & 0.04\% & -0.270 & -0.534 & 9.70\% & 5.252 & 10.859 \\
        Analyst 4 & 4.63\% & 1.757 & 2.523 & 2.43\% & 0.853 & 1.515 \\
        Manager & 2.63\% & 1.078 & 2.153 & 8.01\% & 4.579 & 9.388 \\
        \midrule
        \textbf{Panel B: Change level to $\alpha=0.99$} \\
        \midrule
        Analyst 1 & 6.70\% & 2.715 & 5.529 & -2.78\% & -1.866 & -2.565 \\
        Analyst 2 & -0.19\% & -0.382 & -0.869 & 3.70\% & 1.256 & 2.546 \\
        Analyst 3 & 6.96\% & 2.977 & 6.180 & 9.91\% & 5.073 & 11.733 \\
        Analyst 4 & 4.56\% & 1.723 & 2.584 & 0.27\% & -0.138 & -0.216 \\
        Manager & 7.88\% & 3.927 & 7.604 & -2.62\% & -1.471 & -2.618 \\
        \bottomrule
        \end{tabular}
        \end{adjustbox}
    \end{threeparttable}
\end{table}

Table \ref{table:Change_Target_Return} displays the results with target returns scaled by 0.5 and 2. The manager's portfolio aligns with the baseline except during the expansion regime with a doubled target, unexpectedly posting a negative return while all individual analysts remain profitable. This counterintuitive result highlights the tension between aggressive return targets and the WGRM framework's mandate for risk mitigation. Pressured to pursue outsized gains in a market already exceeding historical averages, the integrated method's requirement to balance heterogeneous assessments structurally prevents over concentration in high-momentum assets. Consequently, the optimizer defaults to less volatile, value-oriented stocks, which inherently lag in a momentum-driven bull market.

\begin{table}[ht]
    \centering
    \small
    \caption{Change Target Return.}
    \label{table:Change_Target_Return}
    \begin{threeparttable}
        \begin{adjustbox}{center}
        \begin{tabular}{lcccccc}
        \toprule
         & \multicolumn{3}{c}{Recession regime} & \multicolumn{3}{c}{Expansion regime} \\
        \cmidrule(lr){2-4} \cmidrule(lr){5-7}
        Portfolio & \begin{tabular}{@{}c@{}}Two-month\\return\end{tabular} & \begin{tabular}{@{}c@{}}Sharpe\\ratio\end{tabular} & \begin{tabular}{@{}c@{}}Sortino\\ratio\end{tabular} & \begin{tabular}{@{}c@{}}Two-month\\return\end{tabular} & \begin{tabular}{@{}c@{}}Sharpe\\ratio\end{tabular} & \begin{tabular}{@{}c@{}}Sortino\\ratio\end{tabular} \\
        \midrule
        \textbf{Panel A: Change target return to $\theta \times 2$} \\
        \midrule
        Analyst 1 & 7.68\% & 3.156 & 5.752 & 3.83\% & 1.370 & 2.160 \\
        Analyst 2 & 1.18\% & 0.285 & 0.497 & 7.89\% & 2.326 & 5.134 \\
        Analyst 3 & 7.11\% & 2.501 & 5.829 & 18.24\% & 5.470 & 8.988 \\
        Analyst 4 & 4.52\% & 1.876 & 3.623 & 2.92\% & 0.914 & 1.495 \\
        Manager & 7.98\% & 3.568 & 7.228 & -1.70\% & -0.937 & -1.367 \\
        \midrule
        \textbf{Panel B: Change target return to $\theta / 2$} \\
        \midrule
        Analyst 1 & 6.70\% & 2.715 & 5.529 & -1.49\% & -1.114 & -1.407 \\
        Analyst 2 & -0.16\% & -0.369 & -0.837 & 1.80\% & 0.507 & 0.840 \\
        Analyst 3 & 8.74\% & 3.811 & 7.434 & 8.23\% & 4.998 & 9.160 \\
        Analyst 4 & 4.56\% & 1.723 & 2.584 & 0.71\% & 0.037 & 0.052 \\
        Manager & 7.47\% & 3.762 & 7.699 & 7.24\% & 3.814 & 6.165 \\
        \bottomrule
        \end{tabular}
        \end{adjustbox}
    \end{threeparttable}
\end{table}

Table \ref{table:Change_Underlying_Assets} replicates the baseline and sensitivity analyses using S\&P 500 constituents. Outcomes consistently match, and occasionally surpass, the initial findings. In several cases (e.g., Panels A, D, E, and G), during the recession regime, the manager's portfolio maintains a positive return even in instances where every individual analyst's strategy incurs losses. This demonstrates the resilience of the WGRM framework across different asset universes. Furthermore, compared to the tech-heavy NASDAQ 100, the broader sectoral coverage of the S\&P 500 inherently introduces more defensive and value-oriented stocks. This expanded asset base provides more options for diversification, and also confirms that the WGRM method's structural robustness is not asset-specific, highlighting its superior adaptability to diverse market segments.

\begin{table}[htbp]
    \centering
    \small
    \caption{Change Underlying Assets.}
    \label{table:Change_Underlying_Assets}
    \begin{threeparttable}
        \begin{adjustbox}{center}
        \begin{tabular}{lcccccc}
        \toprule
         & \multicolumn{3}{c}{Recession regime} & \multicolumn{3}{c}{Expansion regime} \\
        \cmidrule(lr){2-4} \cmidrule(lr){5-7}
        Portfolio & \begin{tabular}{@{}c@{}}Two-month\\return\end{tabular} & \begin{tabular}{@{}c@{}}Sharpe\\ratio\end{tabular} & \begin{tabular}{@{}c@{}}Sortino\\ratio\end{tabular} & \begin{tabular}{@{}c@{}}Two-month\\return\end{tabular} & \begin{tabular}{@{}c@{}}Sharpe\\ratio\end{tabular} & \begin{tabular}{@{}c@{}}Sortino\\ratio\end{tabular} \\
        \midrule
        
        \textbf{Panel A: Benchmark case} \\
        \midrule
        Analyst 1 & -1.84\% & -0.839 & -1.089 & 4.28\% & 2.409 & 4.745 \\
        Analyst 2 & -0.06\% & -0.430 & -0.704 & 3.20\% & 1.212 & 2.196 \\
        Analyst 3 & -2.11\% & -1.461 & -2.023 & 7.86\% & 4.332 & 5.640 \\
        Analyst 4 & -4.20\% & -1.893 & -3.147 & 4.39\% & 1.876 & 3.361 \\
        Manager & 0.44\% & -0.144 & -0.186 & 3.44\% & 1.427 & 3.011 \\
        \midrule
        
        \textbf{Panel B: Change time window to $T=120$} \\
        \midrule
        Analyst 1 & 1.03\% & 0.163 & 0.237 & 7.55\% & 2.923 & 4.936 \\
        Analyst 2 & -1.04\% & -3.781 & -6.903 & 10.76\% & 4.608 & 7.351 \\
        Analyst 3 & -0.90\% & -0.571 & -0.905 & 7.86\% & 4.332 & 5.640 \\
        Analyst 4 & -10.43\% & -2.578 & -4.340 & 8.28\% & 2.750 & 4.637 \\
        Manager & 0.09\% & -0.429 & -0.529 & 8.73\% & 4.934 & 7.010 \\
        \midrule
        
        \textbf{Panel C: Change time window to $T=180$} \\
        \midrule
        Analyst 1 & 0.67\% & 0.038 & 0.055 & 3.14\% & 1.278 & 1.881 \\
        Analyst 2 & -1.99\% & -1.556 & -2.499 & 1.39\% & 0.378 & 0.769 \\
        Analyst 3 & -0.14\% & -0.336 & -0.497 & 5.55\% & 3.139 & 4.103 \\
        Analyst 4 & -2.03\% & -1.407 & -2.443 & 4.37\% & 2.037 & 4.150 \\
        Manager & -0.19\% & -0.401 & -0.662 & 3.65\% & 1.724 & 4.287 \\
        \midrule
        
        \textbf{Panel D: Change level to $\alpha=0.90$} \\
        \midrule
        Analyst 1 & -1.84\% & -0.839 & -1.089 & 4.28\% & 2.409 & 4.745 \\
        Analyst 2 & -0.06\% & -0.430 & -0.704 & 4.71\% & 1.856 & 3.117 \\
        Analyst 3 & -2.11\% & -1.461 & -2.023 & 7.86\% & 4.332 & 5.640 \\
        Analyst 4 & -4.20\% & -1.893 & -3.147 & 3.98\% & 2.009 & 4.771 \\
        Manager & 0.64\% & 0.069 & 0.088 & 3.88\% & 1.919 & 4.769 \\
        \midrule
        
        \textbf{Panel E: Change level to $\alpha=0.99$} \\
        \midrule
        Analyst 1 & -1.84\% & -0.839 & -1.089 & 4.28\% & 2.409 & 4.745 \\
        Analyst 2 & -0.06\% & -0.430 & -0.704 & 3.20\% & 1.212 & 2.196 \\
        Analyst 3 & -2.11\% & -1.461 & -2.023 & 7.86\% & 4.332 & 5.640 \\
        Analyst 4 & -4.20\% & -1.893 & -3.147 & 4.39\% & 1.876 & 3.361 \\
        Manager & 0.44\% & -0.144 & -0.186 & 2.48\% & 0.893 & 1.566 \\
        \midrule
        
        \textbf{Panel F: Change target return to $\theta \times 2$} \\
        \midrule
        Analyst 1 & -1.84\% & -0.839 & -1.089 & 7.38\% & 2.843 & 5.052 \\
        Analyst 2 & -0.06\% & -0.430 & -0.704 & 2.19\% & 0.504 & 0.722 \\
        Analyst 3 & -2.03\% & -1.367 & -1.903 & 4.85\% & 2.335 & 3.202 \\
        Analyst 4 & -4.20\% & -1.893 & -3.147 & 3.04\% & 0.837 & 1.174 \\
        Manager & -2.20\% & -1.548 & -2.140 & 6.27\% & 1.664 & 3.766 \\
        \midrule
        
        \textbf{Panel G: Change target return to $\theta / 2$} \\
        \midrule
        Analyst 1 & -1.84\% & -0.839 & -1.089 & 2.35\% & 1.238 & 2.270 \\
        Analyst 2 & -0.06\% & -0.430 & -0.704 & 6.17\% & 2.744 & 6.208 \\
        Analyst 3 & -2.11\% & -1.461 & -2.023 & 7.86\% & 4.332 & 5.640 \\
        Analyst 4 & -4.20\% & -1.893 & -3.147 & 3.70\% & 1.556 & 3.736 \\
        Manager & 1.58\% & 0.739 & 0.947 & 3.30\% & 1.417 & 3.635 \\
        \bottomrule
        \end{tabular}
        \end{adjustbox}
    \end{threeparttable}
\end{table}

In summary, our empirical findings confirm that WGRM-based portfolios exhibit greater conservatism and robustness. While this framework may not generate outstanding raw returns in expansion regimes due to its inherent focus on downside protection, it consistently delivers superior resilience during recessions. Most crucially, it acts as a structural buffer, neutralizing the idiosyncratic risk of isolated misjudgments and preventing the severe consequences of over-relying on a single, flawed market outlook.

\section*{Acknowledgements}
YL acknowledges financial support from the National Natural Science Foundation of China (Grant No. 12401624), The Chinese University of Hong Kong (Shenzhen) University Development Fund (Grant No. UDF01003336) and Shenzhen Science and Technology Program (Grant No. RCBS20231211090814028, JCYJ20250604141203005, 2025TC0010) and is partly supported by the Guangdong Provincial Key Laboratory of Mathematical Foundations for Artificial Intelligence (Grant No. 2023B1212010001). 
YW acknowledges financial support from the Natural Sciences and Engineering Research Council of Canada (RGPIN-2023-04674, DGECR-2023-00454), and the start-up fund at Carleton University. 
YW thanks 
The Chinese University of Hong Kong (Shenzhen) for the kind hospitality during her visit in 2025. 
\bibliographystyle{apalike}
\bibliography{refer}

\appendix
\section*{Appendix: Proofs}

\begin{proof}[\textbf{Proof of Theorem \ref{theo:WeightedGRM_condition_sup}}]
\label{proof:WeightedGRM_condition_sup}
    The ``if" statement can be checked directly. We only focus on the ``only if" statement.
    
    (1) \textbf{Step 1: Dual representation via an auxiliary convex function}. We first define an auxiliary function:
    \begin{align}
    	g(\Phi_{\calQ, X})
    	:&= 
    	f(\Phi_{\calQ, X}) + \delta(\Phi_{\calQ, X}|\mathcal{C}) \in (-\infty,\infty],
    	\quad
    	\Phi_{\calQ, X} \in \mathbb{R}^{n},
        \label{eq:g}
        \\
    	\delta(\Phi_{\calQ, X}|\mathcal{C})
    	&=
    	\begin{cases}
            0, & \text{if}\,\, \Phi_{\calQ, X} \in \mathcal{C}, \\
            \infty, & \text{if}\,\, \Phi_{\calQ, X} \notin \mathcal{C}.
        \end{cases}
        \label{eq:delta}
    \end{align}
    Obviously, $g(\Phi_{\calQ, X})=f(\Phi_{\calQ, X})$ on $\mathcal{C}$. Since $f(\Phi_{\calQ, X})$ satisfies (B1)-(B2), it is easy to verify that $g(\Phi_{\calQ, X})$ also satisfies positive homogeneity, translation invariance, and monotonicity.
    Moreover, we can further verify that $g(\Phi_{\calQ, X})$ is also sub-additive, i.e., $g(\Phi_{\calQ, X}+\Phi_{\calQ, Y}) \leq g(\Phi_{\calQ, X}) + g(\Phi_{\calQ, Y})$, for any $\Phi_{\calQ, X}, \Phi_{\calQ, Y} \in \mathbb{R}^n$. Note that if $\Phi_{\calQ, X}, \Phi_{\calQ, Y} \in \mathcal{C}$, then  the inequality is valid by the comonotonic sub-additivity of $f$. If $\Phi_{\calQ, X}$ or $\Phi_{\calQ, Y} \notin \mathcal{C}$, then the right-hand side of Eq.(\ref{eq:g}) goes to infinity due to $\delta(\Phi_{\calQ, X}|\mathcal{C})$. The positive homogeneity and sub-additivity indicate convexity of $g$.
    Furthermore, since $f(\cdot) < \infty$, we have $\text{dom}\, g :=\{\Phi_{\calQ, X} \in \mathbb{R}^{n} \mid g(\Phi_{\calQ, X}) < \infty\} \neq \emptyset$, i.e., $g(\cdot)$ is proper. From the fact that $f(\cdot)$ is Lipschitz continuous with  respect to the maximum-norm $\|\cdot\|_{\infty}$ and $\mathcal{C}$ is a closed convex set, $g(\cdot)$ is lower semi-continuous (l.s.c.). By Fenchel-Moreau Biconjugation Theorem (Chapter 12, \cite{Rockafellar1970}), it holds that
    \begin{align}
    \label{eq:dual function}
    	g(\Phi_{\calQ, X})
    	=
    	\sup \limits_{\mu_{\mathcal{Q}} \in \mathbb{R}^{n}} \{\left\langle \mu_{\mathcal{Q}},\Phi_{\calQ, X} \right\rangle - g^{*}(\mu_{\mathcal{Q}})\},
    	\quad
    	\Phi_{\calQ, X} \in \mathbb{R}^{n},
    \end{align}
    where $g^{*}:\mathbb{R}^{n} \to (-\infty,\infty]$, $g^{*}(\mu_{\mathcal{Q}}) = \sup\limits_{\Phi_{\calQ, X} \in \mathbb{R}^{n}} \{ \left\langle \mu_{\mathcal{Q}},\Phi_{\calQ, X} \right\rangle - g(\Phi_{\calQ, X})\}$, for $\mu_{\mathcal{Q}} =(\mu_{1},\mu_{2},\dots,\mu_{n})^{T} \in \mathbb{R}^{n}$, is the dual function of $g(\cdot)$.

    \textbf{Step 2: Structure of the conjugate and characterization of the weight set}.
    Next we investigate the characteristic of $\text{dom}\, g^{*}$. First, since $\forall\,k \in \mathbb{R}$, $k \mathbf{1} \in \mathcal{C}$, by translation invariance, we have
    \begin{align*}
    	g^{*}(\mu_{\mathcal{Q}}) 
    	&=
    	\sup\limits_{\Phi_{\calQ, X} \in \mathbb{R}^{n}} \{ \left\langle \mu_{\mathcal{Q}},\Phi_{\calQ, X} \right\rangle - g(\Phi_{\calQ, X})\}
    	\geq
    	\sup\limits_{k \in \mathbb{R}} \{ \left\langle \mu_{\mathcal{Q}},k \mathbf{1} \right\rangle - g(k \mathbf{1})\}
    	=
    	\{ \left\langle \mu_{\mathcal{Q}},\mathbf{1} \right\rangle - g(\mathbf{1})\} \cdot \sup\limits_{k \in \mathbb{R}} k.
    \end{align*}
    If $\left\langle \mu_{\mathcal{Q}},\mathbf{1} \right\rangle - g(\mathbf{1})=\left\langle \mu_{\mathcal{Q}},\mathbf{1} \right\rangle - f(\mathbf{1})=0$, i.e., $\sum_{i=1}^{n} \mu_{i}=1$, we have $g^{*}(\mu_{\mathcal{Q}}) \geq 0$. Alternatively, if $\left\langle \mu_{\mathcal{Q}},\mathbf{1} \right\rangle - g(\mathbf{1})\neq 0$, i.e., $\sum_{i=1}^{n} \mu_{i} \neq 1$, we have $g^{*}(\mu_{\mathcal{Q}}) = \infty$, since $k$ is arbitrary. This indicates that $g(\mu_{\mathcal{Q}})$ is non-negative and that $\text{dom}\, g^{*} \subseteq \{ \mu_{\mathcal{Q}} \in \mathbb{R}^{n} \,|\, \sum^{n}_{i=1} \mu_{i}=1 \}$. Second, by positive homogeneity,
    \begin{align*}
    	g^{*}(\mu_{\mathcal{Q}}) &= \sup\limits_{\Phi_{\calQ, X} \in \mathbb{R}^{n}} \{ \left\langle \mu_{\mathcal{Q}},\Phi_{\calQ, X} \right\rangle - g(\Phi_{\calQ, X})\}
    	=
        \sup\limits_{\Phi_{\calQ, X} \in \mathbb{R}^{n}} a\ \{ \frac{1}{a} \left\langle \mu_{\mathcal{Q}},\Phi_{\calQ, X} \right\rangle - \frac{1}{a} g(\Phi_{\calQ, X})\}\\
        &=
        a \sup\limits_{\Phi_{\calQ, X}/a\, \in \mathbb{R}^{n}} \{  \langle \mu_{\mathcal{Q}},\frac{1}{a}\Phi_{\calQ, X} \rangle -  g(\frac{1}{a}\Phi_{\calQ, X})\}
        =
        a g^{*}(\mu_{\mathcal{Q}}),
    \end{align*} 
    for arbitrary $a \geq 0$. This indicates that if $g^{*}(\mu_{\mathcal{Q}}) < \infty$, it must hold that $g^{*}(\mu_{\mathcal{Q}})=0$. Combining the above two claims, we can conclude that
    \begin{align}
    \label{eq:g*}
    	g^{*}(\mu_{\mathcal{Q}})=\delta(\cdot|\text{dom}\,g^{*})
        \quad \text{and} \quad
        g(\Phi_{\calQ, X})=\sup\limits_{\mu_{\mathcal{Q}} \in \text{dom}\,g^{*}} \left\langle \mu_{\mathcal{Q}},\Phi_{\calQ, X} \right\rangle,
    \end{align}
    where the second equation of (\ref{eq:g*}) yields from the fact that, when $\mu_{\mathcal{Q}} \notin \text{dom}\,g^{*}$, we have $g^{*}(\mu_{\mathcal{Q}}) =\infty$, implying that $g(\Phi_{\calQ, X}) = -\infty$.

    Define the set of maximizers at $\Phi_{\calQ, X}$ of Eq.(\ref{eq:dual function}) as $\partial g(\Phi_{\calQ, X})$, i.e., $\partial g(\Phi_{\calQ, X}):=\{ \mu_{\mathcal{Q}} \in \text{dom}\,g^{*} \mid g(\Phi_{\calQ, X}) + g^{*}(\mu_{\mathcal{Q}})= \left\langle \mu_{\mathcal{Q}},\Phi_{\calQ, X} \right\rangle \}$. 
    It can be shown that $\forall\, \Phi_{\calQ, X} \in \text{int}\mathcal{C}$, $\partial g(\Phi_{\calQ, X}) \neq \emptyset$. 
    Now, fix any $\Phi_{\calQ, X} \in \text{int}\mathcal{C}$ and let $\mu_{\mathcal{Q}} \in \partial g(\Phi_{\calQ, X})$. 
    From Eq.(\ref{eq:g*}), we know that $g(\Phi_{\calQ, X})=\left\langle \mu_{\mathcal{Q}},\Phi_{\calQ, X} \right\rangle$ holds. Let $\mathbf{e}_{1}, \mathbf{e}_{2},\dots,\mathbf{e}_{n}$ be the canonical basis of $\mathbb{R}^{n}$. 
    Since $\Phi_{\calQ, X} \in \text{int}\mathcal{C}$, for each $i$, there exists an $\epsilon >0$ small enough such that $\Phi_{\calQ, X}-\epsilon \mathbf{e}_{i} \in \mathcal{C}$. Note that $\mu_{\calQ}$ is an element of $\partial g(\Phi_{\calQ, X})$, but it does not necessarily indicate $\mu_{\mathcal{Q}} \in \partial g(\Phi_{\calQ, X}-\epsilon \mathbf{e}_{i})$ as well. Therefore, we have
    \begin{align}
    \label{proof:inequility}
        \left\langle \mu_{\mathcal{Q}},\Phi_{\calQ, X}-\epsilon \mathbf{e}_{i} \right\rangle
        \leq
        g(\Phi_{\calQ, X}-\epsilon \mathbf{e}_{i})
        \leq
        g(\Phi_{\calQ, X})
        =
        \left\langle \mu_{\mathcal{Q}},\Phi_{\calQ, X} \right\rangle,
    \end{align}
    where the second inequality of (\ref{proof:inequility}) results from monotonicity of $g$. Equivalently, $\left\langle \mu_{\mathcal{Q}},-\epsilon \mathbf{e}_{i} \right\rangle \leq 0$, i.e., $-\mathbf{e}_{i} \mu_{i} \leq 0$, $\mu_{i} \geq 0$, which indicates that all coordinates of $\mu_{\calQ}$ are non-negative. Since $g^{*}$ is l.s.c. (as $g$ is l.s.c.), the set $\mathcal{W}_{1}:= \text{dom\,} g^{*} \subseteq \mathcal{D}$ is closed and convex. To sum up, so far we have obtained that 
    \begin{align*}
        g(\Phi_{\calQ, X})=\sup\limits_{\mu_{\mathcal{Q}} \in \mathcal{W}_{1}}\left\langle \mu_{\mathcal{Q}},\Phi_{\calQ, X} \right\rangle,
        \quad
        \forall\, \Phi_{\calQ, X} \in \text{int}\mathcal{C}.
    \end{align*}

    \textbf{Step 3: Extension to the boundary and final representation}.
    For any boundary point $\Phi_{\calQ, X}$ of $\mathcal{C}$, we could choose a sequence $\{ \Phi_{\calQ, X}^{k}\} \subset \text{int}\mathcal{C}$ converging to $\Phi_{\calQ, X}$. Then by the Lipschitz continuity of $f$ and the fact that $g(\Phi_{\calQ, X})=f(\Phi_{\calQ, X})$ on $\mathcal{C}$, we have $f(\Phi_{\calQ, X})=\sup\limits_{\mu_{\mathcal{Q}} \in \mathcal{W}_{1}}\left\langle \mu_{\mathcal{Q}},\Phi_{\calQ, X} \right\rangle$.
    Finally, since $f(\Phi_{\calQ, X})$ is permutation invariant and $\Phi_{\calQ, X}^{q} \in \mathcal{C}$ for every $\Phi_{\calQ, X} \in \mathbb{R}^{n}$, we have
    \begin{align*}
        f(\Phi_{\calQ, X})
        =
        f(\Phi_{\calQ, X}^{q})
        =
        g(\Phi_{\calQ, X}^{q})
        =
        \sup\limits_{\mu_{\mathcal{Q}} \in \mathcal{W}_{1}}\left\langle \mu_{\mathcal{Q}},\Phi_{\calQ, X}^{q} \right\rangle,
        \quad
        \forall\, \Phi_{\calQ, X} \in \mathbb{R}^{n}.
    \end{align*}

    (2) Since now $f(\Phi_{\calQ, X})$ is assumed to be sub-additive, it is directly convex along with properties of positive homogeneity, translation invariance, and monotonicity. From Part (1), we immediately have
    \begin{align*}
        f(\Phi_{\calQ, X}) = \sup\limits_{\mu_{\mathcal{Q}} \in \text{dom\,}f^{*}} \left\{\left\langle \mu_{\mathcal{Q}},\Phi_{\calQ, X} \right\rangle - f^{*}(\mu_{\mathcal{Q}})  \right\}, \quad \Phi_{\calQ, X} \in \mathbb{R}^{n},
        \end{align*}
    where 
    \begin{align*}
        f^{*}(\mu_{\mathcal{Q}})=\sup\limits_{\Phi_{\calQ, X} \in \mathbb{R}^{n}} \left\{\left\langle \mu_{\mathcal{Q}},\Phi_{\calQ, X} \right\rangle - f(\Phi_{\calQ, X})  \right\},
    \end{align*}
    and $\text{dom\,}f^{*} \subseteq \mathcal{D}$. We first note that for any $\pi\in S_n$, $\mu_{\mathcal{Q}}^{\pi} \in \text{dom}\,g^{*}$ and $\Phi_{\calQ, X} \in \mathbb{R}^{n}$, there exists a unique inverse permutation $\pi^{-1}$ (since permutations are bijections) for permuted $\mu_{\mathcal{Q}}^{\pi}$ such that $\left\langle \mu_{\mathcal{Q}}^{\pi},\Phi_{\calQ, X} \right\rangle = \left\langle \mu_{\mathcal{Q}},\Phi_{\calQ, X}^{\pi^{-1}} \right\rangle$. Therefore, it is evident that
    \begin{align}
    \label{proof:f*}
        f^{*}(\mu_{\mathcal{Q}}^{\pi}) 
        =
        \sup\limits_{\Phi_{\calQ, X} \in \mathbb{R}^{n}} \left\{\left\langle \mu_{\mathcal{Q}}^{\pi},\Phi_{\calQ, X} \right\rangle - f(\Phi_{\calQ, X})  \right\}
        =
        \sup\limits_{\Phi_{\calQ, X}^{\pi^{-1}} \in \mathbb{R}^{n}} \left\{\left\langle \mu_{\mathcal{Q}},\Phi_{\calQ, X}^{\pi^{-1}}\right\rangle - f(\Phi_{\calQ, X}^{\pi^{-1}})  \right\}
        =
        f^{*}(\mu_{\mathcal{Q}}).
    \end{align}
    The second equality of (\ref{proof:f*}) is valid due to the permutation invariance of $f(\Phi_{\calQ, X})$, i.e., $f(\Phi_{\calQ, X}) = f(\Phi_{\calQ, X}^{\pi^{-1}})$. The above relationship implies that $\text{dom}\,g^{*}$ is also permutation invariant. Moreover, since $f(\Phi_{\calQ, X}) = \sup\limits_{\mu_{\mathcal{Q}} \in \mathcal{W}_{1}}\left\langle \mu_{\mathcal{Q}},\Phi_{\calQ, X}^{q} \right\rangle$, where $\Phi_{\calQ, X}^{q}$ is non-decreasingly ordered, the Rearrangement Inequality implies that the supremum would be attained when $\mu_{\mathcal{Q}}$ is also non-decreasingly ordered. Therefore, $\mathcal{W}_{2} \subseteq \mathcal{D} \,\cap\, \mathcal{C}$.
\end{proof}
\begin{proof}[\textbf{Proof of Proposition \ref{prop:WeightedGRM_condition_unique}}]
\label{proof:WeightedGRM_condition_unique}
    The ``if" statement is direct. We only focus on the ``only if" statement.
    
    (1) Proof for $f: \mathcal{C} \to \mathbb{R}$ satisfying (B1), (B2), (B6).
    Since $f$ satisfies (B1) positive homogeneity and translation invariance, setting $a=0$ and $b=0$ yields $f(\mathbf{0}) = 0$. Note that the intersection of $\mathcal{C}$ and its negative cone satisfies $\mathcal{C} \cap -\mathcal{C} = \{c \mathbf{1} \mid c \in \mathbb{R}, \mathbf{1} = (1,1,\dots,1)^T \in \mathbb{R}^n\}$. For any $c \in \mathbb{R}$, the constant vectors $c \mathbf{1}$ and $-c \mathbf{1}$ are comonotonic. Then by (B6) comonotonic additivity, it holds that $0 = f(\mathbf{0}) = f(c \mathbf{1} - c \mathbf{1}) = f(c \mathbf{1}) + f(-c \mathbf{1})$. Combined with (B1) positive homogeneity, $f$ is fully homogeneous over $\mathbb{R}$ on $\mathcal{C}$. 
    Since $\mathcal{C}$ is a convex cone and $f$ satisfies comonotonic additivity and full homogeneity, $f$ is linear on $\mathcal{C}$. By the Hahn-Banach Extension Theorem, there exists a unique linear extension $\hat{f}: \mathbb{R}^n \to \mathbb{R}$ such that $\hat{f}(\Phi_{\calQ, X}) = f(\Phi_{\calQ, X})$ for all $\Phi_{\calQ, X} \in \mathcal{C}$.
    By the Riesz Representation Theorem, there exists a unique vector $\mu_{\mathcal{Q}}^{*} \in \mathbb{R}^n$ such that $\hat{f}(\Phi_{\calQ, X}) = \left\langle \mu_{\mathcal{Q}}^{*}, \Phi_{\calQ, X} \right\rangle$ for all $\Phi_{\calQ, X} \in \mathbb{R}^n$; hence $f(\Phi_{\calQ, X}) = \left\langle \mu_{\mathcal{Q}}^{*}, \Phi_{\calQ, X} \right\rangle$ for all $\Phi_{\calQ, X} \in \mathcal{C}$.
    Finally, (B1) translation invariance enforces $\left\langle \mu_{\mathcal{Q}}^{*}, \mathbf{1} \right\rangle = 1$, and (B2) monotonicity requires $\mu_{i}^{*} \geq 0$ for all $i$. Thus, we have $\mu_{\mathcal{Q}}^{*} \in \mathcal{D}$.

    (2) Proof for $f: \mathbb{R}^n \to \mathbb{R}$ satisfying (B1), (B2), (B4), (B6).
    From Part (1), for any ordered risk vector $\Phi_{\calQ, X}^{q} \in \mathcal{C}$, there exists a unique $\mu_{\mathcal{Q}}^{*} \in \mathcal{D}$ such that $f(\Phi_{\calQ, X}^{q}) = \left\langle \mu_{\mathcal{Q}}^{*}, \Phi_{\calQ, X}^{q} \right\rangle$.
    For any unordered risk vector $\Phi_{\calQ, X} \notin \mathcal{C}$, (B4) permutation invariance guarantees that $f(\Phi_{\calQ, X}) = f(\Phi_{\calQ, X}^{\pi})$ for any permutation $\pi \in S_n$. By definition, there exists a permutation $\pi^*$ such that $\Phi_{\calQ, X}^{\pi^{*}} = \Phi_{\calQ, X}^{q} \in \mathcal{C}$, where $\Phi_{\calQ, X}^{q}$ is the non-decreasingly sorted version of $\Phi_{\calQ, X}$. Thus, it holds that $f(\Phi_{\calQ, X}) = f(\Phi_{\calQ, X}^{q}) = \left\langle \mu_{\mathcal{Q}}^{*}, \Phi_{\calQ, X}^{q} \right\rangle.$ The uniqueness of $\mu_{\mathcal{Q}}^{*}$ follows directly from Part (1).

    (3) Proof for $f: \mathbb{R}^n \to \mathbb{R}$ satisfying (B1), (B2), (B6').
    (B6') additivity and (B1) positive homogeneity imply that $f$ is a linear functional on $\mathbb{R}^n$. By the Riesz Representation Theorem, there exists a unique $\mu_{\mathcal{Q}}^{*} \in \mathbb{R}^n$ such that $f(\Phi_{\calQ, X}) = \left\langle \mu_{\mathcal{Q}}^{*}, \Phi_{\calQ, X} \right\rangle$ for all $\Phi_{\calQ, X} \in \mathbb{R}^n$.
    As in Part (1), translation invariance and monotonicity enforce $\mu_{\mathcal{Q}}^{*} \in \mathcal{D}$.
    
    (4) Proof for $f: \mathbb{R}^n \to \mathbb{R}$ satisfying (B1), (B2), (B4), (B6').
    From Part (3), we have $f(\Phi_{\calQ, X}) = \left\langle \mu_{\mathcal{Q}}^{*}, \Phi_{\calQ, X} \right\rangle$ for all $\Phi_{\calQ, X} \in \mathbb{R}^n$, with $\mu_{\mathcal{Q}}^{*} \in \mathcal{D}$.
    (B4) permutation invariance requires that for any permutation $\pi \in S_n$ and any $\Phi_{\calQ, X} \in \mathbb{R}^n$, it holds that $\left\langle \mu_{\mathcal{Q}}^{*}, \Phi_{\calQ, X} \right\rangle = \left\langle \mu_{\mathcal{Q}}^{*}, \Phi_{\calQ, X}^{\pi} \right\rangle.$ Rewriting the right-hand side using permutation properties yields $\left\langle \mu_{\mathcal{Q}}^{*}, \Phi_{\calQ, X}^{\pi} \right\rangle = \left\langle \mu_{\mathcal{Q}}^{* \pi^{-1}}, \Phi_{\calQ, X} \right\rangle$, where $\mu_{\mathcal{Q}}^{* \pi^{-1}}$ denotes the weight vector obtained by permuting $\mu_{\mathcal{Q}}^{*}$ with the inverse permutation $\pi^{-1}$.
    For this equality to hold for all $\Phi_{\calQ, X} \in \mathbb{R}^n$, it is necessary that $\mu_{\mathcal{Q}}^{*} = \mu_{\mathcal{Q}}^{* \pi^{-1}}$ for all $\pi \in S_n$. The only permutation-invariant vector in $\mathcal{D}$ is the uniform weight vector, i.e., $\mu_{\mathcal{Q}}^{*} = \left( \frac{1}{n}, \frac{1}{n}, \dots, \frac{1}{n} \right)^T.$

\end{proof}

\begin{proof}[\textbf{Proof of Theorem \ref{theo:WeightedGRM_condition_sup_continuous}}]
\label{proof:WeightedGRM_condition_sup_continuous}
    The ``if" statement can be checked directly. We only focus on the ``only if" statement.
    
    (1) \textbf{Step 1: Dual representation via an auxiliary convex function}. Let $\tilde{\mathcal{C}}=\{\varphi_{X} \in L^{\infty}([0,1]) \mid \varphi_{X}\text{ is non-decreasing and left-continuous} \}$ be a cone of monotone functions. For any $\varphi_{X} \in L^{\infty}([0,1])$, its quantile function $\varphi_X^q$ clearly belongs to $\tilde{\mathcal{C}}$. We first define an auxiliary function $g:L^{\infty}([0,1]) \to (-\infty,\infty]$ via:
    \begin{align*}
    	g(\varphi_{X}) 
    	:&= 
    	f(\varphi_{X})+\delta(\varphi_{X}|\tilde{\mathcal{C}}),
    	\\
    	\delta(\varphi_{X}|\tilde{\mathcal{C}})
    	&=
    	\begin{cases}
            0, & \text{if}\,\, \varphi_{X} \in \tilde{\mathcal{C}}, \\
            \infty, & \text{if}\,\, \varphi_{X} \notin \tilde{\mathcal{C}}.
        \end{cases}
    \end{align*}
    
    The properness and convexity of $g(\varphi_{X})$ can be established similarly to the discrete case. To establish the $\sigma(L^\infty, L^1)$-lower semi-continuity of $g$, which is equivalent to the Fatou property, we consider a uniformly bounded sequence $\{\varphi_{X_n}\} \subseteq L^{\infty}([0,1])$ such that $\varphi_{X_n} \xrightarrow{a.e.} \varphi_{X}$. We aim to show $g(\varphi_{X}) \leq \liminf\limits_{n\to \infty} g(\varphi_{X_n})$. If $\liminf\limits_{n\to \infty} g(\varphi_{X_n}) = \infty$, the inequality holds trivially. Otherwise, if $\liminf\limits_{n\to \infty} g(\varphi_{X_n}) < \infty$, then there exists a subsequence $\{\varphi_{X_{n_k}}\} \subseteq \tilde{\mathcal{C}}$ such that $\lim\limits_{k\to\infty} g(\varphi_{X_{n_k}}) = \liminf\limits_{n\to \infty} g(\varphi_{X_n})$. Since $\tilde{\mathcal{C}}$ is closed under almost everywhere convergence (taking the left-continuous modification), the limit $\varphi_{X}$ also belongs to $\tilde{\mathcal{C}}$.
    Thus, it suffices to show that $f(\varphi_{X}) \leq \liminf\limits_{k\to \infty} f(\varphi_{X_{n_k}})$, which is ensured by (C5).

    \begin{remark}
    \label{remark:convergence_choice}
        We emphasize the condition $\varphi_{X_n} \xrightarrow{a.e.} \varphi_{X}$ rather than convergence in the $L^{\infty}$-norm to ensure the functional is compatible with the $\sigma(L^{\infty}, L^{1})$ topology. This continuity requirement forces the dual representation to lie strictly within $L^{1}([0,1])$, thereby eliminating the singular components (purely finitely additive measures) that are otherwise inherent in the general dual space $(L^{\infty})^{*}$.
    \end{remark}

    Since $g$ is proper, convex and l.s.c., by the Fenchel-Moreau Biconjugation Theorem (Chapter 12, \cite{Rockafellar1970}), it holds that
    \begin{align*}
        g(\varphi_{X})=\sup\limits_{\nu\in L^{1}([0,1])}\left\{\int^{1}_{0}\varphi_{X}(t)\nu(t)\mathrm{d}t-g^{*}(\nu)\right\},
    \end{align*}
    where
    \begin{align*}
        g^{*}(\nu)=\sup\limits_{\varphi_{X} \in L^{\infty}([0,1])} 
        \left\{\int^{1}_{0}\varphi_{X}(t)\nu(t)\mathrm{d}t-g(\varphi_{X}) \right\}.
    \end{align*}

    \textbf{Step 2: Structure of the conjugate and characterization of the weight set}.
    Similar to the discrete case, we investigate the characteristic of $\text{dom }g^{*}$. First, for any $k \in \mathbb{R}$, the constant function $\varphi_{X}(t) \equiv k\in L^{\infty}([0,1])$. By (C1) affine invariance, we have
    \begin{align*}
    	g^{*}(\nu)
    	&=
    	\sup\limits_{\varphi_{X} \in L^{\infty}([0,1])} \left\{\int^{1}_{0}\varphi_{X}(t)\nu(t)\mathrm{d}t-g(\varphi_{X})\right\}\\
    	&\geq
    	\sup\limits_{k\in \mathbb{R}} \left\{\int^{1}_{0}k\mathbf{1}_{[0,1]}\,\nu(t)\mathrm{d}t-g(k\mathbf{1}_{[0,1]})\right\}
    	=
    	\left\{\int^{1}_{0}\nu(t)\mathrm{d}t-1 \right\} \cdot \sup\limits_{k\in \mathbb{R}} k.
    \end{align*}
    If $\int^{1}_{0}\nu(t)\mathrm{d}t=1$, then we have $g^{*}(\nu)\geq0$. Otherwise, if $\int^{1}_{0}\nu(t)\mathrm{d}t\neq 1$,  then $g^{*}(\nu) = \infty$. This indicates $\text{dom}\,g^{*} \subseteq \{\nu\in L^{1}([0,1])|\int^{1}_{0}\nu(t)\mathrm{d}t=1 \}$. Moreover, (C1) affine invariance also indicates 
    \begin{align*}
    	g^{*}(\nu)
    	&=
    	\sup\limits_{\varphi_{X} \in L^{\infty}([0,1])} \left\{\int^{1}_{0}\varphi_{X}(t)\nu(t)\mathrm{d}t-g(\varphi_{X})\right\}\\
    	&=
    	k\sup\limits_{\varphi_{X} /k\in L^{\infty}([0,1])} \left\{\int^{1}_{0}\frac{\varphi_{X}(t)}{k}\nu(t)\mathrm{d}t-g\left(\frac{\varphi_{X}}{k}\right)\right\}
    	=kg^{*}(\nu),
    	\quad \forall\, k\in \mathbb{R_{+}},
    \end{align*}
    which indicates that if $g^{*}(\nu) < \infty$, it must hold that $g^{*}(\nu)=0$. That is, if $\int^{1}_{0}\nu(t)\mathrm{d}t=1$, we have $g^{*}(\nu)= 0$. By the above claims, it is evident that 
    \begin{align*}
    	g^{*}(\nu)=\delta(\cdot|\text{dom}\,g^{*}),
    	\quad
    	g(\varphi_{X})=\sup\limits_{\nu\in \text{dom}\,g^{*}}\int^{1}_{0}\varphi_{X}(t)\nu(t)\mathrm{d}t.
    \end{align*}

    We next go on to prove that for each valid $\nu(t)$, it holds that $\nu(t) \geq 0$ almost everywhere. Since $g(\varphi_{X})$ is proper, convex and l.s.c., for any $\varphi_{X}$, it is easy to verify the non-emptiness of its subdifferential, i.e., $\partial g(\varphi_{X}) \neq \emptyset$. This indicates that for any $\varphi_{X} \in \tilde{\mathcal{C}}$, there exists $\nu \in \text{dom}\,g^{*}$ such that
    \begin{align}
    \label{eq:subgradient}
        \forall\,\varphi \in L^{\infty}([0,1]),
        \quad
        g(\varphi) \geq g(\varphi_{X})+\int^{1}_{0}(\varphi(t)-\varphi_{X}(t))\nu(t)\mathrm{d}t.
    \end{align}

    We first define another auxiliary set
    \begin{align*}
        \tilde{\mathcal{C}}_{0}= \left\{\varphi_{X} \in \tilde{\mathcal{C}}\mid \exists\, \delta >0 \text{\,such that\,} \varphi_{X}(t)-\varphi_{X}(s) \geq \delta(t-s),\forall\, 0\leq s<t\leq 1 \right\},
    \end{align*}
    which consists of strictly increasing functions with a uniform positive lower bound on the slope. We proceed by arising two lemmas.
    \begin{lemma}
    \label{lemma:smooth}
        Let $\varphi_{X} \in \tilde{\mathcal{C}}_{0}$ with $\varphi_{X}(t)-\varphi_{X}(s) \geq \delta(t-s),\forall\, 0\leq s<t\leq 1$. Let $\phi:[0,1] \to \mathbb{R}$ be a smooth function satisfying 
        
        (1) $\phi \geq 0$ on $[0,1]$,
        
        (2) $\phi$ is supported on the interval $[c,d] \subseteq (0,1)$,
        
        (3) $\|\phi \prime \|_{\infty} \leq K_{\phi}$ for some constant $K_{\phi}>0$.
        
        \noindent Then for all $0<\varepsilon < \delta / K_{\phi}$, the function $\varphi^{\varepsilon}(t)=\varphi_{X}(t)-\varepsilon \phi(t)$ belongs to $\tilde{\mathcal{C}}$.
    \end{lemma}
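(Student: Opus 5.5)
We need to show $\varphi^{\varepsilon}=\varphi_X-\varepsilon\phi$ is non-decreasing, left-continuous and essentially bounded, i.e.\ that it lies in $\tilde{\mathcal{C}}$. The plan is to verify monotonicity by a direct increment estimate, using the mean value theorem on $\phi$ to control how much the perturbation can decrease the function. We then check the remaining regularity requirements, which are inherited from $\varphi_X$ and the smoothness of $\phi$.

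\textbf{Monotonicity.} Fix $0\leq s<t\leq 1$ and write
\begin{align*}
\varphi^{\varepsilon}(t)-\varphi^{\varepsilon}(s)
=\bigl(\varphi_X(t)-\varphi_X(s)\bigr)-\varepsilon\bigl(\phi(t)-\phi(s)\bigr).
\end{align*}
The first bracket is at least $\delta(t-s)$ because $\varphi_X\in\tilde{\mathcal{C}}_0$. For the second bracket, $\phi$ is smooth on $[0,1]$, so the mean value theorem gives
\begin{align*}
|\phi(t)-\phi(s)|\leq \|\phi'\|_\infty (t-s)\leq K_\phi (t-s).
\end{align*}
Combining the two bounds,
\begin{align*}
\varphi^{\varepsilon}(t)-\varphi^{\varepsilon}(s)\geq (\delta-\varepsilon K_\phi)(t-s).
\end{align*}
This is strictly positive whenever $0<\varepsilon<\delta/K_\phi$. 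Hence $\varphi^{\varepsilon}$ is in fact strictly increasing, with the uniform slope bound $\delta-\varepsilon K_\phi$, so it even lies in $\tilde{\mathcal{C}}_0$.

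\textbf{Boundedness and left-continuity.} Since $\phi$ is continuous on the compact interval $[0,1]$, it is bounded. Therefore $\varphi^{\varepsilon}\in L^\infty([0,1])$ because $\varphi_X\in L^\infty([0,1])$. Left-continuity is preserved under subtracting the continuous function $\varepsilon\phi$, since $\varphi_X$ is left-continuous. Conditions (1) and (2) on $\phi$ are not needed for membership in $\tilde{\mathcal{C}}$. The support condition only matters later, when the perturbation is used to test the subgradient inequality (\ref{eq:subgradient}) against localized bumps in order to conclude $\nu\geq 0$ a.e.

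\textbf{Main obstacle.} The only point requiring care is the Lipschitz bound for $\phi$, which is routine. The argument is otherwise a direct increment estimate.
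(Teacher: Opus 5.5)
Your proof is correct and follows the paper's argument exactly: you decompose the increment and combine the Lipschitz bound $|\phi(t)-\phi(s)|\leq K_\phi(t-s)$ with the slope condition to get $\varphi^{\varepsilon}(t)-\varphi^{\varepsilon}(s)\geq(\delta-\varepsilon K_\phi)(t-s)>0$. Your extra checks of left-continuity and essential boundedness, which the paper leaves implicit, usefully complete the claim that $\varphi^{\varepsilon}\in\tilde{\mathcal{C}}$.
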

    \begin{proof}
        For any $s<t$, $\varphi^{\varepsilon}(t)-\varphi^{\varepsilon}(s)=[\varphi_{X}(t)-\varphi_{X}(s)] - \varepsilon [\phi(t)-\phi(s)]$. Since $|\phi(t)-\phi(s)| \leq K_{\phi}(t-s)$ by property (3) and $\varphi_{X}(t)-\varphi_{X}(s) \geq \delta(t-s)$ by definition, it holds that
        \begin{align*}
            \varphi^{\varepsilon}(t)-\varphi^{\varepsilon}(s) \geq \delta(t-s) - \varepsilon K_{\phi}(t-s) = (\delta - \varepsilon K_{\phi})(t-s)>0
        \end{align*}
        when $\varepsilon < \delta / K_{\phi}$. Hence $\varphi^{\varepsilon}$ is strictly increasing, and therefore $\varphi^{\varepsilon} \in \tilde{\mathcal{C}}$.
    \end{proof}
    \begin{lemma}
    \label{lemma:int}
        For any $\varphi_{X} \in \tilde{\mathcal{C}}_{0}$ and any $\nu \in \partial g(\varphi_{X}) \cap \text{dom\,} g^{*}$, we have $\nu(t) \geq 0$ a.e.
    \end{lemma}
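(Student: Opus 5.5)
The plan is to test the subgradient inequality (\ref{eq:subgradient}) against the downward perturbations built in Lemma \ref{lemma:smooth}, mimicking the discrete argument with $\mathbf{e}_i$ in the proof of Theorem \ref{theo:WeightedGRM_condition_sup}.

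Fix $\varphi_X \in \tilde{\mathcal{C}}_0$ with slope constant $\delta$, and fix $\nu \in \partial g(\varphi_X)\cap \text{dom}\,g^*$. Take any smooth $\phi \geq 0$ supported in some $[c,d]\subseteq(0,1)$ with $\|\phi'\|_\infty \leq K_\phi$, and set $\varphi^{\varepsilon}=\varphi_X-\varepsilon\phi$ with $0<\varepsilon<\delta/K_\phi$. By Lemma \ref{lemma:smooth}, $\varphi^\varepsilon\in\tilde{\mathcal{C}}$, so $g(\varphi^\varepsilon)=f(\varphi^\varepsilon)$; also $\varphi^\varepsilon \in L^\infty$ since $\phi$ is bounded. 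Because $\varphi^\varepsilon\leq\varphi_X$ pointwise, (C2) gives $g(\varphi^\varepsilon)=f(\varphi^\varepsilon)\leq f(\varphi_X)=g(\varphi_X)$. Inserting $\varphi=\varphi^\varepsilon$ into (\ref{eq:subgradient}) yields
\begin{align*}
g(\varphi_X)\geq g(\varphi^\varepsilon)\geq g(\varphi_X)-\varepsilon\int_0^1\phi(t)\nu(t)\,\mathrm{d}t ,
\end{align*}
and since $g(\varphi_X)$ is finite, dividing by $\varepsilon>0$ gives $\int_0^1\phi\,\nu\,\mathrm{d}t\geq 0$ for every such test function $\phi$.

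Next I would pass from this inequality to the pointwise statement. The class of nonnegative smooth functions compactly supported in $(0,1)$ is rich enough: for any interval $[a,b]\subset(0,1)$, choose smooth $0\leq\phi_k\leq 1$ with compact support in $(0,1)$ and $\phi_k\to\mathbf{1}_{[a,b]}$ pointwise. Each $\phi_k$ has its own finite Lipschitz bound, which is harmless because $\varepsilon$ is chosen per $\phi_k$. Since $\nu\in L^1$ and $|\phi_k\nu|\leq|\nu|$, dominated convergence gives $\int_a^b\nu\,\mathrm{d}t\geq0$ for all such intervals. Endpoints $0,1$ are null sets, so by monotone-class or Lebesgue differentiation (the Lebesgue points of $\nu$ have $\nu(t)=\lim \frac{1}{2h}\int_{t-h}^{t+h}\nu\geq 0$), we get $\nu\geq0$ a.e.

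The main obstacle is ensuring the perturbations stay inside $\tilde{\mathcal{C}}$, where $g$ is finite. Otherwise $g(\varphi^\varepsilon)=\infty$ and the inequality is vacuous, and this is exactly why the strict-slope class $\tilde{\mathcal{C}}_0$ and Lemma \ref{lemma:smooth} are needed. Left-continuity of $\varphi^\varepsilon$ is preserved because $\phi$ is continuous. A secondary subtlety is that $\phi$ must vanish near the endpoints, which is why the density argument restricts to compactly supported test functions and then uses that $\{0,1\}$ is null.
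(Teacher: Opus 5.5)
Your proof is correct and is essentially the paper's argument. You perturb $\varphi_X$ downward to $\varphi_X-\varepsilon\phi\in\tilde{\mathcal{C}}$ via Lemma~\ref{lemma:smooth}, then combine the subgradient inequality with (C2) to get $\int_0^1\phi\,\nu\,\mathrm{d}t\geq 0$. The one addition is the passage from this to $\nu\geq 0$ a.e., which the paper asserts in a single line; you make it explicit through smooth approximations of interval indicators, dominated convergence, and Lebesgue differentiation.
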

    \begin{proof}
        Let $\varphi_{X} \in \tilde{\mathcal{C}}_{0}$ with minimum slope $\delta >0$. Fix any interval $[c,d] \subseteq (0,1)$, and let $\phi \geq 0$ be a smooth function supported on $[c,d]$ with $\|\phi \prime \|_{\infty} \leq K_{\phi}$. For $\varepsilon < \delta / K_{\phi}$, we define $\varphi^{\varepsilon} = \varphi_{X}- \varepsilon \phi$. By Lemma \ref{lemma:smooth}, we have $\varphi^{\varepsilon} \in \tilde{\mathcal{C}}$. Since $\nu \in \partial g(\varphi_{X})$, the subdifferential inequality in (\ref{eq:subgradient}) gives
        \begin{align*}
            g(\varphi^{\varepsilon}) \geq g(\varphi_{X}) + \int^{1}_{0} (\varphi^{\varepsilon}(t)-\varphi_{X}(t))\nu(t) \mathrm{d}t = g(\varphi_{X}) - \varepsilon \int^{1}_{0}\phi(t)\nu(t)\mathrm{d}t.
        \end{align*}
        Since $\varphi^{\varepsilon}(t)=\varphi_{X}(t)-\varepsilon \phi(t) \leq \varphi_{X}(t)$ and $\varphi^{\varepsilon},\varphi_{X} \in \tilde{\mathcal{C}}$, by (C2) pointwise monotonicity, we have $g(\varphi^{\varepsilon}) \leq g(\varphi_{X})$. Combining the above claims yields
        \begin{align*}
            \infty > g(\varphi_{X}) \geq g(\varphi^{\varepsilon}) \geq g(\varphi_{X}) - \varepsilon \int^{1}_{0} \phi(t) \nu(t) \mathrm{d}t,
        \end{align*}
        which implies $\varepsilon \int^{1}_{0} \phi(t) \nu(t) \mathrm{d}t \geq 0$, i.e., $\int^{1}_{0} \phi(t) \nu(t) \mathrm{d}t \geq 0$. Since this relationship holds true for any smooth function $\phi$ supported on any interval $[c,d] \subseteq (0,1)$, we conclude that $\nu(t) \geq 0$ holds almost everywhere on $(0,1)$, and hence almost everywhere on $[0,1]$.
    \end{proof}

    \textbf{Step 3: Extension to the boundary and final representation}.
    Since $g^{*}$ is l.s.c. (as $g$ is $\sigma(L^\infty, L^1)$-lower semi-continuous), the set $\mathcal{W}_{3}:= \text{dom\,} g^{*} \subseteq \{\nu \in L^{1}([0,1])|\int^{1}_{0}\nu(t)\mathrm{d}t=1, \nu(t)\geq 0\ \text{a.e.}\}$ is closed and convex. By Lemma \ref{lemma:int}, we immediately have
    \begin{align}
    \label{eq:G(g)}
        g(\varphi_{X})=f(\varphi_{X})=\sup\limits_{\nu \in \mathcal{W}_3} \int_{0}^{1} \varphi_{X}(t)\nu(t) \mathrm{d}t,
        \quad
        \forall\, \varphi_{X} \in \tilde{\mathcal{C}}_{0}.
    \end{align}
    For any function outside $\tilde{\mathcal{C}}_{0}$, i.e., $\varphi_{X} \in \tilde{\mathcal{C}} \setminus \tilde{\mathcal{C}}_{0}$, we can always choose a sequence $\{\varphi_{X_k}\} \in \tilde{\mathcal{C}}_{0}$ converging to $\varphi_{X}$. For example, choosing $\varphi_{X_k}(t)=\varphi_{X}(t) + t/k$ for $t\in[0,1]$, where $k\in \mathbb{N}$ is always valid. Note that for identity map $\text{id}(t)=t$, $t \in [0,1]$, it holds that $f(\text{id}) \leq f(\mathbf{1}_{[0,1]}) < \infty$. Then by the fact that
    \begin{align*}
        f(\varphi_{X}) \leq f(\varphi_{X_k}) \leq f(\varphi_{X}) + \frac{f(\text{id})}{k},
    \end{align*}
    we have $f(\varphi_{X_k}) \to f(\varphi_{X})$ when $k \to \infty$.
    Then the relationship Eq.(\ref{eq:G(g)}) is valid in the entire $\tilde{C}$.

    Finally, by the (C4) strong permutation invariance, we can extend the result outside $\tilde{\mathcal{C}}$:
    \begin{align*}
            f(\varphi_{X})=f(\varphi^{q}_{X})=g(\varphi^{q}_{X})=
    \sup\limits_{\nu\in \mathcal{W}_3}\int^{1}_{0}\varphi^{q}_{X}(t)\nu(t)\mathrm{d}t,
    \quad 
    \forall\, \varphi_X \in L^{\infty}([0,1]).
    \end{align*}

    (2) The idea is very similar as in the discrete case. By assumptions, it is straightforward to check that $f$ is proper and convex. The lower semi-continuity of $f$ is a corollary of Theorem 2.2 in \cite{Jouini2006}. Then from the above proof, we have 
    \begin{align*}
        f(\varphi_X) = \sup_{\nu \in \text{dom }f^{*}}
        \left\{\int_{0}^{1} \nu(t)\varphi_X(t)\mathrm{d}t  - f^{*}(\nu)\right\},
    \end{align*}
    where
    \begin{align*}
        f^{*}(\nu) = \sup_{ \varphi_X \in L^{\infty}([0,1])}
        \left\{\int_{0}^{1} \nu(t)\varphi_X(t) \mathrm{d}t - f(\varphi_X)\right\},
    \end{align*}
    and $\text{dom}\,f^{*} \subseteq \{\nu \in L^{1}([0,1])\mid \int^{1}_{0}\nu(t)\mathrm{d}t=1, \nu(t)\geq 0\ \text{a.e.}\}$.
    Note that for any measure-preserving transformations $T$, $\nu \in \text{dom}\, f^{*}$ and $\varphi_X \in L^{\infty}([0,1])$, there must exist an inverse measure-preserving transformations $T^{-1}$ such that $\int_{0}^{1} (\nu\circ T(t))\varphi_X(t) \mathrm{d}t=\int_{0}^{1} \nu(t)(\varphi_X\circ T^{-1}(t)) \mathrm{d}t$, which directly implies that 
    \begin{align*}
        f^*(\nu \circ T)
        &=
        \sup_{ \varphi_X \in L^{\infty}([0,1])}
        \left\{\int_{0}^{1} (\nu\circ T(t))\varphi_X(t) \mathrm{d}t - f(\varphi_X)\right\} \\
        &=
        \sup_{ \varphi_X \circ T^{-1} \in L^{\infty}([0,1])}
        \left\{\int_{0}^{1} \nu(t)(\varphi_X \circ T^{-1}(t)) \mathrm{d}t - f(\varphi\circ T^{-1})\right\}
        =
        f^*(\nu).
    \end{align*}
    The second equality is valid due to the strong permutation invariance of $f$. The above relationship implies that $\text{dom }f^{*}$ also satisfies strong permutation invariance. Thus it is reasonable to restrict $\mathcal{W}_4 := \text{dom }f^{*} \subseteq \{\nu \in L^{\infty}([0,1]) \mid \int^{1}_{0}\nu(t)\mathrm{d}t=1, \nu(t)\geq 0\ \text{a.e.},\ \nu \text{ is non-decreasing}\}$.

\end{proof}
\begin{proof}[\textbf{Proof of Proposition \ref{prop:WeightedGRM_condition_unique_continuous}}]
    The ``if" statement is direct. We only focus on the ``only if" statement.
    
    (1) Proof for $f:\hat{\mathcal{C}}\to \mathbb{R}$ satisfying (D1), (D2), (D3), and (D5).
    By (D3) comonotonic additivity, we have $f(\varphi_{X_1} + \varphi_{X_2}) = f(\varphi_{X_1}) + f(\varphi_{X_2})$ for any $\varphi_{X_1}, \varphi_{X_2} \in \hat{\mathcal{C}}$.

    Define $V = \text{span}(\hat{\mathcal{C}}) \subseteq L^{1}([0,1])$, the linear span of $\hat{\mathcal{C}}$. For any $\varphi = \varphi_{X_1} - \varphi_{X_2}$ with $\varphi_{X_1}, \varphi_{X_2} \in \hat{\mathcal{C}}$, we define $\hat{f}(\varphi) = f(\varphi_{X_1}) - f(\varphi_{X_2})$. To verify that $\hat{f}$ is well-defined, suppose $\varphi_{X_1} - \varphi_{X_2} = \varphi_{X_3} - \varphi_{X_4}$, i.e., $\varphi_{X_1} + \varphi_{X_4} = \varphi_{X_2} + \varphi_{X_3}$. By (D3) comonotonic additivity, we have
    \begin{align*}
        f(\varphi_{X_1}) + f(\varphi_{X_4}) = f(\varphi_{X_1} + \varphi_{X_4}) = f(\varphi_{X_2} + \varphi_{X_3}) = f(\varphi_{X_2}) + f(\varphi_{X_3}),
    \end{align*}
    which implies $f(\varphi_{X_1}) - f(\varphi_{X_2}) = f(\varphi_{X_3}) - f(\varphi_{X_4})$. Thus, $\hat{f}$ is uniquely defined on $V$.
    By definition, $\hat{f}$ is linear on $V$. With (D5) $L^1$-continuity, we have
    \begin{align*}
    |\hat{f}(\varphi)| = |f(\varphi_{X_1}) - f(\varphi_{X_2})| \leq M \|\varphi_{X_1} - \varphi_{X_2}\|_{L^{1}} = M \|\varphi\|_{L^{1}},
    \end{align*}
    confirming that $\hat{f}$ is bounded (hence continuous) on $V$.

    A critical observation is that by the Jordan Decomposition Theorem, any function of bounded variation can be expressed as the difference of two non-decreasing functions. Thus, the set of bounded variation functions is a subset of $V$. Since step functions (dense in $L^1([0,1])$) are of bounded variation, we know $V$ is dense in $L^1([0,1])$, i.e., $\bar{V} = L^{1}([0,1])$.

    As a continuous linear functional on a dense subspace, $\hat{f}$ admits a unique continuous linear extension $\tilde{f}: L^{1}([0,1]) \to \mathbb{R}$. By the Riesz Representation Theorem for $L^1$ Spaces, there exists a unique $\nu^{*} \in L^{\infty}([0,1])$ such that $\tilde{f}(\varphi_{X}) = \int_{0}^{1} \varphi_{X}(t) \nu^{*}(t)\mathrm{d} t, \forall\, \varphi_{X} \in L^{1}([0,1])$.
    Restricting to $\hat{\mathcal{C}}$, we have $f(\varphi_{X}) = \tilde{f}(\varphi_{X}) = \int_{0}^{1} g(t) \nu^{*}(t)\mathrm{d} t$.

    Finally, (D1) affine invariance implies $f(\mathbf{1}_{[0,1]}) = 1$, so $\int_{0}^{1} \nu^{*}(t)\mathrm{d} t = 1$. For any $0 \leq c < d \leq 1$, $\mathbf{1}_{(c,1]} \geq \mathbf{1}_{(d,1]}$ implies $f(\mathbf{1}_{(c,1]}) \geq f(\mathbf{1}_{(d,1]})$ by (D2) Pointwise monotonicity, and hence $\int_{c}^{d} \nu^{*}(t)\mathrm{d} t \geq 0$. The arbitrariness of $c$ and $d$ ensures $\nu^{*}(t) \geq 0$ a.e. on $[0,1]$.

    (2) Proof for $f:L^1([0,1]) \to \mathbb{R}$ satisfying (D1), (D2), (D3), (D4), and (D5).
    For any $\varphi_{X} \in L^{1}([0,1])$, there exists a Lebesgue measure-preserving transformation $T$ such that $\varphi_{X} = \varphi_{X}^{q} \circ T$. By (D4) strong permutation invariance, we have $f(\varphi_{X}) = f(\varphi_{X}^{q} \circ T) = f(\varphi_{X}^{q}).$
    Combining with the result from (1), we obtain $f(\varphi_{X}) = \int_{0}^{1} \varphi_{X}^{q}(t) \nu^{*}(t)\mathrm{d} t,\forall\, \varphi_{X} \in L^{1}([0,1]).$ Uniqueness of $\nu^{*}$ follows from (1).

    (3) Proof for $f:L^1([0,1]) \to \mathbb{R}$ satisfying (D1), (D2), (D3'), and (D5).
    (D1) affine invariance and (D3') additivity directly imply $f$ is a linear functional on $L^{1}([0,1])$. By (D5) $L^1$-continuity, $f$ is bounded (i.e., $\|f\|_{(L^1)^*} \leq M$). The Riesz Representation Theorem guarantees a unique $\nu^{*} \in L^{\infty}([0,1])$ such that $f(\varphi_{X}) = \int_{0}^{1} \varphi_{X}(t) \nu^{*}(t)\mathrm{d} t$, $\forall\, \varphi_{X} \in L^{1}([0,1]).$ Normalization ($\int_{0}^{1} \nu^{*}(t)\mathrm{d} t = 1$) and non-negativeness ($\nu^{*}(t) \geq 0$ a.e.) follow from (D1) and (D2), respectively, as shown in (1).

    (4) Proof for $f:L^1([0,1]) \to \mathbb{R}$ satisfying (D1), (D2), (D3'), (D4), and (D5). By Part (3) we know $f(\varphi_{X}) = \int_{0}^{1} \varphi_{X}(t) \nu^{*}(t)\mathrm{d} t$ holds for any $\varphi_{X} \in L^{1}([0,1])$. For any $\varphi_{X} \in L^{1}([0,1])$ and any measure-preserving transformation $T$, by (D4) strong permutation invariance, we have $\int_{0}^{1} \varphi_{X}(t) \nu^{*}(t)\mathrm{d}t=\int_{0}^{1} (\varphi_{X}\circ T(t)) \nu^{*}(t)\mathrm{d}t$. Note that for the measure-preserving transformation $T$, there must exist an inverse transformation $T^{-1}$ such that $\int_{0}^{1} (\varphi_{X}\circ T(t)) \nu^{*}(t)\mathrm{d}t=\int_{0}^{1} \varphi_{X}(t) (\nu^{*}\circ T^{-1}(t))\mathrm{d}t$, whcih further implies that $\int_{0}^{1} \varphi_{X}(t) \nu^{*}(t)\mathrm{d}t=\int_{0}^{1} \varphi_{X}(t) (\nu^{*}\circ T^{-1}(t))\mathrm{d}t$. Hence, we conclude that $\nu^{*}(t)=\nu^{*}(T^{-1}(t))$ for a.e. $t$ and any measure-preserving $T$, which implies that $\nu^{*}(t)$ is constant almost everywhere. The normalization of $\nu^{*}$ forces the constant to be $1$, i.e., $\nu^{*}(t) \equiv1$ a.e.

\end{proof}
\begin{proof}[\textbf{Proof of Theorem \ref{theo:multi_quadrangle}}]
We first prove that the measures in Eq.(\ref{eq:multi_quadrangle}) satisfy Eqs.(\ref{Erelation1})–(\ref{Erelation4}).
For Eq.(\ref{Erelation1}), we have
    \begin{align*}
    	\mathbb{E}_{\mathcal{Q}} [X] + \mathcal{D}_{\mathcal{Q}} (X) 
    	&=
    	\int_\mathcal{Q} \mathbb{E}_{P}[X] \mathrm{d} \mu_\mathcal{Q}(P) + \int_\mathcal{Q} \mathcal{D}_{P}(X) \mathrm{d} \mu_\mathcal{Q} (P)
    	=
    	\int_\mathcal{Q} (\mathbb{E}_{P}[X] + \mathcal{D}_{P}(X) )\mathrm{d} \mu_\mathcal{Q}(P) \\
    	&=
    	\int_\mathcal{Q} \mathcal{R}_{P}(X) \mathrm{d} \mu_\mathcal{Q}(P)
    	=
    	\mathcal{R}_{\mathcal{Q}}(X),
    \end{align*}
    \begin{align*}
        \mathcal{R}_{\mathcal{Q}}(X) - \mathbb{E}_{\mathcal{Q}}[X] 
    	&=
    	\int_\mathcal{Q} \mathcal{R}_{P}(X) \mathrm{d} \mu_\mathcal{Q} (P) -  \int_\mathcal{Q} \mathbb{E}_{P}[X] \mathrm{d} \mu_\mathcal{Q}(P)
    	=
    	\int_\mathcal{Q} (\mathcal{R}_{P}(X)- \mathbb{E}_{P}[X])  \mathrm{d} \mu_\mathcal{Q} (P) \\
    	&=
    	\int_\mathcal{Q} \mathcal{D}_{P}(X) \mathrm{d} \mu_\mathcal{Q}(P)
    	=
    	\mathcal{D}_{\mathcal{Q}}(X).
\end{align*}

For Eq.(\ref{Erelation2}), we have
    \begin{align*}
    	\mathbb{E}_{\mathcal{Q}} [X] + \mathcal{E}_{\mathcal{Q}} (X) 
    	&=
    	\int_\mathcal{Q} \mathbb{E}_{P}[X] \mathrm{d} \mu_\mathcal{Q}(P) + \min_{b(\mathcal{Q})} \left\{\int_\mathcal{Q} \mathcal{E}_{P}(X-b(P)) \mathrm{d} \mu_\mathcal{Q} (P)\mid \int_\mathcal{Q} b(P) \mathrm{d} \mu_\mathcal{Q}(P) = 0 \right\} \\
    	&=
    	\min_{b(\mathcal{Q})} \left\{\int_\mathcal{Q} (\mathcal{E}_{P}(X-b(P))+\mathbb{E}_{P}[X])  \mathrm{d} \mu_\mathcal{Q} (P)\mid \int_\mathcal{Q} b(P) \mathrm{d} \mu_\mathcal{Q}(P) = 0 \right\} \\
    	&=
    	\min_{b(\mathcal{Q})} \left\{\int_\mathcal{Q} (\mathcal{E}_{P}(X-b(P))+\mathbb{E}_{P}[X-b(P)]+b(P))  \mathrm{d} \mu_\mathcal{Q} (P)\mid \int_\mathcal{Q} b(P) \mathrm{d} \mu_\mathcal{Q}(P) = 0 \right\} \\
    	&=
    	\min_{b(\mathcal{Q})} \left\{\int_\mathcal{Q} (\mathcal{E}_{P}(X-b(P))+\mathbb{E}_{P}[X-b(P)])  \mathrm{d} \mu_\mathcal{Q} (P) +\int_\mathcal{Q} b(P)  \mathrm{d} \mu_\mathcal{Q} (P)\mid \int_\mathcal{Q} b(P) \mathrm{d} \mu_\mathcal{Q}(P) = 0 \right\} \\
    	&= 
    	\min_{b(\mathcal{Q})} \left\{\int_\mathcal{Q} \mathcal{V}_{P}(X-b(P))  \mathrm{d} \mu_\mathcal{Q} (P) \mid \int_\mathcal{Q} b(P) \mathrm{d} \mu_\mathcal{Q}(P) = 0 \right\} \\
    	&=\mathcal{V}_{\mathcal{Q}} (X),
    \end{align*}
    \begin{align*}
    	\mathcal{V}_{\mathcal{Q}} (X) - \mathbb{E}_{\mathcal{Q}} [X]
    	&=
    	\min_{b(\mathcal{Q})} \left\{\int_\mathcal{Q} (\mathcal{V}_{P}(X-b(P)) - \mathbb{E}_{P}[X-b(P)+b(P)] )\mathrm{d} \mu_\mathcal{Q} (P) \mid \int_\mathcal{Q} b(P) \mathrm{d} \mu_\mathcal{Q}(P) = 0 \right\} \\
    	&=
    	\min_{b(\mathcal{Q})} \left\{\int_\mathcal{Q} (\mathcal{V}_{P}(X-b(P)) - \mathbb{E}_{P}[X-b(P)] )\mathrm{d} \mu_\mathcal{Q} (P) - \int_\mathcal{Q} b(P) \mathrm{d} \mu_\mathcal{Q} (P)\mid \int_\mathcal{Q} b(P) \mathrm{d} \mu_\mathcal{Q}(P) = 0 \right\} \\
    	&= 
    	\min_{b(\mathcal{Q})} \left\{\int_\mathcal{Q} \mathcal{E}_{P}(X-b(P))  \mathrm{d} \mu_\mathcal{Q} (P) \mid \int_\mathcal{Q} b(P) \mathrm{d} \mu_\mathcal{Q}(P) = 0 \right\} \\
    	&=\mathcal{E}_{\mathcal{Q}} (X).
    \end{align*}

Then we move on Eqs.(\ref{Erelation3}) and (\ref{Erelation4}). We observe that
\begin{align*}
	c + \mathcal{V}_{\mathcal{Q}} (X-c) 
	&=
	c + \min_{b(\mathcal{Q})} \left\{\int_\mathcal{Q} \mathcal{V}_{P}(X-c-b(P)) \mathrm{d} \mu_\mathcal{Q} (P) \mid \int_\mathcal{Q} b(P) \mathrm{d} \mu_\mathcal{Q}(P) = 0 \right\}\\
	&=
	\min_{b(\mathcal{Q})} \left\{\int_\mathcal{Q} \left(\mathcal{V}_{P}(X-c-b(P)) + c + b(P)\right) \mathrm{d} \mu_\mathcal{Q} (P) \mid \int_\mathcal{Q} b(P) \mathrm{d} \mu_\mathcal{Q}(P) = 0 \right\}.
\end{align*}
For each $P \in \mathcal{Q}$, the term $\mathcal{V}_{P}(X-c-b(P)) + c + b(P)$ attains its minimum if and only if $c + b(P) = \mathcal{S}_{P}(X)$, where $\mathcal{S}_{P}(X)$ denotes the statistic of $X$ under probability measure $P$. By the monotonicity of the integral operation under a fixed $\mu_{\mathcal{Q}}$, we have
\begin{align*}
	c + \mathcal{V}_{\mathcal{Q}} (X-c) 
	& \geq 
	\int_\mathcal{Q} \left(\mathcal{V}_{P}(X-\mathcal{S}_{P}(X)) + \mathcal{S}_{P}(X)\right)\mathrm{d} \mu_{\mathcal{Q}} (P) \\
	& =
	\int_\mathcal{Q} \mathcal{R}_{P}(X) \mathrm{d} \mu_{\mathcal{Q}} (P)
	=
	\mathcal{R}_{\mathcal{Q}}(X),
\end{align*}
with equality if and only if $c + b(P) = \mathcal{S}_{P}(X)$ for all $P \in \mathcal{Q}$. Combining this with the constraint $\int_\mathcal{Q} b(P) \mathrm{d} \mu_\mathcal{Q}(P) = 0$, we immediately obtain:
$$
c = \int_\mathcal{Q} (c+b(P)) \mathrm{d} \mu_\mathcal{Q}(P) = \int_\mathcal{Q} \mathcal{S}_{P}(X) \mathrm{d} \mu_\mathcal{Q}(P),
$$
which further implies that $\arg\min_{c} \left\{ c + \mathcal{V}_{\mathcal{Q}} (X - c) \right\} = \mathcal{S}_{\mathcal{Q}} (X)$.

Similarly, for the $\mathcal{E}_{\mathcal{Q}}(X-c)$ term, we have
\begin{align*}
	\mathcal{E}_{\mathcal{Q}}(X-c)
	=
	\min_{b(\mathcal{Q})} \left\{\int_\mathcal{Q} \mathcal{E}_{P}(X-c-b(P)) \mathrm{d} \mu_\mathcal{Q} (P) \mid \int_\mathcal{Q} b(P) \mathrm{d} \mu_\mathcal{Q}(P) = 0\right\}.
\end{align*}
By the same logic, for each $P \in \mathcal{Q}$, $\mathcal{E}_{P}(X-c-b(P))$ attains its minimum if and only if $c + b(P) = \mathcal{S}_{P}(X)$. From this, we can immediately derive that $\mathcal{E}_{\mathcal{Q}}(X-c) \geq \mathcal{D}_{\mathcal{Q}}(X)$, where equality holds if and only if $c + B(P) = \mathcal{S}_{P}(X)$ for all $P \in \mathcal{Q}$; this also implies $\arg\min_{c} \left\{ \mathcal{E}_{\mathcal{Q}} (X - c) \right\} = \mathcal{S}_{\mathcal{Q}} (X)$.

Next, if each paired $(\mathcal{R}_{P}, \mathcal{D}_{P}, \mathcal{V}_{P}, \mathcal{E}_{P})$ is regular, then it is straightforward to check that the corresponding $(\mathcal{R}_{\mathcal{Q}}, \mathcal{D}_{\mathcal{Q}}, \mathcal{V}_{\mathcal{Q}}, \mathcal{E}_{\mathcal{Q}})$ is also regular.

\end{proof}

\end{document}